\definecolor{Gred}{RGB}{219, 50, 54}
\definecolor{Ggreen}{RGB}{60, 186, 84}
\definecolor{Gblue}{RGB}{72, 133, 237}
\definecolor{Gyellow}{RGB}{247, 178, 16}
\definecolor{ToCgreen}{RGB}{0, 128, 0}
\definecolor{myGold}{RGB}{231,141,20}
\definecolor{myBlue}{rgb}{0.19,0.41,.65}
\definecolor{myPurple}{RGB}{175,0,124}
\definecolor{niceRed}{RGB}{153,0,0}
\definecolor{niceRed}{RGB}{190,38,38}
\definecolor{blueGrotto}{HTML}{059DC0}
\definecolor{royalBlue}{HTML}{057DCD}
\definecolor{navyBlueP}{HTML}{0B579C}
\definecolor{limeGreen}{HTML}{81B622}
\definecolor{nicePink}{RGB}{247,83,148}
\Crefname{equation}{Eq.}{Eqs.}
\newtheorem{theorem}{Theorem} 
\newtheorem{assumption}{Assumption} 
\newtheorem{lemma}[theorem]{Lemma}
\newtheorem{claim}[theorem]{Claim}
\newtheorem{proposition}[theorem]{Proposition}
\newtheorem{fact}{Fact}
\newtheorem{question}{Question}
\newtheorem{inftheorem}{Informal Theorem}
\newtheorem{definition}[theorem]{Definition}
\renewcommand{\Pr}{\mathop{\bf Pr\/}}
\newcommand{\E}{\mathop{\mathbb{E}\/}}  % changed
\newcommand{\Cov}{\mathop{\bf Cov\/}}
\newcommand{\poly}{\textnormal{poly}}
\newcommand{\polylog}{\textnormal{polylog}}
\newcommand{\sgn}{\textnormal{sgn}}
\newcommand{\reals}{\mathbb R}
\newcommand{\eps}{\epsilon}
\newcommand{\calN}{\mathcal{N}}
\newcommand{\calX}{\mathcal{X}}
\newcommand{\calY}{\mathcal{Y}}
\def\<{\langle}
\def\>{\rangle}
\newcommand{\tv}{\mathrm{TV}}
\DeclareMathOperator*{\argmin}{argmin}
\def\wt{\widetilde}
\def\wh{\widehat}
\def\vec{\bm}
\begin{document}
	
	\title{Differentially Private Regression with Unbounded Covariates}
	\author{
	    \textbf{Jason Milionis}\footnote{\url{jm@cs.columbia.edu}} \\
		\small Columbia University \\
		\and
		\textbf{Alkis Kalavasis}\footnote{\url{kalavasisalkis@mail.ntua.gr}} \\
		\small NTUA \\
		\and
		\textbf{Dimitris Fotakis}\footnote{\url{fotakis@cs.ntua.gr}} \\
		\small NTUA \\
		\and
		\textbf{Stratis Ioannidis}
		\footnote{\url{ioannidis@ece.neu.edu}}\\
		\small Northeastern University
	}
	\maketitle
	\thispagestyle{empty}

\begin{abstract}
  We provide computationally efficient, differentially private algorithms for the classical regression settings of Least Squares Fitting, Binary Regression and Linear Regression with unbounded covariates.
  Prior to our work, privacy constraints in such regression settings were studied under strong a priori bounds on covariates. 
  We consider the case of Gaussian marginals and extend recent differentially private techniques on mean and covariance estimation \citep{gautamHighDimensional, vadhanConfidenceIntervals} to the sub-gaussian regime.
  We provide a novel technical analysis yielding differentially private algorithms for the above classical regression settings.
  Through the case of Binary Regression, we capture the fundamental and widely-studied models of logistic regression and linearly-separable SVMs, learning an unbiased estimate of the true regression vector, up to a scaling factor.
\end{abstract}

% 	\hspace{3cm}
% 	\begin{center}
% 		~~\textbf{Keywords:} ...
% 	\end{center}

\section{Introduction}
\label{sec:introduction}

Ever since the introduction of Differential Privacy (DP) by \cite{dwork_dp}, differentially private variants of statistical estimation procedures have been a research topic of intense interest. The work on learning linear models alone is vast (see \cite{thecostofprivacy_GLMs_cai2020, wang} for two recent reviews).  Empirical Risk Minimization is also the impetus for the development of a broad array of new methods for DP-mechanism design, including output perturbation \citep{towards_practical_dp_convopt, ijcai2017-548, pmlr-v32-jain14}, objective perturbation \citep{ERM_chaudhuri11a, kifer12}, and  gradient perturbation \citep{ERM_bassily_private_2014, abadi2016deep}, to name a few.

Nevertheless, despite the intense interest on this topic, \emph{all of the existing work on regression provides differential-privacy guarantees assuming bounded covariates.} Intuitively, this can be explained by inspecting even the simple least squares estimator used in linear regression. It is easy to see that  estimator's \emph{sensitivity}, i.e., its variability under changes on a single sample, is determined by the design matrix (i.e., the matrix of samples). As sensitivity has a direct effect on differential privacy guarantees, bounding the design matrix's eigenvalues is the prevalent approach for bounding the sensitivity. For this reason, assuming bounded covariates is a ubiquitous assumption in DP literature on both linear regression and learning generalized linear models.

This assumption is quite restrictive, and is frequently identified as a deficiency of DP regression algorithms from a practical standpoint \citep{review_of_amin_neurips}. It is also a significant drawback from a theoretical standpoint, as it precludes studying DP-estimators on data sampled from distributions of \emph{unbounded support}. Even the Gaussian distribution, perhaps the most commonly used generative distribution in statistical machine learning literature \citep{double_descent_2, daskalakis_neurips2020_gaussian, double_descent_1, diakonikolas_robust_linear_regression, nakkiran2019data, kreidler_calculating_2018},
cannot be used in conjunction with the existing DP regression algorithms and maintain DP guarantees. %learning simple linear models in a differentially private fashion is due to a mismatch between \emph{sensitivity} of th 

Our work aims to directly address this, by providing DP algorithms for regression assuming (unbounded) Gaussian covariates.  In doing so, we leverage and extend the recent work of \cite{gautamHighDimensional}, who proposed differentially private mechanisms for estimating the mean and the covariance matrix of
high-dimensional Gaussian random vectors.

\subsection{Contributions}
Our first major contribution is to answer the following question in the affirmative:
\begin{question}
\label{main_question}
Is private regression analysis with unbounded covariates possible?
\end{question}

We study this problem in the context of three scenarios (see \Cref{section:problem-formulation}): Least Squares Fitting, Binary Regression, and (standard) Linear Regression. In all three, we assume (unbounded) Gaussian covariates.

In the Least Squares Fitting setting, given a training set $\{(\vec X_i, y_i)\}$, our goal is to efficiently and privately compute an estimate that is close to the Least Squares Estimate (LSE), i.e., the coefficients of the best-fitting linear function. In this problem, we assume that \emph{labels} $y_i$  are bounded, but make no further assumptions on how they relate to the covariates $\vec X_i \in \reals^d$. Our main result is the following:
\begin{inftheorem}
\label{inftheorem:inf-result1}
For accuracy $\alpha > 0$ and privacy guarantees $\eps, \delta > 0$, there exists an efficient $(\eps, \delta)$-DP algorithm 
that, with high probability, approximates
arbitrarily $\alpha$-closely
the Least Squares Estimate using
$
n = \wt{O}\left(
d/ \alpha^2 + d^{3/2}\log(1/\delta)/(\alpha\eps)\right)
$
samples.
\end{inftheorem}

In our second setting, Binary Regression,  we further  assume that labels are binary  (i.e., $y_i=\pm 1$) and that covariates are zero mean. Moreover, labels are generated by a generalized linear model of the form  $\Pr[y_i = +1 | \vec X_i] = f(\vec \beta^T \vec X_i)$, where $f:\reals\to[0, 1]$ is the model function and $\vec\beta\in\reals^d$ is the true regression coefficient. This setting captures some of the most fundamental machine learning tasks, such as logistic regression and learning linearly-separable Support Vector Machines (SVMs). Our second main result is that the same differentially private estimator we used in Least Squares Fitting scenario can be applied to Binary Regression to obtain the following guarantees:
\begin{inftheorem}
\label{inftheorem:inf-result2}
For accuracy $\alpha > 0$ and privacy guarantees $\eps, \delta > 0$, there exists an efficient $(\eps, \delta)$-DP algorithm 
that, with high probability, approximates
arbitrarily $\alpha$-closely
the true Binary Regression coefficient up to a multiplicative factor using
$
n = \wt{O}\left(
d/ \alpha^2 + d^{3/2}\log(1/\delta)/(\alpha\eps)\right)
$
samples.
\end{inftheorem}

Finally, we turn our attention to the (standard) Linear Regression setting. Here, labels are given by $
y_i = \vec\beta^T \vec X_i + \eps_i,$
where $\eps_i$ are i.i.d. zero-mean Gaussian noise variables and $\vec\beta\in\reals^d$ is again the true regression coefficient. Note that, in contrast to the two previous settings, labels $y_i$ here are unbounded. Our result follows:
\begin{inftheorem}
\label{inftheorem:inf-result3}
For accuracy $\alpha > 0$ and privacy guarantees $\eps, \delta > 0$, there exists an efficient $(\eps, \delta)$-DP algorithm 
that, with high probability, approximates
arbitrarily $\alpha$-closely
the true Linear Regression coefficient using
$
n = \wt{O}\left(
d/ \alpha^2 + d^{3/2}\log(1/\delta)/(\alpha\eps)\right)
$
samples.
\end{inftheorem}

To the best of our knowledge, these results constitute the first efficient and private algorithms for regression analysis with unbounded feature vectors. From a technical standpoint, our analysis for \Cref{inftheorem:inf-result1} and \ref{inftheorem:inf-result2} relies on the fact that the LSE requires the calculation of the inverse of a moment matrix, as well as the expectation of a central random quantity $y_i \vec X_i$. This latter random quantity had not appeared before in Gaussian mean and covariance estimation procedures, but is key to regression settings. Our main conceptual contribution is that this quantity has sub-gaussian tails, hence, by extending the work of \cite{gautamHighDimensional} and \cite{vadhanConfidenceIntervals} to sub-gaussian vectors, we manage to estimate it in a private and sample-efficient way. Finally, utilizing the above results, we show that we are also able to resolve the fundamental case of Linear Regression with unbounded features, indicated in \Cref{inftheorem:inf-result3}.

\section{Related Work}
\label{section:related-work}

\noindent\textbf{Differentially Private Regression and GLMs with Bounded Covariates.} Linear regression is of course a true workhorse of statistics, and there has been a significant body of  work on the design of computationally and statistically efficient differentially private regression algorithms (see e.g., the recent surveys of \cite{thecostofprivacy_GLMs_cai2020, wang} and the references therein).
Approaches include objective perturbation \citep{towards_practical_dp_convopt, kifer12, ERM_Zhang, ERM_chaudhuri11a},
output perturbation \citep{duchi_neurips_2020, towards_practical_dp_convopt, ijcai2017-548, pmlr-v32-jain14},
gradient perturbation \citep{abadi2016deep, ERM_bassily_private_2014},
subsample-and-aggregate \citep{barrientos2019differentially, dwork2010differential},
and sufficient statistics perturbation \citep{alabi2020arxiv, wang, mcsherry2009differentially}. Additionally, several works study generalizations of such mechanisms to Generalized Linear Models (GLMs) \citep{bayesian_glms, towards_practical_dp_convopt, pmlr-v32-jain14, kifer12}.
Approaches that are used in typical regression settings also include variants of differentially private Stochastic Gradient Descent (DP-SGD) or other form of stochastic convex optimization \citep{DP_SCO_OptimalRates_LinTime,DP_SCO_Optimal1,DP_ERM_Wang_Faster_More_General,ijcai2017-548,abadi2016deep,ERM_bassily_private_2014}, which commonly require the optimization domain to be of bounded diameter.
All above works, thus, either operate under a random setting with bounded covariates, or use a fixed design matrix $X$ with bounded minimum eigenvalue on $X^T X$. Such strong assumptions on the boundedness of feature vectors are precisely the kind of assumptions that our work aims to mend.\\

\noindent\textbf{Mean and Covariance Estimation.} The study of differentially private mechanisms for mean and covariance estimation under bounded covariates is classic (see, e.g., \cite{dp_cov_estimation, dwork2014analyze, mcsherry2009differentially}). \cite{sheffet_DP_OLS_17} studies covariance estimation under Gaussian samples, also applying it to the Least Squares Fitting problem we study here; nevertheless, their differential privacy guarantee assumes an upper bound on  covariates.
\cite{sheffet_2nd_moment_matrix_19} obtains a collection of DP algorithms that approximate the second moment matrix of the given dataset using existing Linear Regression techniques. We remark that, in each provided algorithm, an upper bound on the $\ell_2$ norm of each row of the data matrix $A = [X|\vec y]$ is required. This upper bound does not hold in our Linear Regression setting since the received data (both $X$ and $\vec y$) could be unbounded.
\cite{vadhanConfidenceIntervals} resolve, for the first time, the problem of differentially private univariate Gaussian mean estimation without strong a priori bounds and with almost optimal dependence on problem parameters. Also in the univariate setting, \cite{bun2015differentially}  learn more general distributions w.r.t.~Kolmogorov distance, which is weaker than the total variation considered by \cite{vadhanConfidenceIntervals}; \cite{diakonikolas2015differentially} extend this work to total variation distance, again for univariate distributions.

\cite{gautamHighDimensional} extend the work of \cite{vadhanConfidenceIntervals} to multivariate mean and covariance estimation for high-dimensional Gaussian random vectors -- see \Cref{sec:paremestimation} for a description of their guarantees. Related to our setting, \cite{thecostofprivacy_GLMs_cai2020} provide lower bounds for the sample complexity of differentially-private learning the mean of Gaussian random vectors, though the estimation algorithms they propose operate over bounded covariates. Recently, \citet{AdenAli2021} and \citet{Lydia2021CovarianceAware} studied privately learning multivariate Gaussians from an informational theoretic standpoint; however, no computational methods presently match these sample complexity bounds. The latter underscores  difficulties arising in the unbounded covariates setting.\\

\noindent\textbf{LSE for GLMs.} The differentially private algorithm we propose applies Least Squares Estimation (LSE) to learn the parameters of a binary Generalized Linear Model (GLM) (see~\Cref{theorem:main-result2}) and, more generally, to perform Least Squares Fitting over bounded labels (c.f.~\Cref{theorem:main-result1}). It is well known that, under Gaussian marginals, LSE is an unbiased estimator of the parameter vector of a GLM, up to a scaling factor \citep{muratErdogdu, sun2014learning, brillinger2012generalized}. This is a consequence of Stein's Lemma \citep{liu1994siegel, stein1981estimation} -- see also \Cref{sec:stein}. In the binary setting,  LSE can also be seen as a special case of the Linear Discriminant Analysis (LDA) classification algorithm \citep{friedman2001elements}. Our \Cref{theorem:main-result2} can thus also be seen as a differentially private version of LDA.\\

\noindent\textbf{Concurrent Work.}
There has been vibrant independent and concurrent work to ours on Differential Privacy with connections to high-dimensional statistics \citep{liu2021_mean_estim,liu2021_robust_highdim,hopkins2021_SOS_meanestim,kothari2021_SOS_robustestim,ashtiani2021_gaussians_unbounded,kamath2021_SCO_heavy_tailed,argyris2021_gaussians_unbounded}.
Recent works study the problem of privately learning arbitrary Gaussians
\citep{argyris2021_gaussians_unbounded,ashtiani2021_gaussians_unbounded,kothari2021_SOS_robustestim};
these papers provide (among other things) mean and covariance estimation for arbitrary Gaussians and their techniques can be potentially adopted to extend our results accordingly.
Moreover, \cite{liu2021_robust_highdim} examine various statistical tasks (including linear regression) and propose a novel (but computationally inefficient) algorithm that achieves optimal sample complexity under minimal assumptions for these problems using robust statistics tools (see also \cite{liu2021_mean_estim} for private mean estimation).
The work of \cite{kamath2021_SCO_heavy_tailed} studies differentially private stochastic convex optimization with heavy-tailed data under classical structural assumptions (e.g., smoothness of the loss function and boundedness of the parameter space); their techniques could be applied to regression problems too.
Finally, \cite{hopkins2021_SOS_meanestim} examine the problem of mean estimation under minimal assumptions and pure DP using the framework of Sum of Squares.

\section{Preliminaries}

\noindent\textbf{Notation.}
We use bold fonts for vectors (e.g., $\vec \beta, \vec y)$ and denote the set $ \{1, \dots, n\} $ as $[n]$. When $\vec X_i \in \reals^d$ for $i\in[n]$ are the (random) feature vectors and $y_i \in \reals$ for $i\in[n]$ are the (random) labels of a regression setting, the matrix $X = [\vec X_1 \ \vec X_2 \ \dots \ \vec X_n]^T \in \reals^{n \times d}$ is called the (random) design matrix and the vector $\vec y = [y_1 \ y_2 \ \dots \ y_n]^T \in \reals^n$ is called the (random) response vector.
An extended techical preliminary, with definitions required for our proofs, is in  \Cref{appendix:add-prelim}.\\

\noindent\textbf{Differential Privacy.} We use  standard   $(\eps, \delta)$-DP:
\begin{definition}
[Differential Privacy \citep{dwork_dp}]
\label{def:dp}
A randomized algorithm $M : \calX^n \rightarrow \calY$
satisfies $(\eps, \delta)$-differential privacy (equivalently, is said to be $(\eps, \delta)$-DP) if for every pair of neighboring datasets $X, X'  \in \calX^n$ that differ on at most one element,
\[
\Pr[M(X) \in Y] \leq \exp(\eps) 
\Pr[M(X') \in Y] + \delta
\,, \forall~
Y \subseteq \calY
\, .
\]
\end{definition}

A crucial tool for differential privacy is the adaptive composition theorem, providing the privacy properties of a sequence of algorithms $M_1(X), \ldots, M_N(X)$, where the $i$-th algorithm may depend on the outcomes of the algorithms $M_{1}(X),\ldots, M_{i-1}(X)$, for $i \in [N]$.
\begin{fact}
[Composition of differentially private mechanisms \citep{dwork_dp,dwork2010boosting}]
\label{def:dp-composition}
If $M$ is an adaptive composition of differentially private algorithms $M_1,\ldots, M_N$, where $M_i$ is
$(\eps, \delta_i)$-DP for any $i \in [N]$,
then it holds that $M$ is $(\eps N, \sum_{i=1}^N \delta_i)$-DP and, for every $\delta > 0$,
$M$ is $(\eps\sqrt{6N\log(1/\delta)}, \delta + \sum_{i = 1}^N \delta_i)$-DP.
\end{fact}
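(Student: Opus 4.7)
The plan is to work through both claims in the standard privacy-loss random variable framework. For neighboring datasets $X, X'$ and an output $y$, define the privacy loss $\calL_i(y) = \log\bigl(\Pr[M_i(X)=y]/\Pr[M_i(X')=y]\bigr)$ for each component mechanism (with the appropriate convention when denominators vanish). The adaptive composition $M = (M_1, \dots, M_N)$ has privacy loss that factors, by the chain rule, as $\calL(y_1, \dots, y_N) = \sum_{i=1}^N \calL_i(y_i \mid y_{<i})$, where each conditional loss is the loss of $M_i$ run on a dataset with the history $y_{<i}$ fixed. This reduces both statements to a tail bound on this sum.

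For the basic composition claim, I would argue that $(\eps,\delta_i)$-DP of $M_i$ implies that, on input $X$, the event $\{\calL_i(y_i \mid y_{<i}) \le \eps\}$ has probability at least $1-\delta_i$; this is a standard equivalent characterization of $(\eps,\delta_i)$-DP. A union bound over $i \in [N]$ then gives $\Pr[\calL \le N\eps] \ge 1 - \sum_i \delta_i$, and combining this with the $(\eps,\delta)$ characterization in the reverse direction yields the $(N\eps, \sum_i \delta_i)$-DP conclusion for $M$.

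For the advanced composition claim, I would first condition on the $(\eps, 0)$-DP ``good event'' of probability $1 - \sum_i \delta_i$ handled above, reducing to the pure-DP case at the cost of an extra $\sum_i \delta_i$ in the failure probability. Now the key quantitative input is: if each $M_i$ is $(\eps, 0)$-DP, then the conditional privacy loss random variable $\calL_i$ is bounded in $[-\eps, \eps]$ and satisfies $\Ex[\calL_i \mid y_{<i}] \le \eps(e^\eps - 1)/(e^\eps + 1) \le \eps^2/2$. Treating $\calL_1, \dots, \calL_N$ as a bounded martingale difference sequence (after recentering), Azuma–Hoeffding gives $\Pr[\calL \ge N\eps^2/2 + t\eps\sqrt{N}] \le e^{-t^2/2}$. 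Choosing $t = \sqrt{2\log(1/\delta)}$ and simplifying the $N\eps^2/2$ term into the deviation term (using $N\eps \le \eps\sqrt{N\log(1/\delta)} \cdot O(1)$ in the regime where this bound improves upon basic composition) produces an effective privacy loss of $\eps\sqrt{6N\log(1/\delta)}$, completing the proof.

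The main obstacle — and the reason this theorem is nontrivial rather than a one-line induction — is obtaining the correct bound $\Ex[\calL_i] \le \eps^2/2$ for a pure-$\eps$-DP mechanism. Naively one only gets $|\calL_i| \le \eps$, which would produce the basic bound $N\eps$ after summing expectations; the tighter mean bound is what converts the $\sqrt{N}$ improvement from a concentration claim into an actual privacy improvement. The second subtlety is adaptivity: the loss at step $i$ depends on the outputs of previous rounds, so one must phrase the bound carefully as a martingale statement, which is precisely what allows Azuma–Hoeffding to apply despite the coupling between mechanisms.
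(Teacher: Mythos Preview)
The paper does not prove this statement; it is stated as a \textbf{Fact} with attribution to Dwork et al.\ and Dwork--Rothblum--Vadhan, so there is no in-paper proof to compare against. Your sketch is essentially the standard privacy-loss / Azuma--Hoeffding argument from those references, and the overall structure is correct.

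Two places deserve tightening if you want a rigorous proof rather than a sketch. First, the step ``condition on the $(\eps,0)$-DP good event'' is loose as written: conditioning on an event in the output space can distort both distributions and does not straightforwardly yield a pure-DP mechanism. The rigorous version (as in Dwork--Rothblum--Vadhan) instead identifies, for each round, a bad output set of mass at most $\delta_i$ on which the privacy loss may exceed $\eps$, and then runs the martingale argument on the complement without conditioning; the bad sets contribute the additive $\sum_i \delta_i$. Second, folding the drift term $N\eps^2/2$ (more precisely $N\eps(e^\eps-1)$) into the deviation term to land on the single constant $\sqrt{6}$ requires a side assumption such as $\eps \le 1$; as stated, the inequality $N\eps \le O(1)\cdot \sqrt{N\log(1/\delta)}$ does not follow merely from ``the regime where advanced composition beats basic composition.'' With these two points made precise, your argument matches the cited proofs.
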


\paragraph{DP Gaussian Parameter Estimation.}
\label{sec:paremestimation}
At a technical level, our work %depends on (and 
extends
the tools developed by
\cite{gautamHighDimensional} to privately estimate the mean $\vec \mu$ and 
covariance $\Sigma$ of a $d$-dimensional Gaussian distribution.
Their algorithm, which we call $\textsc{LearnGaussian-hd}$, has the following guarantee:
\begin{theorem}
[Multivariate Gaussian Estimation \citep{gautamHighDimensional}]
\label{theorem:gautam-gaussian}
There exists a polynomial time $(\eps^2/2 + \eps\sqrt{2\log(1/\delta)}, \delta)$-DP algorithm $\textsc{LearnGaussian-hd}$
that takes at least
\[
n = \wt{O} \left( \frac{d^2}{\alpha^2} + \frac{d^2}{\alpha \eps} + \frac{ d^{3/2}\log^{1/2}(\kappa) + d^{1/2} \log^{1/2}(R) }{\eps} \right)
\]
i.i.d. samples $\vec X_i$, $i\in [n]$, from a $d$-dimensional Gaussian $\calN(\vec \mu, \Sigma)$ with unknown mean $\vec \mu \in \reals^d$ and unknown covariance $\Sigma \in \reals^{d \times d}$ satisfying $\| \vec \mu \|_2 \leq R$ and $\mathbb{I}_d \preceq \Sigma \preceq \kappa \mathbb{I}_d,$ and outputs estimates $\wh{\vec \mu}, \wh{\Sigma}$ such that, with high probability, $\tv(\calN(\vec \mu, \Sigma), \calN(\wh{\vec \mu}, \wh{\Sigma})) \leq \alpha$. 
\end{theorem}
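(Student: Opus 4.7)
The plan is to decompose the problem into privately estimating the covariance $\Sigma$ and the mean $\vec{\mu}$, and then combine the two estimators via adaptive composition. My starting point is the standard fact that for two Gaussians $\calN(\vec{\mu},\Sigma)$ and $\calN(\wh{\vec{\mu}},\wh{\Sigma})$, the total variation distance is bounded, up to universal constants, by $\|\Sigma^{-1/2}(\wh{\vec{\mu}}-\vec{\mu})\|_2 + \|\Sigma^{-1/2}\wh{\Sigma}\Sigma^{-1/2}-\mathbb{I}_d\|_F$. It therefore suffices to produce differentially private estimators that achieve $\alpha$-accuracy in these two norms respectively, and then invoke \Cref{def:dp-composition} to combine the privacy budgets while preserving $(\eps^2/2 + \eps\sqrt{2\log(1/\delta)},\delta)$-DP overall.

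For the covariance, the key obstacle is that the sample covariance has unbounded worst-case sensitivity on Gaussian data, so the vanilla Gaussian mechanism does not suffice. First I would run a private ``location'' subroutine that, exploiting the a priori bound $\mathbb{I}_d \preceq \Sigma \preceq \kappa \mathbb{I}_d$, produces a crude estimate $\Sigma_0$ with $\Sigma_0 \preceq \Sigma \preceq C\Sigma_0$ for some absolute constant $C$. This can be implemented by a private halving/histogram search over the $O(\log \kappa)$ dyadic scales of the top eigenvalue, whose sample cost is controlled via Gaussian concentration and yields the $d^{3/2}\log^{1/2}(\kappa)/\eps$ term. I would then whiten the samples by $\Sigma_0^{-1/2}$ so that their covariance is $\Theta(1)\cdot\mathbb{I}_d$; on a $1-\exp(-\Omega(d))$ fraction of datasets the empirical covariance of the whitened samples has bounded per-entry sensitivity. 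Adding Gaussian noise calibrated to this typical sensitivity, and paying $\delta$ for the rare bad events, then yields the $d^2/\alpha^2 + d^2/(\alpha\eps)$ terms through standard concentration of the sample covariance in Frobenius norm.

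For the mean, given the refined estimate $\wh{\Sigma}$ I would whiten once more so that the samples are approximately $\calN(\Sigma^{-1/2}\vec{\mu},\mathbb{I}_d)$ and apply a private mean estimator for an essentially isotropic sub-Gaussian distribution under the induced bound $\|\Sigma^{-1/2}\vec{\mu}\|_2 \le R$. The same two-stage template applies: a coarse private location step based on a histogram over the dyadic scales of each coordinate contributes the $d^{1/2}\log^{1/2}(R)/\eps$ term, while refinement via the Gaussian mechanism on clipped samples contributes $d/\alpha^2 + d/(\alpha\eps)$, both of which are absorbed into the covariance cost.

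The hard part will be establishing that the empirical covariance and mean have low sensitivity with high probability over Gaussian samples, so that the $(\eps,\delta)$-DP mechanisms can calibrate noise to this typical sensitivity rather than the (infinite) worst case. This requires carefully combining concentration of quadratic and linear forms of Gaussians to bound the sensitivity on ``good'' inputs, a stability argument showing that the algorithm's output distribution is essentially unchanged when pathological inputs are excluded (charged to the $\delta$ slack), and finally an advanced composition argument via \Cref{def:dp-composition} that controls the cumulative privacy loss across the location and refinement subroutines in both the mean and covariance stages.
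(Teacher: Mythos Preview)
This theorem is not proved in the paper: it is quoted verbatim as a result of \cite{gautamHighDimensional}, and the paper only provides, in \Cref{appendix:variants}, a high-level overview of the cited algorithm together with the modifications needed for the sub-Gaussian extension. So there is no ``paper's own proof'' to compare against beyond that overview.

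That said, your outline is broadly faithful to the structure described in \Cref{subsubsection:algo_variant_overview}: reduce the TV bound to Mahalanobis-type guarantees on $\wh{\vec\mu}$ and $\wh{\Sigma}$, precondition the covariance down to condition number $O(1)$ in $O(\log\kappa)$ rounds, and then estimate the mean coordinatewise after whitening. Two points of divergence are worth flagging. First, the preconditioner in \cite{gautamHighDimensional} is not a histogram search over dyadic scales of the top eigenvalue; it is $O(\log\kappa)$ successive calls to the naive noisy-covariance routine \textsc{NaivePCE}, each of which uses an eigendecomposition of the current private estimate to damp the large-variance directions. Histograms enter only in the mean stage, via the stability-based learner of \cite{stability_based_histograms_bun}. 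Second, privacy in \cite{gautamHighDimensional} is obtained not by ``charging bad events to the $\delta$ slack'' but by deterministically truncating samples (using the high-probability norm bound of \Cref{lemma:max_norm_subgaussian}) so that the clipped empirical covariance has bounded sensitivity on \emph{all} inputs; the guarantee is proved as $\tfrac{\eps^2}{2}$-zCDP and only then converted to $(\tfrac{\eps^2}{2}+\eps\sqrt{2\log(1/\delta)},\delta)$-DP via \Cref{lemma:zCDP_equivalence}. Your stability-based sensitivity argument would work too, but it is not what the cited construction does.
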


We remark that this TV distance bound is implied by the parameter estimation of the mean and covariance matrix in Mahalanobis distance. In short, \textsc{LearnGaussian-hd} produces differentially private estimates of the distribution's parameters using only $\wt{O}(d^2)$ samples. It operates under the following boundedness assumptions for the distributional parameters: 
\[\|\vec \mu\|_2 \leq R \quad \text{and} \quad \mathbb{I}_d \preceq \Sigma \preceq \kappa \mathbb{I}_d,\]
even though, crucially, \emph{the samples $\vec X_i$ themselves are unbounded}. Moreover, both upper bounds ($R$ and $\kappa$) are mild and well-motivated: even if \textsc{LearnGaussian-hd} is applied to a sequence of datasets where these grow sub-exponentially, the sample complexity remains polynomial. 
Additionally, notice that $\mathbb{I}_d\preceq\Sigma$ comes w.l.o.g.: as long as the smallest eigenvalue of $\Sigma$ is non-zero, we can rescale the vectors $\vec X_i$ to ensure that this holds. If an eigenvalue of $\Sigma$ is zero, then the distribution is degenerate: we can then apply \textsc{LearnGaussian-hd}  in the subspace spanned by the features (in which $\Sigma$ will have full rank).

\cite{gautamHighDimensional} efficiently learn a symmetric matrix $A$, termed the \emph{preconditioner} of the Gaussian distribution,  that satisfies
$\mathbb{I}_d \preceq A \Sigma A \preceq O(1)\mathbb{I}_d$. Multiplying the input samples with this preconditioner thus makes the Gaussian inputs nearly spherical, which reduces the geometry to the one-dimensional setting, previously studied by \cite{vadhanConfidenceIntervals}.

\section{Problem Formulation}
\label{section:problem-formulation}
In this section, we formally define the regression settings we are interested in, namely, the Least Squares Fitting, the Binary Regression and the (standard) Linear Regression problems, as well as the associated technical assumptions we make.\\

\noindent\textbf{Least Squares Fitting.} In the Least Squares Fitting problem, we observe labeled examples $(\vec X_i, y_i) \in \reals^d \times \reals$, and wish to produce an $(\epsilon,\delta)$-differentially private version of the Least Squares Estimator (LSE):
\begin{align}
\label{eq:lse-estim}
    \vec \beta^\star &= \argmin_{\vec \beta \in \reals^d} \sum_{i=1}^{n} \left( y_i - \vec \beta^T \vec X_i \right)^2 \\  
 \label{eq:lse-estim2}
 &= \left( \frac{1}{n} \sum_{i=1}^n \vec X_i \vec X_i^T \right)^{-1} \left( \frac{1}{n} \sum_{i=1}^n y_i \vec X_i \right) \\
 \label{eq:lse-estim3}
 &= \left( \frac{1}{n} X^T X \right)^{-1} \frac{1}{n} X^T \vec y\, ,
\end{align}
where $X=[\vec{X}_i]_{i=1}^n\in\reals^{n\times d}$ is the matrix with feature vectors as rows and $\vec y =[y_i]_{i=1}^n\in\reals^d$ is the vector of labels, respectively.
In contrast to the Binary and Linear Regression problems below, we make no prior assumption on how labels $y_i$ are linked to features $\vec {X}_i$; crucially, our differentially private algorithm \emph{must not rely} on any presumed boundedness of features $\vec X_i$.
We make the following technical assumption:

\begin{assumption}
\label{asm:gauss1}
Labeled examples $(\vec X_i, y_i)$, $i=1,\ldots,n$, are i.i.d. Moreover, $\vec X_i \in \reals^d$ are sampled from a Gaussian distribution $\mathcal{N}(\vec \mu, \Sigma)$ 
satisfying the following conditions:
\begin{align}\label{eq:parambounds}
\left\| \vec{\mu} \right\|_2 \leq R 
~~~\text{and}~~~ \mathbb{I}_d \preceq \Sigma \preceq \kappa \mathbb{I}_d
\,,
\end{align}
while the labels satisfy
%\begin{align}
$\frac{1}{\rho} \leq |y_i| \leq c$
%\end{align}
for some universal parameters $\rho, c, \kappa, R > 0$.
\end{assumption}
The  assumptions in \Cref{eq:parambounds} are also made by \cite{gautamHighDimensional} in the context of Gaussian estimation. As discussed in \Cref{sec:paremestimation}, both upper bounds are natural, while the lower bound on the covariance comes without any loss of generality. Crucially, in contrast to the majority of prior works on regression, samples $\vec{X}_i$ are indeed unbounded, as they are sampled from $\mathcal{N}(\vec \mu, \Sigma)$. Finally, the boundedness of the outputs $y_i$, $i\in [n]$, is a requirement we share with other works (e.g., \cite{alabi2020arxiv, wang, kifer12, ERM_Zhang}), and clearly applies to, e.g., binary  classification; we also study unbounded labels in the Linear Regression setting.\\

\noindent\textbf{Binary Regression.} In the Binary Regression setting, we additionally assume that the labels $y_i$ are binary (i.e., $y_i\in \{-1,+1\}$), and are produced by a Generalized Linear Model (GLM) linking these binary labels to features. In contrast to the previous setting, this GLM is parameterized by a ``true'' $\vec{\beta}\in\reals^d$ (see \Cref{asm:glm} below). Our goal is to give an estimate of this $\vec{\beta}$ again via \emph{the same} $(\epsilon,\delta)$-differentially private version of the LSE given by \Cref{eq:lse-estim}.
In particular, \emph{in addition} to \Cref{asm:gauss1}, we make the following assumption in the Binary Regression setting:
\begin{assumption}
\label{asm:glm}
There exists a $\vec \beta\in \reals^d$ such that, given $\vec X_i\in \reals^d$ and
for all $i \in [n]$,
%label $y_i$'s marginal distribution is given by:
\begin{align}\label{eq:glm}
\Pr[y_i = +1 | \vec X_i] = f(\vec \beta^T \vec X_i),
\end{align}
where $f : \reals \rightarrow [0,1]$ is a non-decreasing, continuously differentiable function satisfying $\lim_{x \rightarrow -\infty} f(x) = 0$ and $\lim_{x \rightarrow \infty} f(x) = 1$.
Moreover, the features $\vec X_i$ are zero-mean, i.e., $\vec{\mu}=\E[\vec X_i] = \vec 0$.
\end{assumption}

The probabilistic model defined by 
\Cref{eq:glm}
holds for many important practical settings. For instance, it holds for logistic regression, where the link function is $f(x) = 1/\left( 1+e^{-x} \right)$. It also holds for Support Vector Machines (SVMs) with linearly separable data. We discuss this in more detail in \Cref{appendix:discussion-glm-svm-log}.

Finally, our assumption that $\vec \mu = \vec 0$ is common (see, e.g., \cite{bayesian_glms, thecostofprivacy_GLMs_cai2020, daskalakis_neurips2020_gaussian, BernsteinSheldon2019, sheffet_DP_OLS_17, muratErdogdu}) and well-motivated in the context of our Binary Regression setting: even ignoring privacy considerations, the sample complexity guarantees of any estimator will degrade rapidly as $\vec{\mu}$ gets farther away from the origin. This is precisely because, under Gaussian covariates, the fraction of samples of one class will decrease exponentially as the distance of $\vec{\mu}$ from the separating hyperplane (that passes through the origin) increases.

\paragraph{Linear Regression.} A natural question is whether we can extend our guarantees beyond bounded labels. To this end, we finally consider the standard Linear Regression setting (with Gaussian errors):
\begin{assumption}
\label{asm:standard_linear_regression}
Labeled examples $(\vec X_i, y_i)$, $i=1,\ldots,n$, are i.i.d., where $\vec X_i \in \reals^d$ are sampled from the Gaussian distribution $\calN(\vec \mu, \Sigma)$ 
satisfying $\mathbb{I}_d \preceq \Sigma \preceq \kappa \mathbb{I}_d$
for some universal parameter $\kappa > 0$.
Moreover, there exists a $\vec \beta\in \reals^d$ and a $\sigma_\epsilon>0$ such that, given $\vec X_i\in \reals^d$,
\begin{align}\label{eq:gaussian_error_linear_model}
y_i = \vec\beta^T \vec X_i + \eps_i,\quad \text{for all}~i=1,\ldots,n
\, ,
\end{align}
where $\eps_i$ are i.i.d.~samples from $\calN(0, \sigma_\eps^2)$.
\end{assumption}
Note that, in this setting, labels $y_i$ are themselves Gaussian and, therefore, unbounded. Our goal here is again to produce a differentially private estimate for the ``ground truth'' vector $\vec{\beta}$.

\section{Main Results}
\label{section:main-results}
% DP 
We  formally state our results in this section. Our theorems provide $(\frac{\eps^2}{2} + \eps\sqrt{2\log(1/\delta)}, \delta)$-DP guarantees for the Least Squares Fitting, the Binary and Linear Regression settings. This guarantee is, in essence, equivalent to $(\eps, \delta)$-DP. For a more detailed discussion on this issue, we refer the reader to \Cref{section:equiv-priv}. We focus here on the statement of our main results and conclusions drawn from them; an overview of the technical challenges we face when proving these results and the novel techniques we employ to address them  can be found in \Cref{section:technical-overview}.

\subsection{Least Squares Fitting}

Our differentially private LSE for the Least Squares Fitting setting is summarized in \Cref{algo:betahat}. In short, we compute DP estimates of the quantities
\begin{align*}(X^T X/n)^{-1} \quad \text{and}\quad X^T \vec y/n,\end{align*}
whose product, by \Cref{eq:lse-estim3}, yields the LSE $\vec \beta^
\star$.

The estimation of the first quantity proceeds as follows. Having access to the $n$ i.i.d. samples $(\vec X_i, y_i) \in \reals^d \times \reals$, where $\vec X_i \sim \calN(\vec \mu, \Sigma), i \in [n]$, \Cref{algo:betahat} initially privately
computes differentially private estimates $(\wh{\vec \mu}_{\vec X}, \wh{\Sigma}_{\vec X})$
of the mean and covariance matrix of the $d$-dimensional Gaussian distribution $\calN(\vec \mu, \Sigma)$, using the algorithm $\textsc{LearnGaussian-hd}$, discussed in \Cref{sec:paremestimation}. These estimates, that satisfy the guarantees indicated in \Cref{theorem:gautam-gaussian},  can be used to estimate  $(X^T X/n)^{-1}$ via the relationship:
\[
\frac{1}{n}\sum_{i=1}^n \vec{X}_i \vec X_i^T
\approx
\E[\vec X_i \vec X_i^T]
\approx
\wh{\Sigma}_{\vec X} + \wh{\vec \mu}_{\vec X} \wh{\vec \mu}_{\vec X}^T
\, .
\]
The second quantity, i.e., the term $X^T \vec y/n$, is somewhat harder to estimate in a differentially private fashion, as constituent terms $y_i \vec X_i$ are \emph{not} Gaussian. The boundedness of variables $y_i$, however, ensures that these terms are sub-gaussian.
As an important technical contribution, we design differentially private algorithms that operate in the sub-gaussian regime (see \textsc{LearnSubGaussian-hd} in \Cref{appendix:variants-subgaussian}),
extending the analysis of \cite{gautamHighDimensional} and \cite{vadhanConfidenceIntervals},
%to the sub-gaussian regime , 
and obtain a private mean estimate $\wh{\vec \mu}_{\vec X, y}$ for the sub-gaussian random vectors $y_i \vec X_i$. 

Armed with these  estimates, our differentially private LSE 
 is finally given by:
\begin{align}
\label{eq:dpestimate}
\wh{\vec\beta} = \left( \wh{\Sigma}_{\vec X} + \wh{\vec\mu}_{\vec X} \wh{\vec\mu}^T_{\vec X} \right)^{-1} \wh{\vec\mu}_{\vec X, y}
\, ,
\end{align}
whose privacy follows from appropriate composition rules.
We refer to the resulting algorithm, summarized in  \Cref{algo:betahat}, as \textsc{PrivLearnLSE}.
Our main result with respect to the privacy and accuracy of this estimator is as follows:
\begin{theorem}
[Privacy and Accuracy of $\wh{\vec\beta}$ in Private Least Squares Fitting]
\label{theorem:main-result1}
Under \Cref{asm:gauss1} with parameters $(\kappa, c, \rho, R)$, for all privacy parameters $\eps, \delta > 0$, accuracy parameters $\alpha, \eta > 0$ and confidence $\gamma \in (0,1)$, \textsc{PrivLearnLSE} (defined in \Cref{algo:betahat}) is $(\frac{\eps^2}{2} + \eps\sqrt{2\log(1/\delta)}, \delta)$-differentially private.
Moreover, if the number of labeled examples is at least: 
\begin{align*}
n &= \wt{O} \left( 
\frac{d^{3/2} \sqrt{\log(\kappa \rho c)} \cdot  \polylog \left(\frac{1}{ \gamma \delta}\right)}{\eta^2 \eps} \right) \\
& + (1+R) \cdot \wt{O} \left( \frac{d^{3/2} \sqrt{\log \kappa} \cdot \polylog \left(\frac{1}{ \gamma \delta} \right) }{\alpha^2 \eps} \right) \,,
\end{align*}
then, 
\textsc{PrivLearnLSE} 
runs in $\poly(n)$ time
and,
with probability at least $1-O(\gamma)$, successfully returns an estimate $\wh{\vec\beta} \in \reals^d$ that satisfies:
\[
\left\| \wh{\vec\beta} - \vec\beta^\star \right\|_2^2
\leq O\left( \alpha^2 \right)
\cdot \left\| \Sigma^{1/2} \vec\beta^{\star} \right\|_2^2
+ O\left( \eta^2 \right) \cdot c^2
\,,
\]
 with respect to the LSE $\vec\beta^*$.
\end{theorem}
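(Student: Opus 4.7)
\textsc{PrivLearnLSE} consists of two private subroutines---\textsc{LearnGaussian-hd} to produce $(\wh{\vec\mu}_{\vec X}, \wh\Sigma_{\vec X})$ and its sub-gaussian analog \textsc{LearnSubGaussian-hd} to produce $\wh{\vec\mu}_{\vec X,y}$---followed by deterministic post-processing (matrix inversion and multiplication). I would run each subroutine at privacy level $(\eps,\delta/2)$ and combine them via adaptive composition (\Cref{def:dp-composition}), obtaining the stated $(\eps^2/2 + \eps\sqrt{2\log(1/\delta)},\delta)$-DP bound; the post-processing step introduces no additional privacy loss.

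\textbf{Accuracy -- key decomposition.} Write $M = X^TX/n$, $\vec v = X^T\vec y/n$ (so $\vec\beta^\star = M^{-1}\vec v$), and $\wh M = \wh\Sigma_{\vec X} + \wh{\vec\mu}_{\vec X}\wh{\vec\mu}_{\vec X}^T$, $\wh{\vec v} = \wh{\vec\mu}_{\vec X,y}$ (so $\wh{\vec\beta} = \wh M^{-1}\wh{\vec v}$). The resolvent identity $\wh M^{-1} - M^{-1} = \wh M^{-1}(M-\wh M)M^{-1}$ yields
\[
\wh{\vec\beta} - \vec\beta^\star = \wh M^{-1}(\wh{\vec v} - \vec v) + \wh M^{-1}(M - \wh M)\vec\beta^\star,
\]
reducing the proof to two estimation steps. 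Since $\Sigma\succeq\mathbb{I}_d$ and $\wh\Sigma_{\vec X}$ is close to $\Sigma$ by the TV/Mahalanobis guarantee of \textsc{LearnGaussian-hd}, I would first show $\wh M\succeq\tfrac{1}{2}\mathbb{I}_d$ with high probability, hence $\|\wh M^{-1}\|_2=O(1)$. For the first summand, the triangle inequality $\|\wh{\vec v}-\vec v\|_2 \leq \|\wh{\vec v}-\E[y\vec X]\|_2 + \|\vec v-\E[y\vec X]\|_2$ splits the contribution into a DP piece, bounded by \textsc{LearnSubGaussian-hd} (because $y_i\vec X_i$ is sub-gaussian whenever $|y_i|\leq c$ and $\vec X_i$ is Gaussian), and a sampling piece bounded by sub-gaussian vector concentration. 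Tuning both to $O(\eta c)$ absorbs the first term of the sample-complexity bound and yields the $O(\eta^2)c^2$ summand in the final error.

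\textbf{Second summand and main obstacle.} To keep the bound proportional to $\|\Sigma^{1/2}\vec\beta^\star\|_2$ rather than to $\sqrt{\kappa}\|\vec\beta^\star\|_2$, the key trick is to insert $\Sigma_{\vec X}^{\pm 1/2}$ with $\Sigma_{\vec X}:=\Sigma+\vec\mu\vec\mu^T$ (the population second-moment matrix), writing
\[
\wh M^{-1}(M-\wh M)\vec\beta^\star = \bigl[\wh M^{-1}\Sigma_{\vec X}^{1/2}\bigr]\bigl[\Sigma_{\vec X}^{-1/2}(M-\wh M)\Sigma_{\vec X}^{-1/2}\bigr]\bigl[\Sigma_{\vec X}^{1/2}\vec\beta^\star\bigr].
\]
The first bracket has operator norm $O(1)$ (using $\wh M\approx\Sigma_{\vec X}\succeq\mathbb{I}_d$); the middle bracket is $O(\alpha)$ by a Mahalanobis-form version of the accuracy guarantee of \textsc{LearnGaussian-hd} combined with matrix concentration of $M$ around $\Sigma_{\vec X}$; and $\|\Sigma_{\vec X}^{1/2}\vec\beta^\star\|_2^2 = \|\Sigma^{1/2}\vec\beta^\star\|_2^2 + (\vec\mu^T\vec\beta^\star)^2$ differs from the target $\|\Sigma^{1/2}\vec\beta^\star\|_2^2$ only by a benign mean-driven correction. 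The $(1+R)$ factor in the sample complexity appears here when propagating the mean-estimation error through the $\wh{\vec\mu}_{\vec X}\wh{\vec\mu}_{\vec X}^T$ summand of $\wh M$, via cross terms of the form $(\wh{\vec\mu}_{\vec X}-\vec\mu)\vec\mu^T$. Squaring the triangle inequality on $\|\wh{\vec\beta}-\vec\beta^\star\|_2$ then delivers the claimed bound.

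\textbf{Main obstacle.} The delicate piece is precisely this second summand: simultaneously (i) converting the Mahalanobis-type accuracy of \textsc{LearnGaussian-hd} into a scale-correct bound against the unknown parameter $\vec\beta^\star$, and (ii) handling the rank-one $\vec\mu\vec\mu^T$ contamination to the second-moment matrix without losing the linear $(1+R)$ dependence in the sample complexity. A secondary but indispensable ingredient is the sub-gaussian extension of \textsc{LearnGaussian-hd}---i.e., \textsc{LearnSubGaussian-hd}---required for the cross-moment $y\vec X$, which is not Gaussian even when $\vec X$ is.
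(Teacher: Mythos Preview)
Your decomposition via the resolvent identity, the split of each summand into DP-piece plus sampling-piece, and the recognition that $y_i\vec X_i$ must be handled through a sub-gaussian extension all match the paper's approach. The paper writes the same identity as $\wh{\vec\beta}-\vec\beta^\star=\wh M^{-1}(-Q_1\vec\beta^\star+\vec Q_2)$ with $Q_1=\wh M-M$, $\vec Q_2=\wh{\vec v}-\vec v$, and then bounds three quantities separately.

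The one place your sketch diverges, and where it would not quite deliver the stated theorem, is the normalization. The paper works throughout in the variables $\vec w=\Sigma^{1/2}\vec\beta$ (the covariance, not the second moment $\Sigma_{\vec X}=\Sigma+\vec\mu\vec\mu^T$) and proves $\|\Sigma^{1/2}\wh M^{-1}\Sigma^{1/2}\|_2=O(1)$, $\|\Sigma^{-1/2}Q_1\Sigma^{-1/2}\|_2=O(\alpha)$, $\|\Sigma^{-1/2}\vec Q_2\|_2=O(\eta c)$. Two points follow. First, for the $\vec Q_2$ term you bound $\|\wh M^{-1}\|_2\cdot\|\wh{\vec v}-\vec v\|_2$ in Euclidean norm; but the guarantee of \textsc{LearnSubGaussian-hd} is $\|A'(\wh{\vec\mu}'-\vec\mu')\|_2\le\eta$ in the \emph{preconditioned} norm, and converting to Euclidean norm costs $\|A'^{-1}\|_2$, which can be $\Theta(\sqrt{\kappa})$---blowing up the sample complexity. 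The paper instead shows $\|\Sigma^{-1/2}A'^{-1}\|_2=O(c)$ and pairs it with the stronger bound $\|\Sigma^{1/2}\wh M^{-1}\Sigma^{1/2}\|_2=O(1)$, which requires opening up the preconditioner structure $\wh\Sigma=A^{-1}\wt\Sigma A^{-1}$ and using $\mathbb I\preceq A\Sigma A\preceq O(1)\mathbb I$, not merely $\wh M\succeq\tfrac12\mathbb I$. Second, your choice of $\Sigma_{\vec X}^{\pm1/2}$ for the $Q_1$ term leaves the additive $(\vec\mu^T\vec\beta^\star)^2$ in the final error; this is not ``benign'' (it can be $R^2\|\vec\beta^\star\|_2^2$) and does not appear in the theorem. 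Normalizing by $\Sigma^{\pm1/2}$ instead removes it cleanly, at the cost of having to track the cross terms $(\wh{\vec\mu}-\vec\mu)\vec\mu^T$ inside $\Sigma^{-1/2}Q_1\Sigma^{-1/2}$, which is exactly where the paper's linear $(1+R)$ factor enters.

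In short: same skeleton, but the paper's consistent $\Sigma^{\pm1/2}$ normalization (together with explicit use of the preconditioner $A$) is what makes both summands scale-free in $\kappa$ and free of the $(\vec\mu^T\vec\beta^\star)^2$ contamination.
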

We have provided a simplified bound in the number of samples; the precise sample complexity  and the theorem's proof can be found in \Cref{appendix:proof-main-result1}.
For a proof sketch, we refer to \Cref{sec:sketch-1}.
Intuitively, 
the number of samples we require grows as $\wt{O}(d^{3/2})$, slightly more favorably than the covariance estimation case of \citet{gautamHighDimensional}. Moreover, the number of samples again grows polylogarithmically on  $\kappa$ (the bound on the covariance spectral norm) but linearly (rather than polylogarithmically) on $R$, the bound on the mean.

The \textsc{LearnGaussian-HD} routine (see also \textbf{Line 6} of \Cref{algo:betahat}) is the algorithm of \cite{gautamHighDimensional} (as discussed in \Cref{sec:paremestimation} and  \Cref{subsubsection:algo_variant_overview}). The original algorithm by \cite{gautamHighDimensional} requires knowledge of both upper bounds $\kappa$ and $R$, but by switching the privacy guarantee from zero-concentrated DP to  $(\epsilon,\delta)$-DP, we remove the requirement of  prior knowledge of $R$, even though we still require $\kappa$ as input.
This adaptation can be found in \Cref{section:equiv-priv}. In contrast, our routine \textsc{LearnSubGaussian-HD} (see \textbf{Line 8} of \Cref{algo:betahat}), described in \Cref{subsubsection:algo_variant_overview}, departs from the one of \cite{gautamHighDimensional}, the main difference being that it operates (and comes with guarantees for) sub-gaussian vectors. For further details about the  modifications required to accomplish this, we refer the reader to \Cref{appendix:variants-subgaussian}.

\begin{algorithm}[!t] 
\caption{Differentially Private LSE.}
\label{algo:betahat}
\begin{algorithmic}[1]
\State \textbf{Input:} $(X, \vec y) = (\vec X_i, y_i)_{i \in [n]}$ with $\vec X_i \sim \calN(\vec \mu, \Sigma)$, where $\vec \mu, \Sigma$ are unknown and $n$ satisfies \Cref{theorem:main-result1}.
\State \textbf{Parameters:} Privacy $\eps, \delta > 0$, accuracy $\alpha, \eta > 0$, confidence $\gamma \in (0,1)$,
covariance spectral norm bound $\kappa$, upper bound of labels $c$. % \rho is NOT REQUIRED to be known (the privacy holds already, and for the accuracy we can transform to the \rho y_i \vec X_i)
\State \textbf{Output:} Estimate $\wh{\vec\beta}$ that approaches the LSE $\vec\beta^\star$ in $L_2$ norm with high probability.
\vspace{1mm}
\Procedure{PrivLearnLSE}{$(X, \vec y), \eps, \delta, \alpha, \eta, \gamma, \kappa$} %\Comment{\emph{Sample size $n$ satisfies} \Cref{theorem:main-result1}.}
\State $L \gets \{ \Theta(\eps), \Theta(\delta), \Theta(\alpha), \gamma, \kappa \}$
\State $(\wh{\vec \mu}_{\vec X}, \wh{\Sigma}_{\vec X}) \gets \textsc{LearnGaussian-hd}(\{\vec X_i\}_{i
\in [n]}, L)$
%O(\eps), O(\delta), O(\alpha), \gamma, \kappa)$. 
%\Comment{\Cref{appendix:variants-subgaussian}.}
\State $L \gets \{\Theta(\eps), \Theta(\delta), \Theta(\eta), \gamma, c^2 \kappa\}$
\State $\wh{\vec \mu}_{\vec X, y} \gets \textsc{LearnSubGaussian-hd}(\{y_i \vec X_i\}_{i \in [n]}, L)$
%O(\eps), O(\delta), O(\eta), \gamma, c^2 \kappa)$. 
%\Comment{\Cref{appendix:variants-subgaussian}.}
\State $\mathrm{M} \gets \wh{\Sigma}_{\vec X} + \wh{\vec\mu}_{\vec X} \wh{\vec\mu}^T_{\vec X}$
\State \textbf{if} $\mathrm{M}$
%\wh{\Sigma}_{\vec X} + \wh{\vec\mu}_{\vec X} \wh{\vec\mu}^T_{\vec X}$ 
is not invertible\footnotemark\ 
\textbf{then} Output $\perp$
\State Output the private estimate $\wh{\vec\beta} = 
%\left( \wh{\Sigma}_{\vec X} + \wh{\vec\mu}_{\vec X} \wh{\vec\mu}^T_{\vec X} \right)
\mathrm{M}^{-1} \wh{\vec\mu}_{\vec X, y}$
\EndProcedure
\end{algorithmic}
\end{algorithm}
\footnotetext{The invertibility of the matrix in \textbf{Line 10} holds with high probability; we account for the bad non-invertibility event in the $O(\gamma)$ failure probability of \Cref{theorem:main-result1}.}

\subsection{Binary Regression}

We next turn our attention to the Binary Regression setting, in which \emph{both} \Cref{asm:gauss1} and \Cref{asm:glm} apply. We study the properties of \textsc{PrivLearnLSE} (\Cref{algo:betahat}) under these assumptions; the only (slight) modification of \Cref{algo:betahat}, compared to the previous setting, is that we no longer need the estimate $\wh{\vec \mu}_{\vec X}$, as \Cref{asm:glm} states that $\vec \mu = \vec 0$. Hence, we set  $\wh{\vec \mu}_{\vec X} = \vec 0$ in \Cref{eq:dpestimate}, with the remaining terms computed as in the previous section. We show that the resulting algorithm has the following guarantees:

\begin{theorem}
[Privacy and Accuracy of $\wh{\vec\beta}$ in Private Binary Regression]
\label{theorem:main-result2}
Under \Cref{asm:gauss1} with covariance parameter $\kappa$ and \Cref{asm:glm} with true parameter $\vec \beta \in \reals^d$, for every privacy parameters $\eps, \delta > 0$, accuracy parameters $\alpha, \eta > 0$ and confidence $\gamma \in (0,1)$,  \textsc{PrivLearnLSE} (defined in \Cref{algo:betahat}) with $\wh{\vec \mu}_{\vec X} = \vec 0$  is  $(\frac{\eps^2}{2} + \eps\sqrt{2\log(1/\delta)}, \delta)$-differentially private. Moreover, if the number of labeled examples is at least:
\begin{align*}
n &= \wt{O} \left( 
\frac{d^{3/2} \cdot \polylog \left( \frac{1}{ \gamma \delta} \right) }{\eps} \cdot \max \left\{ \frac{\sqrt{\log \kappa}}{\eta^2}, \frac{1}{\alpha^2} \right\}
\right)
\, ,
\end{align*}
then
\textsc{PrivLearnLSE}
runs in $\poly(n)$ time and,
with probability at least $1-O(\gamma)$, successfully returns an output estimate $\wh{\vec\beta} \in \reals^d$ that satisfies
\begin{align}
\label{eq:k}
\|\wh{\vec\beta}-k\vec\beta\|_2^2
\leq O( \alpha^2 )
\Big( 1 \!+\! \big\|k \Sigma^{1/2} \vec\beta \big\|_2^2 \Big)
\!+\! O( \eta^2)
\,,\!\!
\end{align}
where $k =
\frac{2n}{n-d-1}
\E%_{\vec X_i \sim \calN\left( \vec 0, \Sigma \right)}
\left[ f'\left( \vec\beta^T \vec X_i \right) \right]
\, .$
\end{theorem}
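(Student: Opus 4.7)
The plan is to decompose the analysis into a \emph{privacy} part---handled directly by \Cref{theorem:main-result1}---and a purely \emph{statistical} part comparing the empirical LSE $\vec\beta^\star$ to its population target $k\vec\beta$. Under \Cref{asm:glm}, the labels satisfy $|y_i|=1$ and the covariates satisfy $\vec\mu=\vec 0$, so the parameters of \Cref{asm:gauss1} specialize to $c=\rho=1$ and $R=0$; running \textsc{PrivLearnLSE} with $\wh{\vec\mu}_{\vec X}$ hard-coded to $\vec 0$ then preserves the same $(\eps,\delta)$-DP composition and removes the $R$-dependent term from the sample-complexity bound, which is precisely why the sample-size requirement of \Cref{theorem:main-result2} is the one of \Cref{theorem:main-result1} with $R=0$.

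First, I would invoke \Cref{theorem:main-result1} (specialized as above) to obtain, with probability $1-O(\gamma)$,
\[
\|\wh{\vec\beta}-\vec\beta^\star\|_2^2 \;\le\; O(\alpha^2)\,\|\Sigma^{1/2}\vec\beta^\star\|_2^2 + O(\eta^2),
\]
where $\vec\beta^\star$ is the empirical LSE from \Cref{eq:lse-estim3}. This handles everything that depends on privacy; what remains is to relate $\vec\beta^\star$ to $k\vec\beta$ using only non-private concentration arguments.

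Next, I would apply Stein's lemma (see \Cref{sec:stein}). Setting $g(z)=2f(z)-1=\E[y \mid \vec\beta^T \vec X=z]$ and observing $\nabla g(\vec\beta^T\vec X)=g'(\vec\beta^T\vec X)\vec\beta$, for $\vec X\sim\N(\vec 0,\Sigma)$ we get
\[
\E[y\,\vec X] \;=\; \E\bigl[\vec X\,g(\vec\beta^T\vec X)\bigr] \;=\; \Sigma\vec\beta \cdot \E\bigl[g'(\vec\beta^T\vec X)\bigr] \;=\; 2\,\E\bigl[f'(\vec\beta^T\vec X)\bigr]\,\Sigma\vec\beta.
\]
Thus the population LSE equals $\Sigma^{-1}\E[y\vec X]=2\,\E[f'(\vec\beta^T\vec X)]\vec\beta$, and the extra factor $\tfrac{n}{n-d-1}$ in $k$ is the exact inverse-Wishart correction $\E[(X^T X)^{-1}]=\tfrac{1}{n-d-1}\Sigma^{-1}$, which surfaces once $\vec\beta^\star-k\vec\beta$ is expanded through $(X^T X/n)^{-1}$. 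Combining this identity with the spectral concentration of $\tfrac{1}{n}X^T X$ around $\Sigma$ and the sub-gaussian concentration of $\tfrac{1}{n}X^T \vec y$ around $2\,\E[f'(\vec\beta^T\vec X)]\Sigma\vec\beta$ (both at the sample sizes guaranteed by \Cref{theorem:main-result2}) should yield
\[
\|\vec\beta^\star-k\vec\beta\|_2^2 \;\le\; O(\alpha^2)\bigl(1+\|k\Sigma^{1/2}\vec\beta\|_2^2\bigr)+O(\eta^2),
\]
where the additive $1$ absorbs the $O(\alpha^2)$-error entering through the matrix-inverse step and the $\|k\Sigma^{1/2}\vec\beta\|_2^2$ factor arises from propagating those errors through the product with $\Sigma^{-1}$.

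Finally, combining the two bounds via $\|\wh{\vec\beta}-k\vec\beta\|_2^2 \le 2\|\wh{\vec\beta}-\vec\beta^\star\|_2^2+2\|\vec\beta^\star-k\vec\beta\|_2^2$ and using $\|\Sigma^{1/2}\vec\beta^\star\|_2^2\le 2\|k\Sigma^{1/2}\vec\beta\|_2^2+2\|\Sigma^{1/2}(\vec\beta^\star-k\vec\beta)\|_2^2$ to replace $\vec\beta^\star$ on the right-hand side by $k\vec\beta$ then delivers the claimed inequality (\Cref{eq:k}). The main obstacle will be the statistical step: reproducing the exact finite-sample constant $\tfrac{2n}{n-d-1}\E[f'(\vec\beta^T\vec X)]$ together with the specific scaling $1+\|k\Sigma^{1/2}\vec\beta\|_2^2$ requires propagating sub-gaussian tail bounds on $y_i\vec X_i$ and spectral tail bounds on $\tfrac{1}{n}X^T X$ through the matrix-inverse product tightly enough to absorb the Wishart correction without degrading the $\wt{O}(d^{3/2})$ sample complexity---notably, the statistical step must not require more samples than the privacy step already provides.
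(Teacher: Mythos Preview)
Your overall decomposition is sound and close in spirit to the paper's, but there is one genuine difference worth flagging. The paper does \emph{not} compare $\wh{\vec\beta}$ to the empirical LSE $\vec\beta^\star$. Instead it introduces (purely for the analysis) an auxiliary estimator
\[
\vec\beta_s^\star \;=\; \Bigl(\tfrac{1}{n}\sum_{i=n+1}^{2n}\vec X_i\vec X_i^T\Bigr)^{-1}\Bigl(\tfrac{1}{n}\sum_{i=1}^{n}y_i\vec X_i\Bigr),
\]
where the two factors use \emph{disjoint} sample sets. Independence is exactly what makes the Wishart identity $\E\bigl[(\tfrac1n\sum\vec X_i\vec X_i^T)^{-1}\bigr]=\tfrac{n}{n-d-1}\Sigma^{-1}$ combine multiplicatively with Stein's lemma to give $\E[\vec\beta_s^\star]=k\vec\beta$ on the nose; this is where the factor $\tfrac{n}{n-d-1}$ actually comes from. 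The paper then bounds $\|\wh{\vec w}-\vec w_s^\star\|$ by noting that the proof of \Cref{theorem:main-result1} treats the $Q_1$ and $\vec Q_2$ terms with separate samples anyway, and bounds $\|\vec w_s^\star-\E[\vec w_s^\star]\|$ by the same concentration facts already established there.

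Your route---working directly with $\vec\beta^\star$---can be made to work, but your explanation of the Wishart factor is off. Because $(X^TX)^{-1}$ and $X^T\vec y$ share the same $X$, you cannot factor the expectation, so $\E[\vec\beta^\star]\neq k\vec\beta$ in general, and the $\tfrac{n}{n-d-1}$ does \emph{not} ``surface'' from your expansion. Your concentration argument naturally targets the population LSE $2\E[f'(\vec\beta^T\vec X)]\vec\beta$, not $k\vec\beta$; the residual $\bigl(k-2\E[f']\bigr)\vec\beta=\tfrac{2(d+1)}{n-d-1}\E[f']\,\vec\beta$ then has to be bounded separately. That is harmless at the stated sample size (since $\tfrac{d+1}{n-d-1}=O(\alpha)$), so your approach closes, but you should state this extra step explicitly rather than attributing the factor to an inverse-Wishart identity that does not apply to the dependent product.
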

As before, we have provided a simplified version of the exact number of samples.
The exact expression and the proof of the theorem are in \Cref{appendix:proof-main-result2}. The proof sketch can be found at \Cref{sec:sketch-2}. As in \Cref{theorem:main-result1}, the sample complexity grows as $d^{3/2}$, and is merely polylogarithmic on $\kappa$.
Moreover, as in classic (non-DP) work on binary regression via LSE \citep{kadioglu2021sample,muratErdogdu, sun2014learning, brillinger2012generalized}, our estimator learns the underlying ``true'' $\vec{\beta}$ up to a scaling factor $k$, that depends on the ``sharpness'' of the model function $f$ (via its derivative $f'$). We note that, to discover the hyperplane separating positive from negative labels, it indeed suffices to learn only the direction of $\vec\beta$, not its magnitude, since a separating hyperplane is fully defined  by this direction. 

To further elaborate  on the effect of $k$: by \Cref{asm:glm}, $\vec\beta^T\vec{X}_i$ is a zero mean Gaussian, while $f'$ tends to zero as its argument reaches either $+\infty$ or $-\infty$. Hence, the expectation that determines $k$ very much depends by the behavior of  $f'$ around 0. That is, if $f$ is relatively flat (i.e., binary labels are ``noisy''), $k$ will be small, and more samples will be needed to achieve a better numerical accuracy in \Cref{eq:k}; the converse is true when $f$ is ``sharp'' (e.g., a sigmoid close to the sign function), and labels are less noisy. This dependence of the estimate accuracy on the noise inherent in the GLM (via the model function $f$) is natural.

\subsection{Linear Regression}
In this model, the labels $y_i$ are assumed to be generated from an underlying ``true'' linear model $\vec\beta^T \vec X_i$ (with a Gaussian error), thus being unbounded, for some regression coefficient $\vec\beta \in\reals^d$. Our goal is to estimate this ``true'' underlying $\vec\beta$ in a differentially private way.
We provide the following algorithm for this task.
For each drawn labeled example $(\vec X, y)$, the algorithm creates the  vector $\vec Z = (\vec X, y)^T\in \reals^{(d+1)}$. Observe that this random vector is also Gaussian with a covariance matrix $\Sigma'\in \reals^{(d+1)\times (d+1)}$, given by:
\begin{align}
\label{eq:block_matrix-main}
\Sigma' = 
\begin{bmatrix}
\Sigma & \Sigma\vec\beta \\
\vec\beta^T \Sigma & \sigma_\eps^2 + \vec\beta^T \Sigma \vec\beta
\end{bmatrix}
\,,
\end{align}
where $\vec \mu, \Sigma, \sigma_\eps^2$ are the parameters of \Cref{asm:standard_linear_regression}. The algorithm, then, proceeds as follows. First, it computes a differentially private estimate $\wh{\Sigma}$ of $\Sigma$ using $n$ samples of $\vec X_i$ via  the routine $\textsc{LearnGaussian-hd}$, discussed in \Cref{sec:paremestimation}. Then, using $n$ additional samples $\vec Z_i = (\vec X_i, y_i)^T$, it computes a differentially private estimate $\wh{\Sigma'}$, again via   $\textsc{LearnGaussian-hd}$.
From \Cref{eq:block_matrix-main}, the first $d$ elements of the last column of $\wh{\Sigma'}$ can be used as a DP estimate    $\wh{\Sigma\vec\beta}$ of  $\Sigma\vec\beta$.\footnote{Note that  the first $d$ columns and rows of $\wh{\Sigma'}$ can also be used as a DP estimate $\wh{\Sigma'}$ of $\Sigma$; we nevertheless estimate this separately, to ensure the statistical independence of the two estimates.}   
Finally, the algorithm uses these two estimates to output: 
\begin{align*}
\wh{\vec\beta} = \wh{\Sigma}^{-1} \wh{\Sigma\vec\beta}
\,.
\end{align*}
A formal description of this algorithm can be found in  \Cref{algo:linear-regression} in \Cref{app:linear}. Our result with respect to its privacy and accuracy is as follows:
\begin{theorem}
[Privacy and Accuracy of $\wh{\vec\beta}$ in Private Linear Regression]
\label{theorem:main-result3}
Under \Cref{asm:standard_linear_regression} with parameter $\kappa$ and true vector $\vec \beta \in \reals^d$, for all privacy parameters $\eps, \delta > 0$, accuracy parameters $\alpha, \eta > 0$ and confidence $\gamma \in (0,1)$,
there exists an algorithm (see \Cref{algo:linear-regression}) that is
$(\frac{\eps^2}{2} + \eps\sqrt{2\log(1/\delta)}, \delta)$-differentially private, and if the number of samples is at least: 
\begin{align*}
n &= \wt{O} 
\left( 
\frac{
d^{3/2} \cdot \polylog \left( \frac{1}{\gamma \delta } \right) }
{ \eps} 
\max 
\left \{
\frac{\sqrt{\log (\kappa(\Sigma'))}}{\eta^2}, \frac{1}{\alpha^2} 
\right \}
\right)
\, ,
\end{align*}
then, it runs in $\poly(n)$ time and, with probability at least $1-O(\gamma)$, the output estimate $\wh{\vec\beta} \in \reals^d$ and the ``true'' regression coefficient $\vec\beta$ satisfy:
\begin{align*}
\left\| \wh{\vec\beta} - \vec\beta \right\|_2^2
\leq O\left( \alpha^2 \right)
\cdot \left\| \Sigma^{1/2} \vec \beta\right\|_2^2
+ O\left( \eta^2 \right) \cdot \lambda_\text{max}^2(\Sigma')
\, ,
\end{align*}
where $\kappa(\Sigma') = \frac{\lambda_\text{max}(\Sigma')}{\lambda_\text{min}(\Sigma')}$ is the condition number of the block matrix $\Sigma'$ as in \Cref{eq:block_matrix-main}.
\end{theorem}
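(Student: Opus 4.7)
The plan is to prove the privacy and accuracy guarantees in turn, leveraging the two independent invocations of $\textsc{LearnGaussian-hd}$ that define the algorithm: one on the raw features $\{\vec X_i\}$ producing $\wh{\Sigma}$, and one on the augmented vectors $\{\vec Z_i=(\vec X_i,y_i)^T\}$ producing $\wh{\Sigma'}$. Privacy follows by calling each instance (on independent halves of the dataset) with appropriately rescaled privacy parameters and applying the adaptive composition theorem (\Cref{def:dp-composition}); a routine calculation then produces the stated $(\tfrac{\eps^2}{2}+\eps\sqrt{2\log(1/\delta)},\delta)$-DP guarantee. The sample complexity follows by invoking \Cref{theorem:gautam-gaussian} twice with accuracies $\alpha$ and $\eta$, respectively: the first call uses $\kappa$ as its covariance spectral bound (yielding the $1/\alpha^2$ term), while the second, after a diagonal rescaling that ensures $\mathbb{I}_{d+1}\preceq\Sigma'/\lambda_{\min}(\Sigma')$, uses $\kappa(\Sigma')$ as its covariance condition number (yielding the $\sqrt{\log\kappa(\Sigma')}/\eta^2$ term), and the maximum of the two dominates.

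For the accuracy bound, the key observation is that $\vec Z_i$ is itself a $(d+1)$-dimensional Gaussian whose covariance matches the block matrix $\Sigma'$ of \Cref{eq:block_matrix-main}, so the top-$d$ entries of the last column of $\wh{\Sigma'}$ form a private estimate $\wh{\Sigma\vec\beta}$ of $\Sigma\vec\beta$. Using the identity $\vec\beta=\Sigma^{-1}(\Sigma\vec\beta)$, I would write the error decomposition
\begin{align*}
\wh{\vec\beta}-\vec\beta \;=\; \wh{\Sigma}^{-1}\bigl(\wh{\Sigma\vec\beta}-\Sigma\vec\beta\bigr) \;+\; \bigl(\wh{\Sigma}^{-1}-\Sigma^{-1}\bigr)\Sigma\vec\beta
\end{align*}
and bound each summand separately. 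Converting the total-variation guarantee of \Cref{theorem:gautam-gaussian} into Mahalanobis/spectral form yields $(1-O(\alpha))\Sigma\preceq\wh{\Sigma}\preceq(1+O(\alpha))\Sigma$ and $(1-O(\eta))\Sigma'\preceq\wh{\Sigma'}\preceq(1+O(\eta))\Sigma'$, whence $\|\wh{\Sigma}^{-1}\|_{\mathrm{op}}=O(1)$ and $\|\wh{\Sigma\vec\beta}-\Sigma\vec\beta\|_2\leq\|\wh{\Sigma'}-\Sigma'\|_{\mathrm{op}}=O(\eta)\lambda_{\max}(\Sigma')$. For the second summand, using the resolvent identity $\wh{\Sigma}^{-1}-\Sigma^{-1}=\wh{\Sigma}^{-1}(\Sigma-\wh{\Sigma})\Sigma^{-1}$ and substituting $A=\Sigma^{-1/2}\wh{\Sigma}\Sigma^{-1/2}=I+E$ with $\|E\|_{\mathrm{op}}=O(\alpha)$ reduces its norm to $\|\Sigma^{-1/2}(A^{-1}-I)\Sigma^{1/2}\vec\beta\|_2\leq\|A^{-1}-I\|_{\mathrm{op}}\cdot\|\Sigma^{1/2}\vec\beta\|_2=O(\alpha)\|\Sigma^{1/2}\vec\beta\|_2$, using $\Sigma\succeq\mathbb{I}_d$. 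Squaring and summing the two contributions yields the claimed accuracy.

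The main obstacle is cleanly routing the $\lambda_{\max}(\Sigma')$ dependence through the analysis. Extracting the off-diagonal block $\Sigma\vec\beta$ from the Mahalanobis bound on $\wh{\Sigma'}-\Sigma'$ must produce exactly a factor of $\lambda_{\max}(\Sigma')$ (and not $\lambda_{\max}(\Sigma)$ times a separate $\|\vec\beta\|$ term), and the unknown dependence on both $\sigma_\eps$ and $\|\Sigma^{1/2}\vec\beta\|$ has to be pushed into the logarithmic $\sqrt{\log\kappa(\Sigma')}$ sample-complexity factor rather than into the accuracy bound. Handling the initial rescaling of $\Sigma'$ consistently with the single-call analysis of \Cref{theorem:gautam-gaussian}, so that the final norms end up in the correct units, is the other subtlety that will require care.
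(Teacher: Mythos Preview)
Your proposal is correct and follows essentially the same route as the paper: the identical error decomposition $\wh{\vec\beta}-\vec\beta = \wh{\Sigma}^{-1}(\wh{\Sigma\vec\beta}-\Sigma\vec\beta) + (\wh{\Sigma}^{-1}-\Sigma^{-1})\Sigma\vec\beta$, the same conversion of the Mahalanobis guarantee on $\wh{\Sigma'}$ into an operator-norm bound $\|\wh{\Sigma'}-\Sigma'\|_2\le O(\eta)\lambda_{\max}(\Sigma')$ to control the extracted column, and the same use of $\Sigma\succeq\mathbb{I}_d$ to absorb the $\Sigma^{-1/2}$ factor. The only cosmetic difference is that the paper factors everything through $\vec w=\Sigma^{1/2}\vec\beta$ and bounds $\|\Sigma^{1/2}\wh{\Sigma}^{-1}\Sigma^{1/2}\|_2$ by reopening the preconditioner internals (as in its \Cref{claim:inverse_sigma_norm}), whereas you obtain $\|\wh{\Sigma}^{-1}\|_{\mathrm{op}}=O(1)$ directly from the Mahalanobis bound $(1-O(\alpha))\Sigma\preceq\wh{\Sigma}$; your route is slightly cleaner and entirely valid.
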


The exact sample complexity bound and the theorem's proof can be found in \Cref{app:linear}. A short proof sketch is provided in \Cref{sec:sketch-3}. As in our previous results, the sample complexity scales as $d^{3/2}$; also, it is polylogarithmic on the condition number of $\Sigma'$.\footnote{
Again,  the  non-invertibility of the matrix $\wh{\Sigma}$ is a low probability event and is fully captured by the $O(\gamma)$ probability of failure, as also indicated in our proof.}

\section{Technical Overview}
\label{section:technical-overview}
In this section, we provide a sketch of our technical contributions with respect to the proofs of Theorems \ref{theorem:main-result1},  \ref{theorem:main-result2}, and \ref{theorem:main-result3}.

\subsection{\Cref{theorem:main-result1}: Proof Sketch}
\label{sec:sketch-1}
We begin with \Cref{theorem:main-result1},  which deals with the Least Squares Fitting problem. Our goal is to privatize the Least Squares Estimator (see \Cref{eq:lse-estim}) without significant accuracy loss. Hence, the differentially private algorithm (see \Cref{algo:betahat})
computes a quantity $\wh{\vec \beta}$ that is asymptotically the same as the Least Squares Estimate of \Cref{eq:lse-estim}:
\begin{align}
\label{eq:lse-estim-tech-overview}
\vec\beta^\star = \left( \frac{1}{n} \sum_{i=1}^n \vec X_i \vec X_i^T \right)^{-1} \left( \frac{1}{n} \sum_{i=1}^n y_i \vec X_i \right)
\, .
\end{align}

The structure of this estimate (product of two terms) suggests privatizing each term separately, thereby motivating \Cref{algo:betahat}. To ensure that the desired privacy property holds, the key idea is to apply the composition of differentially private mechanisms (see \Cref{def:dp-composition}), hence affording privacy to the whole algorithm. It thus suffices to consider privatized estimates of the individual terms.

The key conceptual observation for our main result is that the second term in \Cref{eq:lse-estim-tech-overview} consists, in fact, of sub-gaussian vectors. At a technical level, we have to expand the mean and covariance estimation procedures for Gaussian distributions to the sub-gaussian regime. More to that, in order to reduce as much as possible the dependence on the range of the mean value $R$ of the feature vectors $\vec X_i$, we modify the multivariate mean estimation analysis of \cite{gautamHighDimensional} to hold for unbounded mean feature vectors. As a technical tool, we use an alternative guarantee (see \Cref{lemma:vadhan_variation} in \Cref{appendix:variants-subgaussian}) on mean estimation which allows us to disengage the concentration bounds from the bound on the mean, in the case of $(\epsilon,\delta)$-DP.

Even using those variants of the algorithms, we still have to satisfy a stronger privacy desideratum.
In particular, \Cref{theorem:main-result1} requires privacy guarantees for pairs $(\vec X_i, y_i)$. 
However, \textbf{Line~8} of \Cref{algo:betahat} affords privacy guarantees for the entire sub-gaussian terms $y_i \vec X_i$.
So, it is not straightforward how to achieve the more general privacy guarantee of altering the individual $(\vec X_i, y_i)$ pairs.
In \Cref{subsection:priv_guarant}, we
establish the desired privacy guarantee for $(\vec X_i, y_i)$. 

For the desired accuracy guarantee on \Cref{algo:betahat}, we have to control the quantity $\left\| \wh{\vec\beta} - \vec\beta^\star \right\|_2^2$ (see \Cref{app:acc}). At a first sight, the above expression cannot be handled by standard concentration of measure phenomena. However, we provide a non-trivial decomposition:
\[
\wh{\vec\beta} - \vec\beta^\star = \left(\wh{\Sigma} + \wh{\vec\mu}_{\vec X} \wh{\vec\mu}_{\vec X}^T \right)^{-1} \left( - Q_1 \vec\beta^\star + \vec Q_2 \right)
\, ,
\]
using the below quantities that we introduce:
\[
Q_1 = \wh{\Sigma} + \wh{\vec\mu}_{\vec X} \wh{\vec\mu}_{\vec X}^T - \frac{1}{n} X^T X
,~~~\text{and}~~~
\vec Q_2 = \wh{\vec\mu}_{\vec X, y} - \frac{1}{n} X^T \vec y
\, ,
\]
where $\wh{\Sigma}, \wh{\vec\mu}_{\vec X}, \wh{\vec\mu}_{\vec X, y}$ are the private outputs of the algorithms described in \Cref{algo:betahat}, and $X, \vec y$ are the design matrix and the labels vector, respectively. This decomposition, when altered in geometry for normalization purposes by a transformation $\vec w = \Sigma^{1/2} \vec\beta$ and $\wh{\vec w} = \Sigma^{1/2} \wh{\vec\beta}$, enables us to control each term individually and obtain the desired bounds. The intuition behind this decomposition lies in the fact that both $Q_1$ and $\vec Q_2$ vanish asymptotically (and so $\wh{\vec\beta}$ tends to $\vec\beta^\star$), as the number of samples $n$ increases.

The bounds on $Q_1, \vec Q_2$ are handled by further decomposing into the difference of private quantities and their actual values ($\Sigma, \vec\mu_{\vec X}, \vec\mu_{\vec X, y}$) and between empirical quantities and the actual values. To obtain tighter bounds on the individual terms of difference of private quantities and actual values, we use the private preconditioner matrix in our analysis, which allows us to avoid a strict dependence on the largest eigenvalue $\kappa$ of the covariance matrix $\Sigma$ in our bounds (see \Cref{theorem:main-result1}). For a detailed proof of \Cref{theorem:main-result1}, see \Cref{appendix:proof-main-result1}.

\subsection{\Cref{theorem:main-result2}: Proof Sketch}
\label{sec:sketch-2}
As far as our second main result (\Cref{theorem:main-result2}) is concerned, the key conceptual contribution is to introduce a new estimator $\vec\beta_s^\star$ (solely for the purposes of the analysis) that is defined with the help of $n$ additional samples $(\vec X_i, y_i)$ (for a total of $2n$ samples) as follows:
\[
\vec \beta_s^\star =
\left( \frac{1}{n} \sum_{i=n+1}^{2n} \vec X_i \vec X_i^T \right)^{-1} \left( \frac{1}{n} \sum_{i=1}^n y_i \vec X_i \right)
\, .
\]

This estimate resembles the Least Squares Estimate $\vec\beta^\star$ but \emph{crucially introduces independence} between the two terms that constitute the Least Squares Estimate. This independence of the two terms is pivotal for proving that the estimate $\vec\beta_s^\star$ is an unbiased up to a multiplicative factor estimate of the true regression coefficient $\vec\beta$. In turn, this crucial observation is used to prove that our private estimate $\wh{\vec\beta}$ (see \Cref{algo:betahat}) is close to the true regression coefficient $\vec\beta$ up to a multiplicative factor, since the proof of \Cref{theorem:main-result1} holds  even for the Least-Squares-resembling estimate $\vec\beta_s^\star$ (because of the independent handling of the aforementioned quantities $Q_1, \vec Q_2$). At a technical level, the above discussion is a result of probabilistic tools, such as the high-dimensional geometry of Wishart matrices \citep{anderson_multivariate_statistics}. For a detailed proof of \Cref{theorem:main-result2}, see \Cref{appendix:proof-main-result2}.

\subsection{\Cref{theorem:main-result3}: Proof Sketch}
\label{sec:sketch-3}
Finally, we briefly discuss the techniques behind \Cref{theorem:main-result3}. Recall that for the standard Linear Regression problem with true vector $\vec \beta$, our algorithm outputs the private estimate
$\wh{\vec\beta} = \wh{\Sigma}^{-1} \wh{\Sigma\vec\beta}$, as mentioned after \Cref{eq:block_matrix-main}. On  one hand, the privacy guarantee follows from the composition theorems. On the other hand, for the accuracy guarantee, we have to control the quantity $\| \wh{ \vec \beta} - \vec \beta \|_2^2$. The main technical challenge for this step it to provide tight bounds for the eigenvalues of the block matrix $\Sigma'$ of \Cref{eq:block_matrix-main}. 
In particular, we have to draw sufficiently many samples in order to control the quantities $\| \Sigma^{1/2} \wh{\Sigma}^{-1} \Sigma^{1/2} \|_2^2$ and $\| \Sigma^{-1/2}(\wh{\Sigma \vec \beta} - \Sigma \vec \beta) \|_2^2$ dealing with our estimates $\wh{\Sigma}$ and $\wh{\Sigma \vec \beta}$. The first quantity is a constant, given roughly $n = \Omega(d^{3/2}\sqrt{\log \kappa}/\eps)$ samples, using properties of the \textsc{LearnGaussian-HD} algorithm and concentration of random matrices. The second quantity is more challenging and is controlled by the maximum eigenvalue of $\Sigma'$, with high probability, after roughly
$n = \Omega(d^{3/2} \sqrt{\log(\kappa(\Sigma'))}/\eps)$ samples are drawn, where $\kappa(\Sigma')$ is the condition number of the block matrix of \Cref{eq:block_matrix-main}. To upper bound the condition number, we exploit bounds for eigenvalues of block matrices \citep{lower_upper_bounds_block_matrix}, and show that in our setting, these are tight for $\kappa(\Sigma')$ (see \Cref{subsec:proof_of_last_column_norm}).

\section{Conclusion}
\label{section:conclusion}

We provide and analyze estimators for inference in three regression settings with unbounded covariates, formally proving that they are private and efficient. We believe that the line of work on unbounded covariates is of great interest with respect to both theory and practice. Potential future research based on this work includes, for instance, relaxing the i.i.d.~assumptions on the provided data (to account for potential dependencies among feature vectors). In addition, lower bounds in differentially private regression regimes are either elusive or sub-optimal (see, e.g., \cite{wang}); examining possible lower bounds in unbounded regimes for regression-like environments is another promising future direction.

\subsection{Limitations}

For the above analysis, we have considered the case of Gaussian marginals and have extended recent differentially private techniques on mean and covariance estimation \citep{gautamHighDimensional, vadhanConfidenceIntervals} to the sub-gaussian regime.
For the detailed hypotheses upon which the aforementioned procedures were provided, the reader is encouraged to review \Cref{section:problem-formulation}, where all of the relevant assumptions are clearly indicated.

The focus of this work is in its nature theoretical. Supplementally to the theory, we believe that the community would benefit from additional experimental studies of the proposed methods.
In fact, the design of practical algorithms is a strand of research of significant independent interest, since practical applications are able to immensely benefit from unbounded estimation procedures: see, e.g., the work of \cite{gautam_practical} that considers practical differentially private Gaussian mean and covariance estimation procedures.
Thus, we believe that the practical extension of our results and relevant experiments are a natural and interesting premise for future work.

\subsubsection*{Acknowledgements}
We thank the anonymous reviewers for useful remarks and comments on the presentation of our manuscript.
The most significant part of this work was performed while Jason Milionis was an undergraduate student at the National Technical University of Athens. This work was partially supported by a research fellowship from the Costis M. Lemos Foundation.
Dimitris Fotakis and Alkis Kalavasis were supported by the Hellenic Foundation for Research and Innovation (H.F.R.I.) under the ``First Call for H.F.R.I. Research Projects to support Faculty members and Researchers and the procurement of high-cost research equipment grant,'' project BALSAM, HFRI-FM17-1424.
Stratis Ioannidis was supported by the National Science Foundation (through grants 2112471, 2107062, and 1750539) and by the Niarchos Foundation, through the  Greek Diaspora Fellowship Program.

\bibliography{references}

%%%%%%%%%%%%%%%%%%%%%%%%%%%%%%%%%%%
%%%%%% SUPPLEMENT (OPTIONAL) %%%%%%
%%%%%%%%%%%%%%%%%%%%%%%%%%%%%%%%%%%

\clearpage
\appendix

\thispagestyle{empty}

% For one-column format, uncomment the following:
% \onecolumn \makesupplementtitle
% For two-column format, uncomment the following:
%\twocolumn[ \makesupplementtitle ]

\section{Additional Preliminaries}
\label{appendix:add-prelim}
In this section, we provide some tools required for our analysis.\\

\noindent\textbf{Sub-gaussianity Tools.} We begin with definitions of the sub-gaussian norm for univariate and multivariate random variables, and move on to some of their properties \citep{hdp_book} that we later use.

\begin{definition}
[Sub-gaussian random variable]
A random variable $X$ is called a sub-gaussian random variable if there exists $K > 0$ such that, for all $\lambda : |\lambda| \leq 1/K$,
\[
\E\left[ \exp\left( \lambda^2 X^2 \right) \right]
\leq
\exp(\lambda^2 K^2)
\, .
\]
The smallest $K$ for which the above property holds is called the sub-gaussian norm of $X$, and is denoted as $\left\|X\right\|_{\psi_2}$.
\end{definition}
\begin{definition}
[Sub-gaussian random vector]
\label{def:subg}
A random vector $\vec X\in\reals^d$ is called a sub-gaussian random vector if for all $\vec u\in\reals^d$, the inner product $\langle \vec X, \vec u \rangle$ is a sub-gaussian random variable. The sub-gaussian norm of a sub-gaussian random vector is defined as follows:
\[
\left\| \vec X \right\|_{\psi_2}
= \sup_{\vec u\in S^{d-1}}
\left\| \langle \vec X, \vec u \rangle \right\|_{\psi_2}
\, ,
\]
where $S^{d-1} = \{ \vec u\in\reals^d : \|\vec u\|_2=1 \}$ is the $d$-dimensional unit sphere.
\end{definition}

\begin{lemma}
[Properties of the sub-gaussian norm]
\label{lemma:subg_properties}
Let $\vec X$ be a sub-gaussian random vector. Then, the following hold:
\begin{itemize}
    \item For every constant $c>0$, $c \vec X$ is a sub-gaussian random vector, with $\left\| c\vec X \right\|_{\psi_2} = c \left\| \vec X \right\|_{\psi_2}$.
    \item If $\E[\vec X] = \vec\mu_{\vec X}$, then $\vec X - \vec\mu_{\vec X}$ is a sub-gaussian random vector, with \[\left\| \vec X - \vec\mu_{\vec X} \right\|_{\psi_2} \leq C \left\| \vec X \right\|_{\psi_2}\,,\] for a universal constant $C>0$.
\end{itemize}
\end{lemma}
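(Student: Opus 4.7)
The plan is to reduce both items to corresponding scalar statements about the sub-gaussian norm of the projections $\langle \vec X, \vec u \rangle$ for $\vec u \in S^{d-1}$, and then take a supremum at the end. This reduction is immediate from Definition~\ref{def:subg}, which writes $\|\vec X\|_{\psi_2}$ precisely as such a supremum, so the multivariate claims will inherit any universal-constant statement proved on the real line.

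For the first (scaling) item, I would use the linearity identity $\langle c\vec X, \vec u \rangle = c\langle \vec X, \vec u \rangle$ and then verify the scalar equality $\|cZ\|_{\psi_2} = c\,\|Z\|_{\psi_2}$ directly from the definition in the paper: substituting $cZ$ into the moment condition $\E[\exp(\lambda^2 Z^2)] \le \exp(\lambda^2 K^2)$ (valid on $|\lambda|\le 1/K$) and performing the change of variables $\lambda \mapsto \lambda/c$ shows that the smallest admissible constant for $cZ$ is exactly $cK$. Taking the supremum over $\vec u \in S^{d-1}$ converts this univariate identity into the claimed vector identity.

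For the second (centering) item, I would first write
$$\langle \vec X - \vec\mu_{\vec X}, \vec u \rangle = \langle \vec X, \vec u \rangle - \E[\langle \vec X, \vec u \rangle],$$
reducing the problem to the scalar inequality $\|Z - \E Z\|_{\psi_2} \le C\,\|Z\|_{\psi_2}$ for a universal constant $C$. I would then derive the latter from the triangle inequality for the $\psi_2$ (Orlicz) norm, $\|Z - \E Z\|_{\psi_2} \le \|Z\|_{\psi_2} + \|\E Z\|_{\psi_2}$, together with the facts that (i) the $\psi_2$ norm of a deterministic random variable equals its absolute value up to a universal factor, and (ii) $|\E Z|\le \E|Z|\le C'\|Z\|_{\psi_2}$ for sub-gaussian $Z$, via a standard tail-integration argument that derives a gaussian tail bound from the defining moment inequality. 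Supremizing over $\vec u\in S^{d-1}$ upgrades the scalar inequality to the vector one.

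The main obstacle is not mathematical depth but bookkeeping: ensuring that $C$ is truly universal, i.e., independent of both the law of $\vec X$ and the ambient dimension $d$. Since every step above uses only universal scalar comparisons and the passage from one dimension to $d$ is mediated by the same supremum already appearing in Definition~\ref{def:subg}, no dimension dependence can enter. A minor auxiliary point is that the paper's definition uses the restricted range $|\lambda|\le 1/K$, whereas the triangle inequality is cleanest in the equivalent Orlicz-norm form $\inf\{t>0:\E[\exp(Z^2/t^2)]\le 2\}$; I would freely switch between these two formulations, which are well-known to agree up to universal constants, absorbing any such factors into the final $C$.
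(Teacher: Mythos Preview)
Your proposal is correct and follows the standard route from Vershynin's \emph{High-Dimensional Probability}, which is exactly the reference the paper cites for this lemma; the paper itself does not supply a proof but simply imports the result from \citep{hdp_book}. Your reduction to scalar projections via Definition~\ref{def:subg}, the change-of-variables argument for scaling, and the triangle-inequality-plus-tail-integration argument for centering are precisely the arguments found there, so there is nothing further to compare.
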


\paragraph{Differential Privacy Tools.} We continue with a slightly different definition of differential privacy that is roughly equivalent with the classical definition, according to \Cref{lemma:zCDP_equivalence} \citep{zcdp_steinke}.

\begin{definition}
[Zero-concentrated DP (zCDP)]
\label{def:zCDP}
A randomized mechanism $M : \calX^n \to \calY$ satisfies $\rho$-zCDP if for every pair of neighboring datasets $X, X' \in \calX^n$ that differ on at most one element, and for every $\alpha \geq 1$,
\[
D_\alpha \left( M(X) || M(X') \right)
\leq \rho\alpha
\, ,
\]
where $D_\alpha(P||Q) = \frac{1}{\alpha-1} \log\left(\E_{x\sim Q}\left[ \left(\frac{P(x)}{Q(x)}\right)^\alpha \right]\right)$ is the $\alpha$-R\'enyi divergence between the probability distributions $P$ and $Q$.
\end{definition}
\begin{lemma}
[An equivalence between zero-concentrated DP and ``classical'' DP]
\label{lemma:zCDP_equivalence}
Let $M : \calX^n \to \calY$ be a randomized mechanism. Then, the following results hold:
\begin{itemize}
\item If $M$ is $\frac{\eps^2}{2}$-zCDP, then, for all $\delta > 0$, $M$ is
$
(\frac{\eps^2}{2} + \eps \sqrt{2\log(1/\delta)}, \delta)
$-DP.
\item If $M$ is $(\eps, 0)$-DP, then $M$ is $\frac{\eps^2}{2}$-zCDP.
\end{itemize}
\end{lemma}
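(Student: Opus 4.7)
The plan is to handle the two implications independently via the \emph{privacy loss random variable} $Z(o) := \log(P(o)/Q(o))$, where $P = M(X)$ and $Q = M(X')$ for arbitrary neighboring datasets $X, X'$. In both directions, each privacy notion translates into a quantitative control on the moment generating function $\E_Q[e^{\alpha Z}]$, after which the conversion is a calculation on MGFs. For the first direction (zCDP $\Rightarrow$ $(\eps,\delta)$-DP), setting $\rho := \eps^2/2$, \Cref{def:zCDP} is equivalent to the R\'enyi moment bound $\E_Q[e^{\alpha Z}] = \E_Q[(P/Q)^\alpha] \leq e^{(\alpha-1)\rho\alpha}$ for every $\alpha>1$. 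Using the change-of-measure identity $P(S) = \E_Q[e^{Z}\bone_S]$ and splitting across a threshold $t$, I would bound
\[
P(S) \;\leq\; e^{t}\,Q(S) \,+\, \E_Q\!\bigl[e^{Z}\,\bone_{\{Z>t\}}\bigr],
\]
and then apply Markov in the form $\E_Q[e^Z\bone_{\{Z>t\}}] \leq e^{-(\alpha-1)t}\E_Q[e^{\alpha Z}] \leq e^{(\alpha-1)(\rho\alpha - t)}$, valid since $e^{(\alpha-1)(Z-t)} \geq 1$ on $\{Z>t\}$ when $\alpha > 1$. Requiring the residual to be at most $\delta$ gives $t \geq \rho\alpha + \log(1/\delta)/(\alpha-1)$, and minimizing over $\alpha > 1$ at $\alpha^\star = 1 + \sqrt{\log(1/\delta)/\rho}$ yields $t^\star = \rho + 2\sqrt{\rho \log(1/\delta)} = \eps^2/2 + \eps\sqrt{2\log(1/\delta)}$, exactly the claimed DP parameter.

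For the second direction ($(\eps,0)$-DP $\Rightarrow$ $(\eps^2/2)$-zCDP), pure DP forces $|Z(o)| \leq \eps$ pointwise, while the fact that $P$ is a probability measure yields the consistency identity $\E_Q[e^{Z}] = 1$. Fix $\alpha>1$ and view the estimation of $D_\alpha(P\|Q) = \tfrac{1}{\alpha-1}\log \E_Q[e^{\alpha Z}]$ as a scalar extremal problem: maximize $\E[e^{\alpha Z}]$ over laws of $Z$ supported in $[-\eps,\eps]$ with $\E[e^{Z}] = 1$. Because $x \mapsto x^\alpha$ is convex on $[e^{-\eps}, e^\eps]$ for $\alpha>1$, the classical moment-problem argument says the maximum over distributions of $e^Z$ with a prescribed mean and support is attained at the two-atom law on the endpoints; translating back, the extremal $Z$ is supported on $\{-\eps, +\eps\}$ with $\Pr[Z = \eps] = 1/(e^\eps + 1)$, as forced by the constraint. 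Substituting this distribution and dividing by $\alpha-1$ reduces the target $D_\alpha \leq \alpha \eps^2/2$ to the one-variable inequality
\[
\tfrac{1}{\alpha-1}\log\!\left(\frac{e^{\alpha\eps} + e^{(1-\alpha)\eps}}{e^\eps + 1}\right) \;\leq\; \tfrac{\alpha \eps^2}{2},
\]
which I would verify by a Taylor expansion in $\eps$ (the $\alpha = 1$ limit recovers the sharp KL bound $\eps\tanh(\eps/2) \leq \eps^2/2$), combined with convexity of the cumulant generating function in $\alpha$, ensuring the remainder stays non-positive uniformly in $\alpha \geq 1$.

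The hard part is the second direction: the clean constant $\eps^2/2$ depends crucially on the consistency constraint $\E_Q[e^{Z}] = 1$ rather than on $|Z| \leq \eps$ alone (a naive Hoeffding application only gives the weaker $D_\alpha \leq \alpha \eps^2$, losing a factor of two). Pinning down the two-point maximizer and verifying the scalar inequality uniformly in $\alpha \geq 1$ is therefore what makes the promised constant sharp. The first direction, by contrast, is a standard Markov-plus-optimization exercise once the R\'enyi moment form of zCDP is written out, with the optimization in $\alpha$ being the only non-routine step.
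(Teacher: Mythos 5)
The paper offers no proof of this lemma: it is imported verbatim, with a citation, from the zero-concentrated DP paper of Bun and Steinke, and your argument is essentially a reconstruction of the proofs given there. Your first direction is correct and complete: the change-of-measure identity $P(S)\le e^t Q(S)+\E_Q[e^Z\bone_{\{Z>t\}}]$, the Markov step $\E_Q[e^Z\bone_{\{Z>t\}}]\le e^{-(\alpha-1)t}\E_Q[e^{\alpha Z}]\le e^{(\alpha-1)(\rho\alpha-t)}$, and the optimization $\alpha^\star=1+\sqrt{\log(1/\delta)/\rho}$ giving $t^\star=\rho+2\sqrt{\rho\log(1/\delta)}=\eps^2/2+\eps\sqrt{2\log(1/\delta)}$ is exactly the standard conversion. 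Your second direction is also structurally right, and you correctly identify where the content lies: the constraint $\E_Q[e^Z]=1$ (not just $|Z|\le\eps$) is what buys the constant $\eps^2/2$, and the convexity/moment-problem reduction to the two-atom law on $\{-\eps,\eps\}$ with $\Pr_Q[Z=\eps]=1/(e^\eps+1)$ is the correct extremal case.

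The one place your write-up falls short of a proof is the final scalar inequality
\[
\frac{1}{\alpha-1}\log\left(\frac{e^{\alpha\eps}+e^{(1-\alpha)\eps}}{e^\eps+1}\right)\;\leq\;\frac{\alpha\eps^2}{2},
\]
which you assert via ``Taylor expansion plus convexity of the cumulant generating function'' without carrying out the uniform-in-$\alpha$ verification. The inequality is true --- it is precisely the technical lemma in the appendix of the cited reference, whose proof is a genuinely nontrivial calculus argument rather than a one-line expansion --- and since this is exactly the step where the sharp constant is won, a complete write-up must include it (or cite it). As it stands your proposal is a correct and well-organized account of the standard argument with its hardest computation left as a claim.
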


\paragraph{Stein's Lemma and GLMs.}\label{sec:stein}
Finally, we state a multivariate version of Stein's Lemma \citep{stein1981estimation} due to \cite{liu1994siegel}.
\begin{lemma}
[Stein's Lemma \citep{liu1994siegel}]
\label{lemma:stein-lemma}
Let $\vec Z \in \reals^{p}, \vec W \in \reals^{q}$ be jointly Gaussian random vectors and let $f : \reals^{q} \rightarrow \reals$ be differentiable almost everywhere with 
$\E_{\vec W} \left[ | \partial f(\vec W) / \partial W_i | \right] < \infty$ for any $i \in [q]$. Then, $
\Cov [\vec Z, f(\vec W)] = \Cov[\vec Z, \vec W] \E[\nabla f(\vec W)]\,.
$
\end{lemma}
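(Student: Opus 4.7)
The plan is to reduce the multivariate identity to the scalar Gaussian integration-by-parts formula via the standard regression decomposition of jointly Gaussian vectors. First, I would translate both $\vec Z$ and $\vec W$ to have mean zero: covariances are unchanged under such shifts, and replacing $f(\cdot)$ by $f(\cdot + \E[\vec W])$ preserves both $\E[\nabla f(\vec W)]$ and $\Cov[\vec Z, f(\vec W)]$. So without loss of generality, assume $\E[\vec Z] = \vec 0$ and $\E[\vec W] = \vec 0$, and write $\Sigma_W := \Cov[\vec W]$.

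Second, since $(\vec Z, \vec W)$ is jointly Gaussian, I would decompose $\vec Z = A \vec W + \vec U$ with $A := \Cov[\vec Z, \vec W]\, \Sigma_W^{-1}$ (using a Moore--Penrose pseudoinverse in the degenerate case), where $\vec U$ is a zero-mean Gaussian vector that is uncorrelated with $\vec W$ by construction; joint Gaussianity upgrades uncorrelatedness to statistical independence. Consequently,
\[
\Cov[\vec Z, f(\vec W)] \;=\; A\, \Cov[\vec W, f(\vec W)] \;+\; \Cov[\vec U, f(\vec W)] \;=\; A\, \Cov[\vec W, f(\vec W)],
\]
since the second term vanishes by independence of $\vec U$ and $\vec W$.

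Third, I would establish the core identity $\Cov[\vec W, f(\vec W)] = \Sigma_W\, \E[\nabla f(\vec W)]$ via Gaussian integration by parts. Assuming first that $\Sigma_W$ is nonsingular, $\vec W$ has density $p(\vec w) \propto \exp(-\tfrac{1}{2} \vec w^\top \Sigma_W^{-1} \vec w)$, which satisfies $\nabla p(\vec w) = -\Sigma_W^{-1} \vec w\, p(\vec w)$, equivalently $\vec w\, p(\vec w) = -\Sigma_W \nabla p(\vec w)$. Integration by parts then yields
\[
\E[\vec W f(\vec W)] \;=\; \int \vec w\, f(\vec w)\, p(\vec w)\, d\vec w \;=\; -\Sigma_W \int f(\vec w)\, \nabla p(\vec w)\, d\vec w \;=\; \Sigma_W\, \E[\nabla f(\vec W)],
\]
and since $\E[\vec W] = \vec 0$ the left side equals $\Cov[\vec W, f(\vec W)]$. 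Combining with Step 2 gives $\Cov[\vec Z, f(\vec W)] = A\, \Sigma_W\, \E[\nabla f(\vec W)] = \Cov[\vec Z, \vec W]\, \E[\nabla f(\vec W)]$, which is the desired identity.

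The main obstacle is justifying the integration by parts under the lemma's weak regularity hypothesis: $f$ is only differentiable almost everywhere and satisfies $\E[|\partial f/\partial W_i|] < \infty$ componentwise, with no assumed smoothness or decay at infinity. I would handle this by approximating $f$ by a sequence of smooth, compactly supported functions $f_k$ (e.g.\ via mollification followed by a smooth cutoff), applying the classical integration-by-parts step to each $f_k$ where boundary terms vanish, and then passing to the limit by dominated convergence on both sides---using the $L^1$ hypothesis on the partial derivatives to control $\E[\nabla f(\vec W)]$ and Gaussian tail bounds together with $\E[|f(\vec W)|] < \infty$ (implied by the derivative bound and the mean-zero reduction) to control $\E[\vec W f(\vec W)]$. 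A secondary, more minor, obstacle is the degenerate case of singular $\Sigma_W$, which I would dispatch either by perturbing $\vec W$ with independent Gaussian noise of covariance $\eta I$ and sending $\eta \downarrow 0$ under dominated convergence, or by restricting to the affine subspace spanned by the range of $\Sigma_W$, on which $\vec W$ has a proper density and the preceding argument applies verbatim.
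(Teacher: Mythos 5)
The paper does not prove this lemma: it is quoted as an external result of \cite{liu1994siegel} (building on \cite{stein1981estimation}), so there is no in-paper argument to compare against, and your proposal must be judged on its own. What you give is the standard derivation and it is essentially sound: centering, the regression decomposition $\vec Z = A\vec W + \vec U$ with $A = \Cov[\vec Z,\vec W]\,\Sigma_W^{-1}$ and $\vec U$ independent of $\vec W$, and the Gaussian integration-by-parts identity $\E[\vec W f(\vec W)] = \Sigma_W \E[\nabla f(\vec W)]$ combine exactly as you say, and your handling of singular $\Sigma_W$ is fine.

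One caveat in your limiting step deserves attention. The hypothesis ``differentiable almost everywhere'' with integrable partials is not literally sufficient for the integration by parts: the Cantor function is differentiable a.e.\ with derivative $0$ a.e., yet $\Cov[W, f(W)] \neq 0$ for Gaussian $W$. Your mollification-plus-dominated-convergence argument quietly assumes that the mollified gradients $\nabla f_k$ converge to the a.e.\ pointwise gradient $\nabla f$; in general they converge to the \emph{distributional} gradient, and the two coincide only when $f$ is (locally) absolutely continuous along coordinate lines, i.e.\ weakly differentiable --- which is Stein's actual ``almost differentiable'' hypothesis. This is a defect of the lemma as commonly paraphrased rather than of your strategy, and it is immaterial for the paper's application (where the model function is continuously differentiable by assumption), but to make the limit passage airtight you should invoke absolute continuity explicitly so that $f_k \to f$ with $\nabla f_k \to \nabla f$ in $L^1$ on compact sets, and only then apply dominated convergence against the Gaussian weight.
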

The lemma has a direct application on Generalized Linear Models with Gaussian covariates \citep{kadioglu2021sample,muratErdogdu, brillinger2012generalized, brillinger2012identification}:
for the GLM of \cite{brillinger2012generalized} with Gaussian covariates, whose model function satisfies the conditions of \Cref{lemma:stein-lemma}, one can show \citep{brillinger2012generalized, brillinger2012identification} that the Ordinary Least Squares Estimator %(see \Cref{eq:lse-estim} below)
asymptotically converges to the true parameter vector $\vec \beta$ of the GLM with probability $1$, up to a scaling factor $k$. This
is analogous to the scaling factor $k$ that appears in our analysis (see \Cref{theorem:main-result2}).

\section{Logistic Regression and SVMs as Models Satisfying \Cref{asm:glm}}
\label{appendix:discussion-glm-svm-log}

We will show here how the models of Logistic Regression and linearly-separable SVMs fit into our probabilistic model of Binary Regression (see \Cref{asm:glm}). For the Logistic Regression model, applying the model function $f(x) = 1/(1+e^{-x})$ directly yields the conditional probabilistic model of $\Pr[Y = 1 | \vec X] = \frac{1}{1+e^{-\vec\beta^T \vec X}}$ for the regression coefficients $\vec\beta$, which is precisely the desired Logistic Regression model.

In the second case, we consider linearly-separable SVMs, where the data $(\vec X_i, y_i)$ are completely separated by an underlying hyperplane that we are trying to uncover. That is, the model function would be $\sgn(\vec\beta^T \vec X)$. However, such a function is neither smooth nor continuously differentiable near the origin, therefore we will apply the following trick: after receiving the perfectly linearly separable data, we will induce a minuscule amount of noise through a noisy model function $f$ that is continuously differentiable and smooth everywhere (but crucially, near the origin). Intuitively, we smooth out the sign function. We can make infinitely good approximations of the sign function, and therefore passing the data through one of those before we apply our algorithm is sufficient to recover the true underlying $\vec\beta$ up to a scaling factor depending on the noise that we artificially introduced.

In order to overcome the model's non-smooth property, one way to smoothen this objective  is to approximate the sign function of the model by a sigmoid 
\[
f(x) = \frac{2}{1 + \exp(-\lambda x)} - 1\,,
\]
which depends on a parameter $\lambda$. This sigmoid function would then be smooth and continuously differentiable near the origin, as desired. Our method could then be applied (in a similar way to the logistic regression). We note that the parameter  would not affect the direction of  to be estimated, but would affect the accuracy guarantee obtained.

\section{Overview of \cite{gautamHighDimensional} and Extension to Sub-Gaussian Regime}
\label{appendix:variants}

We first review the modified covariance estimation result, and present a generalization of these results to the sub-gaussian case, whereupon we provide a subsequent discussion.

\subsection{Equivalence of Privacy Guarantees of \cite{gautamHighDimensional} to Classical DP}
\label{section:equiv-priv}
First of all, we note that the privacy guarantees hold for a variation of the differential privacy definition that is mentioned in \Cref{def:zCDP}; specifically, $\frac{\eps^2}{2}$ ``zero-concentrated DP.'' This variation is more lax than the classical (pure) $\eps$-DP, but stronger than the $(\eps, \delta)$-DP that is commonly used in the privacy literature, as can be seen immediately from \Cref{lemma:zCDP_equivalence}.

In our case, we prefer to keep the results of \Cref{theorem:main-result1} and \Cref{theorem:main-result2} in the classical $(\eps, \delta)$-DP definition, and therefore the respective algorithms are $(\frac{\eps^2}{2} + \eps\sqrt{2\log(1/\delta)}, \delta)$-DP. In essence, this guarantee is equivalent to $(\eps, \delta)$-DP, but this equivalence is worse as $\eps$ gets larger, i.e., in the case that little privacy is desired. Additionally, the fact that $\delta > 0$ allows us to bypass the requirement of an upper bound on $\|\vec\mu\|_2$ (role which was previously played by $R$), hence not requiring knowledge of $R$, as will be analyzed in \Cref{lemma:gautam_variant_mean}.

\subsection{Algorithm Overview}
\label{subsubsection:algo_variant_overview}

Here, we provide a high-level description of the algorithm $\textsc{LearnSubGaussian-hd}$ (used in \Cref{algo:betahat}) that learns the mean and covariance matrix of a high-dimensional Gaussian distribution, which differs slightly on its execution from the $\textsc{LearnGaussian-hd}$ algorithm in a way that will be analyzed hereafter.

The building block of the covariance estimation algorithm is the \textsc{NaivePCE} algorithm (Algorithm 1 of the work), which intuitively would be the first try at inducing privacy in the covariance estimation procedure. More specifically, it truncates the input samples, adds a random Gaussian matrix to the empirical covariance that arises from these samples, and outputs the projection of the final matrix to the PSD cone. However, this naive ``first try'' algorithm exhibits a linear dependence of the accuracy to the largest eigenvalue $\kappa$ of the covariance matrix $\Sigma$ (intuitively, the largest variance across any direction), whereas we aim for a $\log\kappa$ dependence. Therefore, noticing that the accuracy dependence is optimal when the aforementioned largest eigenvalue is of constant order, we seek to transform the samples $\vec X_i$ to $A \vec X_i$ such that the largest eigenvalue of the covariance matrix of $A \vec X_i$ (which is $A\Sigma A$ for symmetric matrices $A$) satisfies the above condition.

The covariance estimation algorithm, thus, begins by efficiently finding such a matrix $A$ (the ``preconditioner'') according to an algorithm (Algorithm 3 of the work) which does the following: it uses $O(\log\kappa)$ successive rounds of the \textsc{NaivePCE} algorithm such that every round ``eliminates'' the eigendirections of largest variance (through an eigenvector decomposition and keeping intact for the next rounds only the eigenvalues that are smaller than half the current upper bound) hence transforming each successive $\kappa_j$ (for $1\leq j\leq O(\log\kappa)$ the number of the current round) to $\kappa_{j+1} = 0.7 \kappa_j$. After $O(\log\kappa)$ rounds, the final largest eigenvalue of $A\Sigma A$ will be of constant order, as desired. After this procedure which finds $A$, \textsc{NaivePCE} is run on the samples $A \vec X_i$ with a result of $\wt{\Sigma}$, and the covariance estimation algorithm finally outputs $\wh{\Sigma} = A^{-1} \wt{\Sigma} A^{-1}$. For further consideration on the internal details of those algorithms, we refer the interested reader to \cite{gautamHighDimensional}. Note that for these steps, the knowledge of $\kappa$, the upper bound on the largest eigenvalue of the covariance matrix $\Sigma$ of the initial samples $\vec X_i$ is necessary for the calibrated truncation and noise addition to occur correctly.

Had someone wanted to also estimate the mean, they would first get a matrix $A$ as above through $2n$ samples $\frac{1}{\sqrt{2}} \left( \vec X_{2i} - \vec X_{2i-1} \right),\, 1\leq i\leq n$ that are i.i.d. with the same covariance matrix $\Sigma$, and then draw $n$ additional i.i.d. samples $\vec X_i$ (for a total of $3n$ samples) and apply the univariate mean estimation algorithm of \cite{vadhanConfidenceIntervals} to each coordinate of $A\vec X_i$ separately.
Our algorithm's difference with \cite{gautamHighDimensional} is that we call the algorithm of \cite{vadhanConfidenceIntervals} with $R=\infty$, which is allowable and efficient to do in our setting due to the privacy guarantee having $\delta > 0$ (see the guarantees on \Cref{appendix:variants-subgaussian}).
Once we are in a univariate sub-gaussian setting, and since we have a constant-order upper bound on the variance $\sigma^2 = O(1)$ of $\left( A\vec X_i \right)_j$, which denotes the $j$-th coordinate of the random vector $A\vec X_i$, the algorithm for univariate mean estimation works as follows:
First, we find a differentially private estimation of an upper bound $B$ on the data with high probability in the following way: we split the whole range that the mean might be located $(-\infty, \infty)$ to bins of width $\sigma = O(1)$, and taking advantage of the concentration of sub-gaussian random variables around their mean, we use a differentially private histogram algorithm \citep{stability_based_histograms_bun} to locate the most frequent bin, which (along with its neighboring bins) should contain all data points with high probability.
Second, we truncate the input data $\left( A\vec X_i \right)_j$ to a range calculated according to the above estimated bound, such that all input samples fall within that range with high probability, and then add Laplacian noise (calibrated according to the differentially-private calculated bound $B$) to the empirical mean of the input samples. We output as the result of the univariate mean estimation algorithm this noisy empirical mean of the (truncated) input samples.

Assuming the generic description of the algorithms above, we show how we extend the proofs to the sub-gaussian case below.

\subsection{Differentially Private Sub-Gaussian Mean and Covariance Estimation}
\label{appendix:variants-subgaussian}

The modified algorithms that we presented in \Cref{subsubsection:algo_variant_overview} have the following guarantees, whereupon we will provide a proof sketch.

\begin{lemma}
[Private Covariance Estimation]
%, variant of \cite{gautamHighDimensional}]
\label{lemma:gautam_variant_covariance}
For every $\eps, \delta, \gamma, \kappa, \alpha > 0 \, ,$ there exists an $(\frac{\eps^2}{2} + \eps\sqrt{2\log(1/\delta)}, \delta)$-DP algorithm that, when given $n$ i.i.d. samples $\vec X_1, \dots, \vec X_n$ from a sub-gaussian multivariate distribution with mean $\E[\vec X_i] = \vec 0$ and covariance matrix $\E[\vec X_i \vec X_i^T] = \Sigma$ with $\mathbb{I}_d \preceq \Sigma \preceq \kappa \mathbb{I}_d$ and
\[
n =
O\left(
\frac{d + \log(1/\gamma)}{\alpha^2}
+ \frac{d^{3/2} \polylog\left( \frac{d}{\alpha\gamma\eps} \right)} {\alpha\eps}
+ \frac{d^{3/2} \sqrt{\log\kappa} \polylog\left( \frac{d\log\kappa} {\gamma\eps} \right) }{\eps}
\right)
\, ,
\]
outputs $\wh{\Sigma}$ such that $\left\| \Sigma^{-1/2} \left( \wh{\Sigma} - \Sigma \right) \Sigma^{-1/2} \right\|_2 \leq O(\alpha)$ with probability $1 - O(\gamma)$.
\end{lemma}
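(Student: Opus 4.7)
The plan is to follow the algorithmic template of \cite{gautamHighDimensional} essentially verbatim (the preconditioner-plus-\textsc{NaivePCE} structure sketched in \Cref{subsubsection:algo_variant_overview}) and to verify that every place where their analysis invokes Gaussian concentration can be replaced by a sub-gaussian analog at the cost of only absolute constants. Concretely, I would first prove a sub-gaussian version of their \textsc{NaivePCE} guarantee: for a zero-mean sub-gaussian vector $\vec X$ with $\E[\vec X \vec X^T] = \Sigma \preceq \kappa \mathbb{I}_d$, truncating each sample to the ball of radius $r = \Theta(\sqrt{d\kappa\log(n/\gamma)})$ removes no sample with probability $1-\gamma$ (by the standard sub-gaussian tail bound $\Pr[\|\vec X\|_2 \geq t\sqrt{d\kappa}] \leq 2\exp(-c t^2)$); the per-sample $\ell_2$-sensitivity of the truncated empirical second moment is then $O(r^2/n)$, so adding a symmetric Gaussian matrix calibrated to this sensitivity yields $\rho$-zCDP (and hence the required $(\eps^2/2 + \eps\sqrt{2\log(1/\delta)},\delta)$-DP by \Cref{lemma:zCDP_equivalence}).

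For accuracy, the key tool I would use in place of the Gaussian-specific bounds of \cite{gautamHighDimensional} is the sub-gaussian empirical covariance concentration (Theorem~4.7.1 of the \cite{hdp_book} book): with probability $1-\gamma$,
\[
\left\| \Sigma^{-1/2}\!\left( \tfrac{1}{n}\sum_i \vec X_i \vec X_i^T - \Sigma \right) \Sigma^{-1/2} \right\|_2 \;\leq\; C\,\|\Sigma^{-1/2}\vec X\|_{\psi_2}^{2}\cdot\Big(\sqrt{\tfrac{d+\log(1/\gamma)}{n}} + \tfrac{d+\log(1/\gamma)}{n}\Big),
\]
which handles the first ``statistical'' term $O\big(\sqrt{(d+\log(1/\gamma))/\alpha^2}\big)$ in the sample complexity. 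The noise contribution from the Gaussian mechanism is controlled by standard spectral norm bounds on symmetric Gaussian matrices ($O(\sqrt{d}\cdot\sigma)$ with $\sigma$ proportional to the sensitivity), giving the $d^{3/2}\polylog/(\alpha\eps)$ term. Truncation bias is bounded using sub-gaussian moment estimates to show that the contribution of the removed tail is at most $\alpha$ in Mahalanobis norm. Together these steps yield a sub-gaussian analog of \textsc{NaivePCE} with accuracy $O(\alpha + \kappa\alpha)$, i.e., the naive algorithm has an unwanted linear $\kappa$ dependence, exactly as in the Gaussian case.

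To remove this $\kappa$-dependence, I would then repeat the iterative preconditioner construction (Algorithm~3 of \cite{gautamHighDimensional}): at round $j$, use the sub-gaussian \textsc{NaivePCE} to detect and peel off eigendirections of variance larger than $\kappa_j/2$, producing a symmetric matrix $A_j$ whose application shrinks the largest eigenvalue of the remaining covariance by a constant factor. After $O(\log\kappa)$ rounds the largest eigenvalue is $O(1)$, and a final sub-gaussian \textsc{NaivePCE} call on the preconditioned samples $A\vec X_i$ (where $A$ is the composite preconditioner) returns $\wt\Sigma$ with $\|\wt\Sigma - A\Sigma A\|_2\leq\alpha$; undoing the transformation as $\wh\Sigma = A^{-1}\wt\Sigma A^{-1}$ yields the stated Mahalanobis bound. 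The privacy budget is split across the $O(\log\kappa)$ rounds using advanced composition (\Cref{def:dp-composition}), accounting for the $\sqrt{\log\kappa}$ factor in the third term of $n$; sub-gaussianity of $A\vec X_i$ follows from the first bullet of \Cref{lemma:subg_properties}, so each inner invocation is legitimate.

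The main obstacle I anticipate is the preconditioner's recursive correctness argument: \cite{gautamHighDimensional} repeatedly use Gaussian-specific facts to argue that, after subtracting off the large-variance eigendirections detected in round $j$, the residual random vector is again Gaussian with the expected reduced covariance. In the sub-gaussian setting the residual is only sub-gaussian (not of any closed-form family), so I will need to show carefully that the sub-gaussian norm of $A_j \vec X$ remains controlled uniformly across rounds, i.e., that $\|A_j \vec X\|_{\psi_2}^2$ stays $O(\lambda_{\max}(A_j \Sigma A_j))$ at every step. This follows from $\|A_j\vec X\|_{\psi_2}\leq \|A_j\|_{\mathrm{op}}\|\vec X\|_{\psi_2}$ and an induction on the preconditioner construction, but verifying it round-by-round — together with propagating the failure probability through the composition — is the technically delicate part that I would write out most carefully. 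Everything else is a near-mechanical translation of the Gaussian analysis.
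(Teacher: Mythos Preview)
Your proposal is correct and takes essentially the same approach as the paper: the paper's proof is in fact even terser than yours, simply listing the needed sub-gaussian concentration lemmas (from \cite{dkk} rather than \cite{hdp_book}, but these are the same bounds) and asserting that the \cite{gautamHighDimensional} algorithm carries over because ``the concentration bounds that are utilized readily generalize to the sub-gaussian case.'' The paper also makes exactly your observation that the Hanson--Wright bound on the noise matrix needs no modification since the injected noise remains Gaussian.

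One small remark on the obstacle you flag: the paper does not treat the round-by-round sub-gaussian norm control as a delicate point, and your proposed inequality $\|A_j\vec X\|_{\psi_2}\leq \|A_j\|_{\mathrm{op}}\|\vec X\|_{\psi_2}$ is actually too loose to give $\|A_j\vec X\|_{\psi_2}^2 = O(\lambda_{\max}(A_j\Sigma A_j))$. The cleaner route (implicit in the paper via \Cref{lemma:max_norm_subgaussian}) is to use the standing assumption that the whitened vector $\Sigma^{-1/2}\vec X$ has $O(1)$ sub-gaussian norm, write $A_j\vec X = (A_j\Sigma^{1/2})(\Sigma^{-1/2}\vec X)$, and note $\|A_j\Sigma^{1/2}\|_{\mathrm{op}}^2 = \lambda_{\max}(A_j\Sigma A_j)$; this gives the desired control directly without any induction.
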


\begin{lemma}
[Private Mean Estimation]
\label{lemma:gautam_variant_mean}
For every parameters $\eps, \delta, \gamma, \kappa, \alpha > 0 \, ,$ there exists an $(\frac{\eps^2}{2} + \eps\sqrt{2\log(1/\delta)}, \delta)$-DP algorithm that, when given $n$ i.i.d. samples $\vec X_1, \dots, \vec X_n$ from a sub-gaussian multivariate distribution with mean $\E[\vec X_i] = \vec\mu$ and covariance matrix $\E[\vec X_i \vec X_i^T] = \Sigma$ with $\mathbb{I}_d \preceq \Sigma \preceq \kappa \mathbb{I}_d$ and
\[
n = O\left(
\frac{d \log( \frac{d}{\gamma} )}{\alpha^2}
+ \frac{d\polylog( \frac{d\log(1/\delta)} {\alpha\gamma\eps} )}{\alpha\eps}
+ \frac{\sqrt{d}\log( \frac{d}{\gamma\delta} )}{\eps}
+ \frac{d^{3/2} \sqrt{\log\kappa} \polylog\left( \frac{d\log\kappa} {\gamma\eps} \right)}{\eps}
\right)
\, ,
\]
outputs a (symmetric) matrix $A$ and a vector $\wh{\vec\mu}$ such that $\mathbb{I}_d \preceq A\Sigma A \preceq 1000\mathbb{I}_d$ and $\left\| A ( \wh{\vec\mu} - \vec\mu ) \right\|_2 \leq \alpha$ with probability $1 - O(\gamma)$.
\end{lemma}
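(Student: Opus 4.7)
The plan is to follow the two-phase template of \cite{gautamHighDimensional}: partition the $n$ samples into batches, use the first batch to build a preconditioner $A$ that renders the distribution approximately spherical, and then apply a differentially private univariate mean estimator coordinate-wise in the preconditioned frame. The crucial modification — our technical contribution here — is that every concentration statement in the original (Gaussian) argument must be re-established for sub-gaussian vectors using the properties collected in \Cref{lemma:subg_properties}. Given these replacements, privacy follows verbatim from adaptive composition of the sub-mechanisms (\Cref{def:dp-composition}) and the zCDP-to-$(\eps,\delta)$ conversion in \Cref{lemma:zCDP_equivalence}.

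For the preconditioner, I would pass to the pairwise differences $\vec Z_i = \tfrac{1}{\sqrt{2}}(\vec X_{2i} - \vec X_{2i-1})$, which are zero-mean with the same covariance $\Sigma$, and whose sub-gaussian norm is bounded by a universal constant times $\|\vec X_1\|_{\psi_2}$. Running the iterated \textsc{NaivePCE} procedure (see \Cref{subsubsection:algo_variant_overview}) on $\{\vec Z_i\}$ produces $A$ such that $\mathbb{I}_d \preceq A\Sigma A \preceq 1000\,\mathbb{I}_d$. The analysis reduces, at each of the $O(\log\kappa)$ halving rounds, to showing that the empirical second-moment matrix of zero-mean sub-gaussian samples concentrates around $\Sigma$ in the appropriate Mahalanobis spectral norm; this is available from standard sub-gaussian covariance concentration (matrix Bernstein / Vershynin-type bounds), and accounts for the $d^{3/2}\sqrt{\log\kappa}\,\polylog/\eps$ summand in the sample bound, while the $(\eps,\delta)$-DP Gaussian noise injection inside \textsc{NaivePCE} is agnostic to the distribution of the samples and so transfers unchanged.

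With $A$ in hand, the remaining samples are mapped to $A\vec X_i$, whose coordinates are univariate sub-gaussian random variables of variance in $[1,1000]$ with unknown means $(A\vec\mu)_j$. For each coordinate I invoke the univariate $(\eps',\delta')$-DP mean estimator of \cite{vadhanConfidenceIntervals}, but with range parameter $R=\infty$. What makes this choice admissible is the stability-based histogram step: under sub-gaussian tails, $\Omega(\log(d/(\gamma\delta)))$ samples suffice to guarantee that an overwhelming fraction of the data lies in a single bin of width $O(1)$, so the histogram privately locates an $O(1)$-length confidence interval with probability $1-\gamma/d$; taking $\delta>0$ lets us avoid the a priori bound $R$ on $\|\vec\mu\|_2$ at an additional cost of $\sqrt{d}\log(d/(\gamma\delta))/\eps$ samples. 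The subsequent truncate-and-Laplace-noise step then behaves identically to the bounded-range case, contributing the $d\log(d/\gamma)/\alpha^2$ statistical term and the $d\,\polylog/(\alpha\eps)$ privacy term once a union bound over the $d$ coordinates at per-coordinate accuracy $\alpha/\sqrt{d}$ is applied to yield $\|A(\wh{\vec\mu}-\vec\mu)\|_2\leq \alpha$. Privacy is obtained by running the preconditioner phase and the coordinate-wise estimators under advanced composition with their zCDP parameters summing to $\eps^2/2$, converted to the stated $(\tfrac{\eps^2}{2}+\eps\sqrt{2\log(1/\delta)},\delta)$ guarantee.

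The main obstacle is identifying precisely which Gaussian-specific inequalities in \cite{gautamHighDimensional,vadhanConfidenceIntervals} must be swapped out. Three are pivotal: (i) the Gaussian anti-concentration used in the truncation of \textsc{NaivePCE}, which I would replace by a Hanson--Wright-type bound for sub-gaussian quadratic forms to control $\vec u^T \vec Z\vec Z^T\vec u$ uniformly over $\vec u\in S^{d-1}$; (ii) the Gaussian fourth-moment identity $\E[(\vec u^T\vec X)^4]=3(\vec u^T\Sigma\vec u)^2$, replaced by the sub-gaussian inequality $\E[(\vec u^T\vec X)^4]\leq C\|\vec X\|_{\psi_2}^4$ from \Cref{def:subg}; and (iii) the exponential tail on the empirical mean used by the univariate histogram, which follows directly from sub-gaussianity of $(A\vec X_i)_j$. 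Once these substitutions are in place, the sample-complexity arithmetic of \cite{gautamHighDimensional} goes through unchanged up to universal constants, yielding the four-term bound in the statement.
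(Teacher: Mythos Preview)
Your proposal is correct and follows essentially the same approach as the paper: build the preconditioner $A$ from pairwise differences via iterated \textsc{NaivePCE}, then run the univariate Karwa--Vadhan estimator coordinate-wise on $A\vec X_i$ with $R=\infty$ (admissible because $\delta>0$), combining with a union bound at per-coordinate accuracy $\alpha/\sqrt{d}$. The paper packages the sub-gaussian substitutions by citing the concentration lemmas of \cite{dkk} (their \Cref{lemma:max_norm_subgaussian} and the empirical-covariance bound) rather than invoking Hanson--Wright and moment comparisons directly, but this is a difference of presentation only.
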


First of all, the respective algorithms adumbrated in \Cref{subsubsection:algo_variant_overview} hold for the case of sub-gaussian input random vectors too, because the concentration bounds that are utilized readily generalize to the sub-gaussian case. We provide here the variants of the concentration bounds that are needed for these algorithms, and
we then show how the second modification with respect to the consideration of $R=\infty$ (see \Cref{subsubsection:algo_variant_overview} for this modification) alters the guarantees provided.

By \cite{dkk}, we have the following generalizations of concentration bounds in the sub-gaussian regime:

\begin{lemma}
\label{lemma:max_norm_subgaussian}
Let $\vec X_1, \dots, \vec X_n \in \reals^d$ be $n$ i.i.d. samples from a sub-gaussian multivariate distribution with mean $\E[\vec X_i] = \vec 0$ and covariance matrix $\E[\vec X_i \vec X_i^T] = \Sigma$. Then, with probability $1-O(\gamma)$, it holds that
\[
\left\| \Sigma^{-1/2} \vec X_i \right\|_2^2
\leq
d\log(n/\gamma)
,\, \forall i\in [n]
\, .
\]
\end{lemma}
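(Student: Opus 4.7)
}
The plan is to first normalize the problem to the isotropic case, then apply a quadratic-form tail bound to each sample, and finally union-bound over $i\in[n]$. First I would set $\vec Y_i = \Sigma^{-1/2} \vec X_i$ for each $i\in[n]$. By the definition of sub-gaussian random vectors (\Cref{def:subg}) and the linearity properties in \Cref{lemma:subg_properties}, each $\vec Y_i$ is itself a mean-zero sub-gaussian random vector, now with identity covariance $\E[\vec Y_i \vec Y_i^T] = \mathbb{I}_d$ and sub-gaussian norm bounded by an absolute constant (this is the standing convention for the sub-gaussian family studied in \cite{dkk}, namely that $\|\Sigma^{-1/2}\vec X\|_{\psi_2} = O(1)$; otherwise that constant would appear as a multiplicative factor in the final bound). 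Note that $\left\|\Sigma^{-1/2}\vec X_i\right\|_2^2 = \|\vec Y_i\|_2^2$, so it suffices to bound $\|\vec Y_i\|_2^2$ uniformly over $i\in[n]$.

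Next, I would control $\|\vec Y_i\|_2^2$ via a standard Bernstein-type inequality for sums of sub-exponential random variables, of which $\|\vec Y_i\|_2^2 - d = \sum_j (Y_{i,j}^2 - \E[Y_{i,j}^2])$ is an instance (each $Y_{i,j}^2$ is sub-exponential with bounded norm because $Y_{i,j}$ is a sub-gaussian scalar). The resulting tail estimate takes the form
\[
\Pr\left[\,\|\vec Y_i\|_2^2 \geq d + t\,\right] \leq \exp\!\left(-c\,\min\!\left(\tfrac{t^2}{d},\, t\right)\right)
\]
for a universal constant $c>0$. Alternatively, one can invoke the Hanson--Wright inequality applied to the quadratic form $\vec Y_i^T \mathbb{I}_d \vec Y_i$, which yields the same two-regime tail. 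Choosing $t = \Theta(d\log(n/\gamma))$ (so that $t \geq d$ and the linear regime dominates) makes the right-hand side at most $\gamma/n$, and then $d + t = O(d\log(n/\gamma))$, absorbing the additive $d$ into the stated bound.

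Finally, I would apply a union bound over the $n$ samples, which inflates the failure probability by a factor of $n$ and gives overall failure probability $O(\gamma)$, as required. The main obstacle, if one wanted to remove the convention on the sub-gaussian norm, is accounting for the dependence on $\|\vec Y_i\|_{\psi_2}$ explicitly; in that case the bound would read $O(\|\vec Y_i\|_{\psi_2}^2\, d\log(n/\gamma))$. Under the standard sub-gaussian family convention of \cite{dkk}, this quantity is an absolute constant and can be absorbed into the $O(\cdot)$ notation, yielding exactly the stated inequality.
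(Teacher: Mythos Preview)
The paper does not actually supply a proof of \Cref{lemma:max_norm_subgaussian}; it simply records the lemma as a known concentration bound imported from \cite{dkk} (see the sentence preceding the lemma: ``By \cite{dkk}, we have the following generalizations of concentration bounds in the sub-gaussian regime''). So there is no paper proof to compare against.

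Your proposal is a correct and standard argument for the result: whiten to $\vec Y_i = \Sigma^{-1/2}\vec X_i$, apply a Bernstein/Hanson--Wright tail to the quadratic form $\|\vec Y_i\|_2^2$, and union-bound over $i\in[n]$. You are also right to flag the implicit convention that $\|\Sigma^{-1/2}\vec X_i\|_{\psi_2}=O(1)$, which is exactly the sub-gaussian family used in \cite{dkk} and throughout the paper. One minor remark: the choice $t=\Theta(d\log(n/\gamma))$ is more than necessary---already $t=\Theta(\log(n/\gamma))$ suffices to make the linear-regime tail at most $\gamma/n$, yielding $\|\vec Y_i\|_2^2 \leq d + O(\log(n/\gamma))$, which is in fact stronger than the stated $d\log(n/\gamma)$ bound. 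Either way the conclusion holds.
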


\begin{lemma}
[Sub-gaussian covariance matrix estimation]
Let $\vec X_1, \dots, \vec X_n \in \reals^d$ be $n$ i.i.d. samples from a sub-gaussian multivariate distribution with mean $\E[\vec X_i] = \vec 0$ and covariance matrix $\E[\vec X_i \vec X_i^T] = \Sigma$. Define $\vec Z_i = \Sigma^{-1/2} \vec X_i$ with covariance matrix $\E[\vec Z_i \vec Z_i^T] = \mathbb{I}_d$. Then, with probability $1-O(\gamma)$, all the following hold:
\begin{align*}
\left\| \frac{1}{n} \sum_{i=1}^n \vec Z_i \vec Z_i^T - \mathbb{I}_d \right\|_2 &\leq O\left( \sqrt{\frac{d+\log(1/\gamma)}{n}} \right)
\\
\left( 1 - O\left( \sqrt{\frac{d+\log(1/\gamma)}{n}} \right) \right) \cdot \mathbb{I}_d \preceq \frac{1}{n} &\sum_{i=1}^n \vec Z_i \vec Z_i^T \preceq \left( 1 + O\left( \sqrt{\frac{d+\log(1/\gamma)}{n}} \right) \right) \cdot \mathbb{I}_d
\\
\left\| \frac{1}{n} \sum_{i=1}^n \vec Z_i \vec Z_i^T - \mathbb{I}_d \right\|_F &\leq O\left( \sqrt{\frac{d^2+\log(1/\gamma)}{n}} \right)
\end{align*}
where $\left\| A \right\|_F$ is the Frobenius norm of a matrix $A$, defined as the square root of the sum of the squares of each of its entries.
\end{lemma}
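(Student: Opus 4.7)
The plan is to reduce the three matrix concentration claims to scalar concentration via covering arguments on the unit sphere (for the operator and sandwich bounds) and on the Hilbert--Schmidt unit ball (for the Frobenius bound). Since $\vec Z_i = \Sigma^{-1/2} \vec X_i$ is an isotropic sub-gaussian random vector (by \Cref{def:subg} applied to $\vec X_i$ after whitening, together with \Cref{lemma:subg_properties} for the scaling), there is an absolute constant $K$ such that $\|\vec Z_i\|_{\psi_2} \leq K$; hence for every fixed unit vector $\vec u \in S^{d-1}$ the scalar $\langle \vec Z_i, \vec u\rangle$ is sub-gaussian with $\psi_2$-norm at most $K$, and therefore $\langle \vec Z_i, \vec u\rangle^2 - 1$ is a centred sub-exponential random variable with a universal $\psi_1$-norm.

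For the operator norm bound, I would use the variational characterisation
\[
\Bigl\| \tfrac{1}{n}\sum_{i=1}^n \vec Z_i \vec Z_i^T - \mathbb{I}_d \Bigr\|_2
= \sup_{\vec u \in S^{d-1}} \Bigl|\, \tfrac{1}{n}\sum_{i=1}^n \langle \vec Z_i, \vec u\rangle^2 - 1 \,\Bigr|.
\]
For each fixed $\vec u$, Bernstein's inequality for independent sub-exponentials gives
\[
\Pr\!\Bigl[\,\bigl|\tfrac{1}{n}\textstyle\sum_i \langle \vec Z_i, \vec u\rangle^2 - 1\bigr| \geq t\,\Bigr] \leq 2\exp\!\bigl(-c\,n\min(t^2, t)\bigr).
\]
Taking a standard $1/4$-net $\mathcal{N}$ of $S^{d-1}$ of cardinality at most $9^d$, using that the supremum over $\mathcal{N}$ approximates the true operator norm of a symmetric matrix up to a factor of $2$, and union-bounding, we obtain the bound at level $t = O\bigl(\sqrt{(d+\log(1/\gamma))/n}\bigr)$ with probability at least $1-O(\gamma)$.

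The sandwich inequality is then immediate: if $\|A - \mathbb{I}_d\|_2 \leq \varepsilon$ for a symmetric $A$, then $(1-\varepsilon)\mathbb{I}_d \preceq A \preceq (1+\varepsilon)\mathbb{I}_d$, so this step is cosmetic conditional on the operator bound.

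For the Frobenius bound, my approach is Hanson--Wright style. Writing
\[
\Bigl\| \tfrac{1}{n}\sum_{i=1}^n \vec Z_i \vec Z_i^T - \mathbb{I}_d \Bigr\|_F
= \sup_{\|A\|_F \leq 1} \Bigl\langle A,\; \tfrac{1}{n}\sum_i \vec Z_i \vec Z_i^T - \mathbb{I}_d \Bigr\rangle
= \sup_{\|A\|_F \leq 1} \tfrac{1}{n}\sum_i \bigl(\vec Z_i^T A \vec Z_i - \mathbb{E}[\vec Z_i^T A \vec Z_i]\bigr),
\]
I would invoke the Hanson--Wright inequality $\Pr[|\vec Z^T A \vec Z - \mathbb{E}\vec Z^T A \vec Z| \geq t] \leq 2\exp(-c\min(t^2/\|A\|_F^2,\, t/\|A\|_2))$, apply it to the empirical average of $n$ independent copies (which contracts the tail by $n$ in the Gaussian regime), and then union-bound over a $1/4$-net of the Frobenius unit ball in $\reals^{d \times d}$, which has metric entropy $O(d^2)$. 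This net step is precisely where the $d^2$ in $\sqrt{(d^2+\log(1/\gamma))/n}$ enters, as opposed to the $d$ arising from the unit sphere net.

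The main obstacle I expect is the Frobenius bound: a naive entry-by-entry Bernstein argument with a union bound over the $d^2$ entries is off by a $\sqrt{d}$ factor, so one really does need either Hanson--Wright or a chaining-based argument over the Hilbert--Schmidt ball to match the stated rate. The operator-norm and sandwich claims are routine consequences of $\varepsilon$-net plus Bernstein once the sub-gaussian norm of $\vec Z_i$ has been bounded.
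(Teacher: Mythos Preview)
The paper does not prove this lemma at all: it is stated as a citation from \cite{dkk} (see the sentence ``By \cite{dkk}, we have the following generalizations of concentration bounds in the sub-gaussian regime'' immediately preceding the statement), and is used as a black box. So there is nothing in the paper to compare your argument against.

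That said, your outline is the standard route and is correct. The operator-norm bound via the variational formula, Bernstein for the sub-exponential variables $\langle \vec Z_i,\vec u\rangle^2-1$, and a $1/4$-net on $S^{d-1}$ is exactly the textbook proof (e.g.\ \cite{hdp_book}, Theorem~4.6.1), and the sandwich inequality is indeed an immediate reformulation. For the Frobenius bound, your dualisation to $\sup_{\|A\|_F\le 1}\frac{1}{n}\sum_i(\vec Z_i^T A\vec Z_i-\operatorname{tr}A)$ followed by Hanson--Wright and a net on the Hilbert--Schmidt ball of metric dimension $d^2$ gives the stated rate. One caveat worth flagging: the classical Hanson--Wright inequality is stated for vectors with \emph{independent} sub-gaussian coordinates, whereas the whitened $\vec Z_i$ here need not have independent coordinates. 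You therefore need either a version of Hanson--Wright valid for general sub-gaussian vectors (such results exist, via decoupling or convex concentration), or to note that in every application the paper actually makes of this lemma the underlying vectors are Gaussian, where the classical form applies directly.
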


Note that the necessary bounds for the univariate case are obtained simply by setting $d=1$ in the above lemmata. These are required for adapting the proofs of \cite{vadhanConfidenceIntervals} to the sub-gaussian regime.

Additionally, we remark that we do not need to modify the concentration bound arising from the Hanson-Wright inequality for bounding the norm of the noise matrix that is added according to \textsc{NaivePCE}, since, even in the case that the inputs are sub-gaussian, the added noise is purely Gaussian.

\begin{proof}[Proof Sketch]

We now provide a proof sketch for the modification of the proof of \cite{gautamHighDimensional}. In this sketch, we will use the standard notation to move forward with our variation of the lemmata.

First, we provide an alternative lemma that removes the dependency on a prior bound for $\|\vec\mu\|_2$ when the DP guarantee that we desire to achieve is $\delta > 0$.

\begin{lemma}
\label{lemma:vadhan_variation}
For every $\eps, \delta, \gamma, \kappa, \alpha > 0$, there exists an $(\eps, \delta)$-DP algorithm that, when given $n$ i.i.d. samples $X_1, \dots, X_n$ from a sub-gaussian univariate distribution with mean $\E[X_i] = \mu$ and variance $\E[(X_i - \mu)^2] = \sigma^2$ with $1 \leq \sigma^2 \leq \kappa$ and
\[
n = O\left(
\frac{\log(1/\gamma)}{\alpha^2}
+ \frac{\polylog\left( \frac{\log(1/\delta)}{\alpha\gamma\eps} \right)}{\alpha\eps}
+ \frac{\log(1/\delta) + \log(1/\gamma)}{\eps}
\right)
\, ,
\]
outputs $\wh{\mu}$ such that $|\wh{\mu} - \mu| \leq \alpha\kappa$ with probability $1-\gamma$.
\end{lemma}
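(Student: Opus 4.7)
The plan is to follow the three-step template already outlined in \Cref{subsubsection:algo_variant_overview}, i.e., a sub-gaussian analogue of the Karwa--Vadhan univariate mean estimator of \cite{vadhanConfidenceIntervals}, but to replace the part of their algorithm that relies on an a priori bound $R$ on $|\mu|$ by a \emph{stable histogram} sub-routine \citep{stability_based_histograms_bun}. Using a stable histogram is precisely what forces the guarantee into the $(\eps,\delta)$-DP regime (and is the reason we must take $\delta>0$), but in exchange it succeeds on an unbounded universe and therefore removes any prior range requirement on $\mu$. I would split the privacy budget as $(\eps/2,\delta)$ for localization and $(\eps/2,0)$ for the noisy truncated mean, so that basic composition yields $(\eps,\delta)$-DP overall.

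First phase (coarse localization). Partition $\reals$ into consecutive bins of width $\sqrt{\kappa}$ and feed the multiset $\{X_i\}$ to the stable histogram mechanism. Because $\sigma^2\le\kappa$, sub-gaussian concentration implies that at least a constant fraction of the samples falls in the single bin containing $\mu$; hence, once $n=\Omega\bigl((\log(1/\delta)+\log(1/\gamma))/\eps\bigr)$, with probability $1-\gamma$ the stable histogram releases a bin $\wh{I}=[a,b]$ satisfying $a\le\mu\le b$ and $b-a=\sqrt{\kappa}$. This phase is $(\eps/2,\delta)$-DP and independent of $|\mu|$, which is the key ingredient enabling us to drop the dependency on $R$.

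Second phase (truncated noisy mean). Take $B:=[a-C\sqrt{\kappa\log(n/\gamma)},\,b+C\sqrt{\kappa\log(n/\gamma)}]$ for a sufficiently large absolute constant $C$, clip each sample to $B$ to obtain $\tilde X_i$, and release $\wh{\mu}=\frac{1}{n}\sum_{i=1}^n \tilde X_i+Z$ with $Z\sim\mathrm{Lap}(2|B|/(n\eps))$. Since clipping bounds the per-sample sensitivity by $|B|/n$, the Laplace step is $(\eps/2,0)$-DP, and composition with the first phase gives the claimed $(\eps,\delta)$-DP. For accuracy, decompose $\wh{\mu}-\mu$ into: (a) the untruncated empirical-mean error, bounded by sub-gaussian Hoeffding as $O(\sigma\sqrt{\log(1/\gamma)/n})$; (b) the truncation bias, which vanishes on the event (of probability $1-\gamma$) that every $X_i$ lies in $B$, using $\Pr[|X_i-\mu|>t]\le 2e^{-t^2/(2\sigma^2)}$ together with a union bound and the fact that $\mu\in[a,b]$; and (c) the Laplace perturbation, of magnitude $O(|B|\log(1/\gamma)/(n\eps))=O(\sqrt{\kappa\log(n/\gamma)}\log(1/\gamma)/(n\eps))$. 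Requiring each of (a), (b), (c) to be at most $\alpha\kappa$ and using $\sigma\le\sqrt{\kappa}$, $\kappa\ge 1$ exactly yields the three summands of the claimed sample-complexity bound.

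The main obstacle is conceptual rather than computational: without the stable-histogram step, one would need an a priori range $R$ on $|\mu|$ to discretize the grid, which is why pure-DP counterparts (and the bound in \cite{vadhanConfidenceIntervals}) pay a $\log R$ term. The use of $(\eps,\delta)$-DP stable histograms is what buys unboundedness, at the cost of the additive $(\log(1/\delta)+\log(1/\gamma))/\eps$ threshold term appearing in the sample complexity. Once localization succeeds, the remainder is essentially the Karwa--Vadhan analysis, with every Gaussian tail or variance estimate replaced by its sub-gaussian analogue; such replacements carry over verbatim thanks to the sub-gaussian concentration inequalities recorded in \Cref{appendix:variants-subgaussian}.
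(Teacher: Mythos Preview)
Your proposal is correct and follows essentially the same approach the paper outlines. The paper does not give a formal proof of this lemma; it only sketches the algorithm in \Cref{subsubsection:algo_variant_overview} (stable histogram over bins of width $\sigma$ to localize, then truncate and add Laplace noise), and your write-up fleshes out exactly this two-phase scheme with the correct error decomposition and the correct identification of the $\log(1/\delta)/\eps$ term as the price of replacing a bounded grid by the stability-based histogram. One minor point: the stable histogram need not return the bin that literally contains $\mu$, only one within $O(\sqrt{\kappa})$ of it, so ``$a\le\mu\le b$'' should be weakened to ``$\mu$ lies within $O(\sqrt{\kappa})$ of $[a,b]$''; your buffer in $B$ already absorbs this, so the argument is unaffected.
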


Generalizing this algorithm to the multivariate case, by following the $\textsc{NaivePME}$ algorithm referenced in \Cref{subsubsection:algo_variant_overview}, we obtain the following lemma, in the proof sketch of which we show only the modifications required.

\begin{lemma}
\label{lemma:vadhan_generalization_with_bad_kappa}
For every $\eps, \delta, \gamma, \kappa, \alpha > 0$, there exists an $(\frac{\eps^2}{2} + \eps\sqrt{2\log(1/\delta)}, \delta)$-DP algorithm that, when given $n$ i.i.d. samples $\vec X_1, \dots, \vec X_n$ from a sub-gaussian multivariate distribution with mean $\E[\vec X_i] = \vec\mu$ and covariance matrix $\E[\vec X_i \vec X_i^T] = \Sigma$ with $\mathbb{I}_d \preceq \Sigma \preceq \kappa \mathbb{I}_d$ and
\[
n = O\left(
\frac{\kappa^2 d\log(d/\gamma)}{\alpha^2}
+ \frac{\kappa d \polylog\left( \frac{\kappa d\log(1/\delta)} {\alpha\gamma\eps} \right)}{\alpha\eps}
+ \frac{\sqrt{d} \left( \log(1/\delta) + \log(d/\gamma) \right)}{\eps}
\right)
\, ,
\]
outputs $\wh{\vec\mu}$ such that $\| \wh{\vec\mu} - \vec\mu \|_2 \leq \alpha$ with probability $1-\gamma$.
\end{lemma}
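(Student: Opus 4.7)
The plan is to prove Lemma \ref{lemma:vadhan_generalization_with_bad_kappa} by lifting the univariate sub-gaussian mean estimator of Lemma \ref{lemma:vadhan_variation} to the multivariate setting via a coordinate-wise application (the \textsc{NaivePME} approach outlined in \Cref{subsubsection:algo_variant_overview}). The algorithm will, in parallel on the same $n$ samples, invoke the univariate estimator of Lemma \ref{lemma:vadhan_variation} on each coordinate $j \in [d]$ of the vectors $\vec X_i$, producing estimates $\wh\mu_j$, and then output $\wh{\vec\mu} = (\wh\mu_1,\ldots,\wh\mu_d)$.

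For accuracy, observe that for each $j$, the marginal $(\vec X_i)_j$ is univariate sub-gaussian with mean $\mu_j$ and variance $\Sigma_{jj} \in [1,\kappa]$, so Lemma \ref{lemma:vadhan_variation} applies per coordinate. I would invoke it with per-coordinate accuracy $\alpha' = \alpha/(\kappa\sqrt{d})$ and per-coordinate confidence $\gamma' = \gamma/d$. Then, with probability at least $1-\gamma$ (union bound over $d$ coordinates), every coordinate satisfies $|\wh\mu_j - \mu_j| \leq \alpha'\kappa = \alpha/\sqrt{d}$, which immediately gives
\[
\|\wh{\vec\mu}-\vec\mu\|_2^2 \;=\; \sum_{j=1}^d (\wh\mu_j - \mu_j)^2 \;\leq\; d\cdot \frac{\alpha^2}{d} \;=\; \alpha^2\,.
\]

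For the privacy guarantee, I would run each of the $d$ univariate instances with parameters $(\eps_0,\delta_0)$ where $\eps_0 = \Theta(\eps/\sqrt{d})$ and $\delta_0 = \delta/(2d)$. Applying the advanced composition bound from Fact \ref{def:dp-composition} with auxiliary parameter $\delta_1 = \delta/2$ produces a composite mechanism that is $(\eps_0\sqrt{6d\log(2/\delta)},\,\delta)$-DP, which, for the choice of $\eps_0$ above, matches the target $(\tfrac{\eps^2}{2} + \eps\sqrt{2\log(1/\delta)},\delta)$-DP up to absolute constants. (Equivalently, if the univariate estimator is derived from $\tfrac{\eps_0^2}{2}$-zCDP primitives as in \cite{gautamHighDimensional,vadhanConfidenceIntervals}, the $d$-fold composition yields $\tfrac{\eps^2}{2}$-zCDP exactly and then Lemma \ref{lemma:zCDP_equivalence} gives the stated $(\eps,\delta)$-form without loss; either route works.)

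Finally, for the sample complexity, plugging $(\alpha',\gamma',\eps_0,\delta_0) = (\alpha/(\kappa\sqrt{d}),\,\gamma/d,\,\Theta(\eps/\sqrt{d}),\,\delta/(2d))$ into the three terms of the univariate bound from Lemma \ref{lemma:vadhan_variation} yields, respectively, $\kappa^2 d\log(d/\gamma)/\alpha^2$, $\kappa d \cdot \polylog(\cdot)/(\alpha\eps)$, and $\sqrt{d}(\log(1/\delta)+\log(d/\gamma))/\eps$, which after absorbing logarithmic constants into the $\polylog$ factor exactly reproduce the claimed bound. The main obstacle is the careful bookkeeping in the privacy step: the target guarantee is stated in the zCDP-derived form $(\eps^2/2 + \eps\sqrt{2\log(1/\delta)},\delta)$-DP, so one must either use zCDP composition (clean but requires viewing the underlying univariate primitive as zCDP) or pay a small extra $\delta_1$ in advanced $(\eps,\delta)$-DP composition; calibrating $\eps_0$ and $\delta_0$ so both privacy and the precise sample complexity terms land simultaneously is the only delicate part of the argument.
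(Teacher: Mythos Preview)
Your proposal is correct and takes essentially the same approach as the paper: coordinate-wise application of the univariate estimator (Lemma~\ref{lemma:vadhan_variation}) with per-coordinate accuracy $\alpha/\sqrt{d}$, followed by the bound $\|\wh{\vec\mu}-\vec\mu\|_2 \le \sqrt{d}\max_j|\wh\mu_j-\mu_j|$. The paper's proof sketch is far terser---it only writes out this last accuracy inequality and defers the parameter settings to ``as described in \textsc{NaivePME}''---so your explicit bookkeeping on $(\alpha',\gamma',\eps_0,\delta_0)$ and the privacy composition simply fills in what the paper leaves implicit.
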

\begin{proof}[Proof Sketch]
Following the procedure of \Cref{lemma:vadhan_variation} for each dimension of the multivariate vectors $\vec X_i$ with the appropriate parameter settings as described in $\textsc{NaivePME}$, we obtain the following final result:
\[
\| \wh{\vec\mu} - \vec\mu \|_2 \leq \sqrt{d} \max_{1\leq i\leq d} |\wh{\mu}_i - \mu_i| \leq \sqrt{d} \left(\frac{\alpha}{\sqrt{d}}\right) = \alpha
\, ,
\]
as desired.
\end{proof}

The final step is to use the private ``preconditioner'' $A$ in order to reduce the condition number of $\Sigma$ (which is at most $\kappa$) to at most a constant, and obtain the final bound of the lemma that we seek. Again, we show the modification of the bound, hinging on \Cref{lemma:vadhan_generalization_with_bad_kappa} which is used to output the final estimate $\wh{\vec\mu} = A^{-1} \wt{\vec\mu}$ from the estimate $\wt{\vec\mu}$ of the mean of the variables $A\vec X_i$, as follows by the lemma with $\kappa=O(1)$:

\[
\left\| A ( \wh{\vec\mu} - \vec\mu ) \right\|_2 = \left\| \wt{\vec\mu} - A\vec\mu \right\|_2 \leq \alpha
\, .
\]
\end{proof}

\section{Proof of \Cref{theorem:main-result1}}
\label{appendix:proof-main-result1}

We divide the proof of \Cref{theorem:main-result1} in a series of claims. For convenience, we restate the (stronger) version of the Theorem that we will prove here:

\begin{theorem}
[Privacy and Accuracy of $\wh{\vec\beta}$ in Private Least Squares Fitting]
Under \Cref{asm:gauss1} with parameters $(\kappa, c, \rho, R)$, for all privacy parameters $\eps, \delta > 0$, accuracy parameters $\alpha, \eta > 0$ and confidence $\gamma \in (0,1)$, \textsc{PrivLearnLSE} (defined in \Cref{algo:betahat}) is $(\frac{\eps^2}{2} + \eps\sqrt{2\log(1/\delta)}, \delta)$-differentially private.
Moreover, if the number of labeled examples is at least: 
\begin{align}
n &= O\left(
\frac{d \log( \frac{d}{\gamma} )}{\eta^2}
+ \frac{d\polylog( \frac{d\log(1/\delta)} {\eta\gamma\eps} )}{\eta\eps}
+ \frac{d^{3/2} \sqrt{\log(\kappa\rho c)} \polylog\left( \frac{d\log(\kappa\rho c)} {\gamma\eps\delta} \right)}{\eps}
\right)
\nonumber \\
& + O\left(
(1+R) \left(
\frac{d \log( \frac{d}{\gamma} )}{\alpha^2}
+ \frac{d^{3/2} \polylog\left( \frac{d\log(1/\delta)}{\alpha\gamma\eps} \right)} {\alpha\eps}
+ \frac{d^{3/2} \sqrt{\log\kappa} \polylog\left( \frac{d\log\kappa} {\gamma\eps} \right) }{\eps}
\right)
\right)
\, ,
\label{eq:ubound}
\end{align}
then with probability at least $1-O(\gamma)$ an estimate $\wh{\vec\beta} \in \reals^d$ is successfully output and along with the LSE $\vec\beta^*$ satisfies:
\begin{align}
\label{eq:final_ineq}
\left\| \wh{\vec\beta} - \vec\beta^\star \right\|_2^2
\leq \left\| \wh{\vec w} - \vec w^\star \right\|_2^2
\leq O\left( \alpha^2 \right)
\cdot \left\|\vec w^\star\right\|_2^2
+ O\left( \eta^2 \right) \cdot c^2
\, ,
\end{align}
where $\wh{\vec w} = \Sigma^{1/2} \wh{\vec\beta} ~\text{and}~ \vec w^\star = \Sigma^{1/2} \vec\beta^{\star}$. Finally, \textsc{PrivLearnLSE} runs in $\poly(n)$ time.
\end{theorem}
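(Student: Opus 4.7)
I would split the argument into a privacy part and an accuracy part, and handle them independently.

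\textbf{Privacy.} The plan is to view \textsc{PrivLearnLSE} as an adaptive composition of two subroutines, namely \textsc{LearnGaussian-hd} on the inputs $\{\vec X_i\}$ (Line~6) and \textsc{LearnSubGaussian-hd} on the inputs $\{y_i\vec X_i\}$ (Line~8). The key observation is that swapping one pair $(\vec X_i, y_i)$ for a neighboring pair in the input of \textsc{PrivLearnLSE} induces a neighboring change in each subroutine's input, so both subroutines still enjoy their native $\tfrac{\eps^2}{2}$-zCDP guarantees (\Cref{lemma:gautam_variant_covariance,lemma:gautam_variant_mean}). By zCDP composition we obtain an overall $\eps^2$-zCDP algorithm, and via \Cref{lemma:zCDP_equivalence} this converts to the stated $(\tfrac{\eps^2}{2}+\eps\sqrt{2\log(1/\delta)},\delta)$-DP guarantee. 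I will make sure to discuss the (harmless) invertibility-failure event in Line~10, which is absorbed by the $O(\gamma)$ failure probability.

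\textbf{Accuracy decomposition.} Following the technical overview, I would introduce the auxiliary quantities
\[
Q_1 = \wh{\Sigma}_{\vec X} + \wh{\vec\mu}_{\vec X}\wh{\vec\mu}_{\vec X}^T - \tfrac{1}{n}X^TX,\qquad
\vec Q_2 = \wh{\vec\mu}_{\vec X, y} - \tfrac{1}{n}X^T\vec y,
\]
and, using the normal equation $(X^TX/n)\vec\beta^\star = X^T\vec y/n$, derive the identity
\[
\wh{\vec\beta} - \vec\beta^\star = \mathrm{M}^{-1}\bigl(\vec Q_2 - Q_1\vec\beta^\star\bigr),
\qquad \mathrm{M}=\wh{\Sigma}_{\vec X}+\wh{\vec\mu}_{\vec X}\wh{\vec\mu}_{\vec X}^T.
\]
Passing to $\vec w^\star=\Sigma^{1/2}\vec\beta^\star$ and $\wh{\vec w}=\Sigma^{1/2}\wh{\vec\beta}$ gives
\[
\wh{\vec w}-\vec w^\star = \bigl(\Sigma^{1/2}\mathrm{M}^{-1}\Sigma^{1/2}\bigr)\bigl(-\Sigma^{-1/2}Q_1\Sigma^{-1/2}\vec w^\star + \Sigma^{-1/2}\vec Q_2\bigr),
\]
which is the natural normalized decomposition. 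Then $\|\wh{\vec\beta}-\vec\beta^\star\|_2 \le \|\wh{\vec w}-\vec w^\star\|_2$ (since $\Sigma\succeq\mathbb{I}_d$), so it suffices to control three quantities: (i) the spectral factor $\|\Sigma^{1/2}\mathrm{M}^{-1}\Sigma^{1/2}\|_2$; (ii) the Mahalanobis-rescaled matrix perturbation $\|\Sigma^{-1/2}Q_1\Sigma^{-1/2}\|_2$; and (iii) the vector error $\|\Sigma^{-1/2}\vec Q_2\|_2$.

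\textbf{Bounding the three pieces.} For (i) and (ii), I would split $Q_1$ further as
\[
\Sigma^{-1/2}Q_1\Sigma^{-1/2} = \Sigma^{-1/2}(\wh{\Sigma}_{\vec X}-\Sigma)\Sigma^{-1/2} + \Sigma^{-1/2}(\wh{\vec\mu}_{\vec X}\wh{\vec\mu}_{\vec X}^T-\vec\mu\vec\mu^T)\Sigma^{-1/2} + \Sigma^{-1/2}\bigl(\Sigma+\vec\mu\vec\mu^T-\tfrac{1}{n}X^TX\bigr)\Sigma^{-1/2}.
\]
The first term is controlled directly by the relative covariance error of \Cref{lemma:gautam_variant_covariance} applied to the preconditioned samples, which is the step that avoids a polynomial dependence on $\kappa$; the second term is controlled via the $A$-mean guarantee of \Cref{lemma:gautam_variant_mean} together with the bound $\|\vec\mu\|_2\le R$ (this is the source of the $(1+R)$ factor in the sample complexity); the third is the standard Wishart-type concentration of the empirical covariance of i.i.d.\ Gaussians. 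For (iii), I would first verify that $y_i\vec X_i$ is sub-gaussian (since $|y_i|\le c$ almost surely, \Cref{lemma:subg_properties} gives $\|y_i\vec X_i\|_{\psi_2}\le c\|\vec X_i\|_{\psi_2}$) with covariance whose eigenvalues lie in $[\rho^{-2},c^2\kappa]$, invoke \Cref{lemma:gautam_variant_mean}, and add the empirical concentration $\|\Sigma^{-1/2}(\E[y\vec X]-X^T\vec y/n)\|_2$ by a sub-gaussian vector tail bound. Once all three pieces are in hand, combining via triangle inequality and choosing the accuracy parameters of the two subroutines as $\Theta(\alpha)$ and $\Theta(\eta)$ yields \Cref{eq:final_ineq}, and aggregating the resulting sample sizes produces \Cref{eq:ubound}.

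\textbf{Main obstacle.} The delicate step is bound (iii): the factors $y_i\vec X_i$ are not Gaussian, so I cannot invoke the original guarantees of \citet{gautamHighDimensional} black-box. This is exactly where I would need to justify the extension to sub-gaussian vectors (\Cref{lemma:gautam_variant_mean} / \textsc{LearnSubGaussian-hd}) and to track the $\rho,c$ dependence through the condition number of the covariance of $y_i\vec X_i$ without losing the $\polylog\kappa$ dependence. A secondary but nontrivial obstacle is converting the privacy of $\{y_i\vec X_i\}$ into privacy with respect to the raw pairs $(\vec X_i,y_i)$: I would argue this by the simple observation that the map $(\vec X_i,y_i)\mapsto y_i\vec X_i$ is deterministic, so neighboring datasets under the pair relation induce neighboring datasets under the product relation.
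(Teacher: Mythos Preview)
Your plan is correct and mirrors the paper's proof: the same $Q_1,\vec Q_2$ decomposition, the same Mahalanobis renormalization $\vec w=\Sigma^{1/2}\vec\beta$, the same three pieces (the paper's \Cref{claim:inverse_sigma_norm}, \Cref{claim:Q1_norm}, \Cref{claim:Q2_norm}), and the same identification of the sub-gaussian extension for $y_i\vec X_i$ as the crux. Two minor slips worth flagging: you never actually say how to bound piece~(i)---it follows at once from $\Sigma^{-1/2}\wh\Sigma_{\vec X}\Sigma^{-1/2}\succeq(1-O(\alpha))\mathbb{I}_d$ (\Cref{lemma:gautam_variant_covariance}) together with $\wh{\vec\mu}_{\vec X}\wh{\vec\mu}_{\vec X}^T\succeq 0$, whereas the paper argues it via the preconditioner $A$ directly---and your third $Q_1$ term $\Sigma+\vec\mu\vec\mu^T-\tfrac{1}{n}X^TX$ is the \emph{uncentered} second-moment deviation, so when you expand it you will pick up cross terms $\vec\mu(\bar{\vec X}-\vec\mu)^T$ that also carry an $R$ factor; the $(1+R)$ in the sample complexity is not sourced solely from your second term.
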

The outline of the proof is as follows. First, we prove the privacy guarantee (see \Cref{subsection:priv_guarant}); the latter follows in a similar fashion as the privacy proofs of the algorithms in \Cref{appendix:variants-subgaussian}, exploiting the extension to $\delta>0$ that we described, and generalizing the obtained privacy for altering both $y_i$ and $\vec X_i$ (see \Cref{section:technical-overview}). In the case of accuracy (see \Cref{app:acc}), we first establish the following inequality:
\[
\left\| \wh{\vec w} - \vec w^\star \right\|_2^2 \leq 2 \left\| \Sigma^{1/2} \left(\wh{\Sigma} + \wh{\vec\mu} \wh{\vec\mu}^T \right)^{-1} \Sigma^{1/2} \right\|_2^2 \left( \left\| \Sigma^{-1/2} Q_1 \Sigma^{-1/2} \right\|_2^2 \cdot \left\| \vec w^\star \right\|_2^2 + \left\| \Sigma^{-1/2} \vec Q_2 \right\|_2^2 \right)
\, .
\]
Via a series of claims (see \Cref{claim:inverse_sigma_norm}, \Cref{claim:Q1_norm}, and \Cref{claim:Q2_norm}) we bound each of the constituent terms of the right-hand-side of this inequality, finally yielding \Cref{eq:final_ineq} with as many samples as in \Cref{eq:ubound}.

\subsection{Proof of Privacy Guarantee}
\label{subsection:priv_guarant}

We show first that
\Cref{algo:betahat}, that computes $\wh{\vec\beta} \in \reals^d$ as in \Cref{theorem:main-result1}, is $(\frac{\eps^2}{2} + \eps\sqrt{2\log(1/\delta)}, \delta)$-DP.
Our dataset consists of $n$ pairs $(\vec X_i, y_i) \in \reals^d \times \reals$, therefore we will look into what happens if one of those pairs is altered: specifically, consider that the pair $(\vec X_i, y_i)$ becomes $(\vec X_i', y_i')$ for some (specific) $i$. The algorithm for $\wh{\vec\beta}$ uses three sub-algorithms, which we claim will be $\left( \frac{1}{3} \left(\frac{\eps^2}{2} + \eps\sqrt{2\log(1/\delta)} \right), \frac{\delta}{3} \right)$-DP each (as described in the algorithm above, it suffices to consider $O(\eps)$ and $O(\delta)$ as parameters of each one), thereby giving us the final result of the claim by the advanced composition properties of differentially private mechanisms (\Cref{def:dp-composition}).

For the covariance estimation algorithm (used both in the covariance estimation of the feature vectors $\vec X_i$ and in the mean estimation of the product $y_i \vec X_i$), it now suffices to show that the interface of the algorithms used in \Cref{subsubsection:algo_variant_overview} with the underlying data, i.e., the \textsc{NaivePCE} algorithm, is differentially private. Indeed, this result arises by computing the sensitivity of the (truncated) empirical covariance in the following way:
\[
\left\| \frac{1}{n} \left( y_i^2 \vec X_i \vec X_i^T - y_i'^2 \vec X_i' \vec X_i'^T \right) \right\|_F
\leq
\frac{1}{n} \left( \left\| y_i \vec X_i \right\|_2^2 + \left\| y_i' \vec X_i' \right\|_2^2 \right)
\leq
O\left( \frac{d \kappa c^2 \log(n/\gamma)}{n} \right)
\, ,
\]
since the truncation happens in accordance with \Cref{lemma:max_norm_subgaussian}, thereby allowing us to add the Gaussian noise of the magnitude prescribed in \textsc{NaivePCE}. Hence, the rest of the algorithms which hinge on \textsc{NaivePCE} are differentially private, due to the differential privacy composition theorems (see \Cref{def:dp-composition}).

In a similar fashion, the algorithm for mean estimation in \cite{vadhanConfidenceIntervals} depends upon the stability-based histogram learner of \cite{stability_based_histograms_bun} which is in turn based on the idea of introducing Laplacian noise to the empirical histogram of a dataset. Denoting $X_{ij}$ as the $j$-th coordinate of the feature vector $\vec X_i$, we notice that the sensitivity of the empirical counting function is $2/n$ regardless of the input variations (at most 2 bins could have their counts altered in the worst case, if some $y_i X_{ij}$ changed its location from the bin it was to another, whereas all the other products had the same locations in bins). This sensitivity is, thus, independent of whether both $y_i$ and $X_{ij}$ were changed in our model, thereby affording the desired level of privacy to the whole algorithm. \qed

\subsection{Proof of Accuracy Guarantee}\label{app:acc}

For simplicity in notation, in what follows we notate $\vec\mu = \vec\mu_{\vec X}$ and $\vec\mu' = \vec\mu_{\vec X, y}$ (likewise, $\wh{\vec\mu} = \wh{\vec\mu}_{\vec X}$ and $\wh{\vec\mu}' = \wh{\vec\mu}_{\vec X, y}$).

We begin by adding and subtracting the quantities of each factor of $\vec\beta^{\star} \in \reals^d$, as follows:
\[
\wh{\vec\beta} - \vec\beta^{\star} = \left(\wh{\Sigma} + \wh{\vec\mu} \wh{\vec\mu}^T \right)^{-1} \left( -Q_1 \vec\beta^{\star} + \vec Q_2 \right)
\Leftrightarrow
\left(\wh{\Sigma} + \wh{\vec\mu} \wh{\vec\mu}^T \right) \left( \wh{\vec\beta} - \vec\beta^{\star} \right) = -Q_1 \vec\beta^{\star} + \vec Q_2
\, ,
\]
where
\[
Q_1 = \wh{\Sigma} + \wh{\vec\mu} \wh{\vec\mu}^T - \frac{1}{n} X^T X
,~~~\text{and}~~~
\vec Q_2 = \wh{\vec\mu}' - \frac{1}{n} X^T \vec y
\, .
\]

Then, substituting the quantities $\vec w = \Sigma^{1/2} \vec\beta$ and moving terms from the left side to the right of the equation, it holds that
\begin{align*}
& \left(\wh{\Sigma} + \wh{\vec\mu} \wh{\vec\mu}^T \right) \left( \wh{\vec\beta} - \vec\beta^{\star} \right) = -Q_1 \vec\beta^{\star} + \vec Q_2 \\
\Leftrightarrow\ & \left(\wh{\Sigma} + \wh{\vec\mu} \wh{\vec\mu}^T \right) \Sigma^{-1/2} \left( \wh{\vec w} - \vec w^\star \right) = -Q_1 \Sigma^{-1/2} \vec w^\star + \vec Q_2 \\
\Leftrightarrow\ &\Sigma^{-1/2} \left(\wh{\Sigma} + \wh{\vec\mu} \wh{\vec\mu}^T \right) \Sigma^{-1/2} \left( \wh{\vec w} - \vec w^\star \right) = -\Sigma^{-1/2} Q_1 \Sigma^{-1/2} \vec w^\star + \Sigma^{-1/2} \vec Q_2 \\
\Leftrightarrow\ &\wh{\vec w} - \vec w^\star = \Sigma^{1/2} \left(\wh{\Sigma} + \wh{\vec\mu} \wh{\vec\mu}^T \right)^{-1} \Sigma^{1/2} \left( -\Sigma^{-1/2} Q_1 \Sigma^{-1/2} \vec w^\star + \Sigma^{-1/2} \vec Q_2 \right)
\, .
\end{align*}

Using Cauchy-Schwartz and the sub-multiplicative property of the spectral norm, we establish the following inequality:
\[
\left\| \wh{\vec w} - \vec w^\star \right\|_2^2 \leq 2 \left\| \Sigma^{1/2} \left(\wh{\Sigma} + \wh{\vec\mu} \wh{\vec\mu}^T \right)^{-1} \Sigma^{1/2} \right\|_2^2 \left( \left\| \Sigma^{-1/2} Q_1 \Sigma^{-1/2} \right\|_2^2 \cdot \left\| \vec w^\star \right\|_2^2 + \left\| \Sigma^{-1/2} \vec Q_2 \right\|_2^2 \right)
\, .
\]
We state three claims that bound the constituent terms of the right-hand-side of this inequality:
\begin{claim}
\label{claim:inverse_sigma_norm}
When
\[
n = \Omega\left(
d
+ \log(1/\gamma)
+ \frac{d^{3/2} \sqrt{\log\kappa} \polylog\left( \frac{d\log\kappa} {\gamma\eps} \right) }{\eps}
\right)
\, ,
\]
the following inequality holds with probability $1-O(\gamma)$: \[
\left\| \Sigma^{1/2} \left(\wh{\Sigma} + \wh{\vec\mu} \wh{\vec\mu}^T \right)^{-1} \Sigma^{1/2} \right\|_2^2
\leq
O\left( 1 \right)
\, .
\]
\end{claim}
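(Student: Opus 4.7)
The plan is to reduce the claim to a one-sided Loewner comparison and then invoke the Mahalanobis-error guarantee of the private covariance estimator used on \textbf{Line 6} of \Cref{algo:betahat}.

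First I would observe that $\wh{\vec\mu}\wh{\vec\mu}^T$ is positive semidefinite, so $\wh{\Sigma} + \wh{\vec\mu}\wh{\vec\mu}^T \succeq \wh{\Sigma}$, and therefore $\bigl(\wh{\Sigma} + \wh{\vec\mu}\wh{\vec\mu}^T\bigr)^{-1} \preceq \wh{\Sigma}^{-1}$. Conjugating by $\Sigma^{1/2}$ preserves the Loewner order, so it is enough to upper bound $\bigl\|\Sigma^{1/2}\wh{\Sigma}^{-1}\Sigma^{1/2}\bigr\|_2$. Notably, this reduction eliminates the estimated mean $\wh{\vec\mu}$ from the analysis, which is consistent with the fact that the sample complexity claimed here does not involve the mean-range parameter $R$ (that quantity will enter only in \Cref{claim:Q2_norm}).

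Next I would invoke the covariance guarantee of \textsc{LearnGaussian-hd} (cf.\ \Cref{theorem:gautam-gaussian} and, in the form we need, \Cref{lemma:gautam_variant_covariance}) with a \emph{constant} accuracy parameter, say $\alpha_0 = 1/2$. Plugging $\alpha_0 = 1/2$ into the sample-complexity expression of \Cref{lemma:gautam_variant_covariance} and absorbing the $d^{3/2}\polylog/\eps$ term into the $d^{3/2}\sqrt{\log\kappa}\,\polylog/\eps$ term yields exactly the bound
\[
n = \Omega\!\left( d + \log(1/\gamma) + \frac{d^{3/2}\sqrt{\log\kappa}\,\polylog\!\bigl(d\log\kappa/(\gamma\eps)\bigr)}{\eps}\right),
\]
and guarantees that, with probability $1 - O(\gamma)$,
\[
\bigl\|\Sigma^{-1/2}(\wh{\Sigma} - \Sigma)\Sigma^{-1/2}\bigr\|_2 \leq 1/2.
\]

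From there, Weyl's inequality gives that every eigenvalue of $\Sigma^{-1/2}\wh{\Sigma}\Sigma^{-1/2}$ lies in $[1/2, 3/2]$, so every eigenvalue of its inverse $\Sigma^{1/2}\wh{\Sigma}^{-1}\Sigma^{1/2}$ lies in $[2/3, 2]$. Hence $\bigl\|\Sigma^{1/2}\wh{\Sigma}^{-1}\Sigma^{1/2}\bigr\|_2 \leq 2$, and combining with the Loewner bound from the first step and squaring gives the desired $O(1)$ estimate. The main (minor) obstacle is the bookkeeping: matching the constant-accuracy call of \textsc{LearnGaussian-hd} against the privacy budget allocated on \textbf{Line 5} of \Cref{algo:betahat}, and confirming that the failure event of the estimator is already captured in the overall $O(\gamma)$ confidence; there is no further probabilistic subtlety.
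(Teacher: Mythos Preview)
Your argument is correct and takes a genuinely different---and more elementary---route than the paper's. The paper's proof opens up the internals of \textsc{LearnGaussian-hd}: it factors through the private preconditioner $A$ (writing $\wh{\Sigma} = A^{-1}\wt{\Sigma}A^{-1}$ and $\wt{\vec\nu}=A\wh{\vec\mu}$), bounds $\|\Sigma^{1/2}A\|_2$ via the preconditioner guarantee $A\Sigma A \preceq 1000\,\mathbb{I}_d$ (\Cref{lemma:preconditioner}), and then bounds $\|(\wt{\Sigma}+\wt{\vec\nu}\wt{\vec\nu}^T)^{-1}\|_2$ by going inside \textsc{NaivePCE} and lower-bounding the minimum eigenvalue of the empirical covariance plus the injected Gaussian noise matrix via Weyl's inequality and random-matrix concentration. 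You instead treat $\wh{\Sigma}$ as a black box satisfying the Mahalanobis guarantee of \Cref{lemma:gautam_variant_covariance} at a constant accuracy level, and use the simple Loewner dominance $(\wh{\Sigma}+\wh{\vec\mu}\wh{\vec\mu}^T)^{-1}\preceq\wh{\Sigma}^{-1}$ to discard the mean term altogether. Your route is shorter and more modular; the paper's route is more self-contained (it does not presuppose the full accuracy statement of \Cref{lemma:gautam_variant_covariance}) and its preconditioner machinery is reused later, e.g.\ in \Cref{fact:inverse_Sigma'_A'_norm}. One small correction to your commentary: the parameter $R$ enters in \Cref{claim:Q1_norm}, not in \Cref{claim:Q2_norm}; this does not affect the proof itself.
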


\begin{claim}
\label{claim:Q1_norm}
For every $\alpha > 0$, when
\[
n = \Omega\left(
(1+R) \left(
\frac{d \log( \frac{d}{\gamma} )}{\alpha^2}
+ \frac{d^{3/2} \polylog\left( \frac{d\log(1/\delta)}{\alpha\gamma\eps} \right)} {\alpha\eps}
+ \frac{d^{3/2} \sqrt{\log\kappa} \polylog\left( \frac{d\log\kappa} {\gamma\eps} \right) }{\eps}
\right)
\right)
\, ,
\]
the following inequality holds with probability $1-O(\gamma)$:
\[
\left\| \Sigma^{-1/2} Q_1 \Sigma^{-1/2} \right\|_2^2 \leq O(\alpha^2)
\, .
\]
\end{claim}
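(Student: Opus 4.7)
The plan is to decompose $Q_1$ via the triangle inequality into three geometrically meaningful pieces, namely the private covariance error, the private mean outer-product error, and the empirical second-moment fluctuation, and then bound each piece separately after conjugation by $\Sigma^{-1/2}$. Concretely, writing $\Sigma + \vec\mu\vec\mu^T = \E[\vec X_i \vec X_i^T]$, I would use
\begin{align*}
Q_1
&= \underbrace{\bigl(\wh{\Sigma} - \Sigma\bigr)}_{=:A_1}
 + \underbrace{\bigl(\wh{\vec\mu}\wh{\vec\mu}^T - \vec\mu\vec\mu^T\bigr)}_{=:A_2}
 + \underbrace{\Bigl(\Sigma + \vec\mu\vec\mu^T - \tfrac{1}{n}\sum_{i=1}^n \vec X_i \vec X_i^T\Bigr)}_{=:A_3},
\end{align*}
so that $\|\Sigma^{-1/2} Q_1 \Sigma^{-1/2}\|_2 \le \sum_j \|\Sigma^{-1/2} A_j \Sigma^{-1/2}\|_2$. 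Taking this to the target $O(\alpha^2)$ bound on the squared norm reduces to showing each summand is $O(\alpha)$.

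For the first term, I would invoke the Mahalanobis-distance accuracy guarantee of \textsc{LearnGaussian-hd} (\Cref{lemma:gautam_variant_covariance} with accuracy parameter on the order of $\alpha/(1+R)$), which directly yields $\|\Sigma^{-1/2} A_1 \Sigma^{-1/2}\|_2 = O(\alpha)$ at the claimed sample cost in $\epsilon$, $\kappa$, and $d$. For the second term, I would use the identity $\wh{\vec\mu}\wh{\vec\mu}^T - \vec\mu\vec\mu^T = (\wh{\vec\mu}-\vec\mu)\wh{\vec\mu}^T + \vec\mu(\wh{\vec\mu}-\vec\mu)^T$, so that
\begin{align*}
\bigl\|\Sigma^{-1/2} A_2 \Sigma^{-1/2}\bigr\|_2
\le \bigl\|\Sigma^{-1/2}(\wh{\vec\mu}-\vec\mu)\bigr\|_2
\bigl(\bigl\|\Sigma^{-1/2}\wh{\vec\mu}\bigr\|_2 + \bigl\|\Sigma^{-1/2}\vec\mu\bigr\|_2\bigr).
\end{align*}
Combining the mean-estimation guarantee of \Cref{lemma:gautam_variant_mean} (which, after rescaling by the preconditioner and using $\mathbb{I}_d \preceq \Sigma$, gives $\|\Sigma^{-1/2}(\wh{\vec\mu}-\vec\mu)\|_2 \le O(\alpha/(1+R))$) with $\|\Sigma^{-1/2}\vec\mu\|_2 \le \|\vec\mu\|_2 \le R$ yields $\|\Sigma^{-1/2} A_2 \Sigma^{-1/2}\|_2 = O(\alpha)$ provided we feed \textsc{LearnGaussian-hd} an accuracy rescaled by $(1+R)^{-1}$, which is precisely the source of the $(1+R)$ factor multiplying the sample complexity in the claim.

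For the third, purely empirical, term I would whiten to $\vec Y_i = \Sigma^{-1/2}\vec X_i \sim \N(\vec\nu,\mathbb{I}_d)$ with $\vec\nu = \Sigma^{-1/2}\vec\mu$ and $\|\vec\nu\|_2 \le R$. Writing $\vec Y_i = \vec\nu + \vec W_i$ with $\vec W_i \sim \N(0,\mathbb{I}_d)$ and expanding the outer product gives
\begin{align*}
\Sigma^{-1/2} A_3 \Sigma^{-1/2}
= -\vec\nu \overline{\vec W}^T - \overline{\vec W}\vec\nu^T - \Bigl(\tfrac{1}{n}\sum_i \vec W_i \vec W_i^T - \mathbb{I}_d\Bigr),
\end{align*}
where $\overline{\vec W} = \tfrac{1}{n}\sum_i \vec W_i$. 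Standard Gaussian concentration (e.g., $\|\overline{\vec W}\|_2 \lesssim \sqrt{(d+\log(1/\gamma))/n}$ and the sharp bound on the spectral deviation of a sample covariance of standard Gaussians) controls this by $O((1+R)\sqrt{(d+\log(1/\gamma))/n})$, which is $\le \alpha$ once $n$ exceeds the first summand of the claim's sample bound. Union-bounding the three events and squaring then yields the stated $O(\alpha^2)$ spectral bound with probability $1-O(\gamma)$.

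The main obstacle I expect is the careful accounting of the $(1+R)$ blow-up and the interplay between the preconditioner $A$ produced by \textsc{LearnGaussian-hd} and the whitening $\Sigma^{-1/2}$: one must verify that $A$ and $\Sigma^{-1/2}$ are within a constant factor of each other (in the sense that $A\Sigma A \succeq \mathbb{I}_d$ combined with $A\Sigma A \preceq O(1)\mathbb{I}_d$ pinches $A$ around $\Sigma^{-1/2}$), so that the mean-estimation accuracy stated in the $A$-norm translates cleanly to the $\Sigma^{-1/2}$-norm without losing a $\sqrt{\kappa}$ factor. Apart from this, the argument is a careful bookkeeping of triangle inequalities and sub-multiplicative norm estimates.
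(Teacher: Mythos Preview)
Your proposal is correct and follows essentially the same approach as the paper. The paper also splits $Q_1$ into the private covariance error $\wh\Sigma-\Sigma$, the private mean outer-product error (expanded as $(\wh{\vec\mu}-\vec\mu)(\wh{\vec\mu}-\vec\mu)^T+(\wh{\vec\mu}-\vec\mu)\vec\mu^T+\vec\mu(\wh{\vec\mu}-\vec\mu)^T$), and the empirical second-moment fluctuation (further decomposed into the centered sample covariance deviation and the empirical-mean cross terms with $\vec\mu$), then bounds each piece using \Cref{lemma:gautam_variant_covariance}, \Cref{lemma:gautam_variant_mean}, and standard Gaussian concentration respectively; your grouping into $A_1,A_2,A_3$ followed by expansion of $A_2$ and $A_3$ yields exactly the same terms, and your remark about translating the preconditioner $A$-norm to the $\Sigma^{-1/2}$-norm via $\mathbb{I}_d\preceq A\Sigma A\preceq O(1)\mathbb{I}_d$ is precisely how the paper handles it (cf.\ \Cref{fact:ASigma_norm}).
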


\begin{claim}
\label{claim:Q2_norm}
For every $\eta > 0$, when
\[
n = \Omega\left(
\frac{d \log( \frac{d}{\gamma} )}{\eta^2}
+ \frac{d\polylog( \frac{d\log(1/\delta)} {\eta\gamma\eps} )}{\eta\eps}
+ \frac{\sqrt{d}\log( \frac{d}{\gamma\delta} )}{\eps}
+ \frac{d^{3/2} \sqrt{\log(\kappa\rho c)} \polylog\left( \frac{d\log(\kappa\rho c)} {\gamma\eps} \right)}{\eps}
\right)
\, ,
\]
the following inequality holds with probability $1-O(\gamma)$:
\[
\left\| \Sigma^{-1/2} \vec Q_2 \right\|_2^2 \leq O(\eta^2) \cdot c^2
\, .
\]
\end{claim}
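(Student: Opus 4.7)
The plan is to bound $\|\Sigma^{-1/2}\vec Q_2\|_2$ by splitting $\vec Q_2$ into a privatization error and an empirical deviation. Writing $\vec\mu' := \E[y_i \vec X_i]$ and $\tilde\Sigma := \mathrm{Cov}(y_i \vec X_i)$, I decompose $\vec Q_2 = (\wh{\vec\mu}_{\vec X,y} - \vec\mu') + (\vec\mu' - \tfrac{1}{n}\sum_{i=1}^{n} y_i \vec X_i)$ and aim to control each term in the $\Sigma^{-1/2}$-norm by $O(c\eta)$. Unlike the Gaussian setting handled in \Cref{claim:Q1_norm}, the summands $y_i\vec X_i$ are only sub-gaussian, which is exactly why \Cref{algo:betahat} invokes \textsc{LearnSubGaussian-hd} and why the correct tool here is \Cref{lemma:gautam_variant_mean}.

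For the empirical deviation, set $\vec Z_i := \Sigma^{-1/2}(y_i \vec X_i - \vec\mu')$. Since $\Sigma^{-1/2}\vec X_i$ is Gaussian with identity covariance and $|y_i|\le c$, \Cref{def:subg} and \Cref{lemma:subg_properties} give $\|\vec Z_i\|_{\psi_2} = O(c)$. The sub-gaussian vector concentration underlying \Cref{appendix:variants-subgaussian} then yields $\bigl\|\tfrac{1}{n}\sum_i \vec Z_i\bigr\|_2^2 = O\bigl(c^2(d+\log(1/\gamma))/n\bigr)$ with probability $1-O(\gamma)$, which is $O(c^2\eta^2)$ once $n = \Omega(d\log(d/\gamma)/\eta^2)$, matching the first term of the sample complexity.

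For the privatization error, the bound $\|\vec Z_i\|_{\psi_2}=O(c)$ translates to $\|\Sigma^{-1/2}\tilde\Sigma\Sigma^{-1/2}\|_2 = O(c^2)$, equivalently $\tilde\Sigma \preceq O(c^2)\Sigma \preceq O(c^2\kappa)\mathbb{I}_d$, while $|y_i|\ge 1/\rho$ together with $\Sigma\succeq\mathbb{I}_d$ gives a matching lower bound $\tilde\Sigma\succeq \Omega(\rho^{-2})\mathbb{I}_d$. I rescale the inputs by $\rho$ to meet the identity lower bound of \Cref{lemma:gautam_variant_mean} and invoke it with accuracy $\tilde\alpha = \Theta(\eta)$ and spectral-norm parameter $\kappa' = O(c^2\rho^2\kappa)$; pulling back to the unscaled problem returns a symmetric preconditioner $A$ with $\mathbb{I}_d\preceq A\tilde\Sigma A\preceq 1000\,\mathbb{I}_d$ and $\|A(\wh{\vec\mu}_{\vec X,y}-\vec\mu')\|_2 = O(\eta)$, contributing the remaining three terms of the sample complexity via $\sqrt{\log\kappa'}=O(\sqrt{\log(\kappa\rho c)})$. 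To transfer to the $\Sigma^{-1/2}$-norm, write $\|\Sigma^{-1/2}\vec v\|_2 \le \|\Sigma^{-1/2}A^{-1}\|_2\cdot\|A\vec v\|_2$ and compute
\[
\|\Sigma^{-1/2}A^{-1}\|_2^2 \;=\; \|\Sigma^{-1/2}A^{-2}\Sigma^{-1/2}\|_2 \;\le\; \|\Sigma^{-1/2}\tilde\Sigma\Sigma^{-1/2}\|_2 \;=\; O(c^2),
\]
where the inequality uses $A^{-2}\preceq\tilde\Sigma$ (which follows from $\mathbb{I}_d\preceq A\tilde\Sigma A$). Hence $\|\Sigma^{-1/2}(\wh{\vec\mu}_{\vec X,y}-\vec\mu')\|_2 = O(c\eta)$, and combining both contributions via the triangle inequality closes the claim.

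\textbf{Main obstacle.} The delicate step is the last display: a naive $\|\Sigma^{-1/2}A^{-1}\|_2 \le \|\Sigma^{-1/2}\|_2\cdot\|A^{-1}\|_2$ would inject an $O(\sqrt{\kappa})$ factor through $\|\Sigma^{-1/2}\|_2$ and destroy the polylogarithmic-in-$\kappa$ sample complexity promised by the claim. Keeping $\Sigma^{-1/2}$ and $A^{-1}$ together and leveraging the sub-gaussian identity $\|\Sigma^{-1/2}\tilde\Sigma\Sigma^{-1/2}\|_2=O(c^2)$ — which itself uses that $y_i\vec X_i$ is sub-gaussian, not merely bounded — is what preserves the tight dependence. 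A secondary subtlety is the $\rho$-rescaling, which must be calibrated so that both the identity-lower-bound hypothesis of \Cref{lemma:gautam_variant_mean} is met and the final $\sqrt{\log\kappa'}$ collapses to exactly $\sqrt{\log(\kappa\rho c)}$ as stated in the claim.
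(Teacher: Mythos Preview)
Your proposal is correct and follows essentially the same approach as the paper: the same decomposition of $\vec Q_2$ into privatization error plus empirical deviation, the same $\rho$-rescaling to meet the hypotheses of \Cref{lemma:gautam_variant_mean} with $\kappa'=O(\kappa\rho^2 c^2)$, and the same key step of showing $\|\Sigma^{-1/2}A^{-1}\|_2^2=O(c^2)$ so that no polynomial $\kappa$-factor leaks into the bound. The only cosmetic difference is that the paper inserts $\tilde\Sigma^{1/2}\tilde\Sigma^{-1/2}$ and bounds $\|\Sigma^{-1/2}\tilde\Sigma^{1/2}\|_2^2\le c^2$ and $\|\tilde\Sigma^{-1/2}A^{-1}\|_2^2\le 1$ separately, whereas you reach the same conclusion directly via the Loewner inequality $A^{-2}\preceq\tilde\Sigma$; the two arguments are equivalent.
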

 We prove each of these claims individually below (see \Cref{sec:proofofc1}--\Cref{sec:proofofc3}).
When combined with a union bound of the respective probabilistic events, these claims give the desired \Cref{theorem:main-result1}. In particular,
we directly obtain \Cref{theorem:main-result1}, since 
\[
\left\| \wh{\vec\beta} - \vec\beta^\star \right\|_2^2
= \left\| \Sigma^{-1/2} \Sigma^{1/2} \left( \wh{\vec\beta} - \vec\beta^\star \right) \right\|_2^2
\leq \left\| \Sigma^{-1/2} \right\|_2^2
\cdot \left\| \wh{\vec w} - \vec w^\star \right\|_2^2
\leq \left\| \wh{\vec w} - \vec w^\star \right\|_2^2
\, ,
\]
by the sub-multiplicative property of the norm and because $\mathbb{I}_d \preceq \Sigma$. \qed

\subsubsection{Proof of \Cref{claim:inverse_sigma_norm}} \label{sec:proofofc1}

Following the covariance estimation procedure, we recall the ``private preconditioner'' matrix $A$ that is used to reduce the effect of the condition number of $\Sigma$ from at most $O(\kappa)$ to at most a constant order factor. More specifically, the following lemma holds:

\begin{lemma}
[Theorem 3.11 of \cite{gautamHighDimensional}]
\label{lemma:preconditioner}
For every $\eps, \delta, \gamma, \alpha, \kappa > 0 \, ,$ there exists an algorithm that, when given $n$ i.i.d. samples $\vec X_1, \dots, \vec X_n$ from a sub-gaussian multivariate distribution with mean $\E[\vec X_i] = \vec\mu$ and covariance matrix $\E[\vec X_i \vec X_i^T] = \Sigma$ with $\mathbb{I}_d \preceq \Sigma \preceq \kappa \mathbb{I}_d$ and
\[
n = O\left(
\frac{d^{3/2} \sqrt{\log\kappa} \polylog\left( \frac{d\log\kappa} {\gamma\eps} \right) }{\eps}
\right)
\, ,
\]
outputs a (symmetric) matrix $A$ (the ``private preconditioner'') such that $ \mathbb{I}_d \preceq A \Sigma A \preceq 1000 \mathbb{I}_d$ with probability $1 - O(\gamma)$.
\end{lemma}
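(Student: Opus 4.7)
The plan is to construct $A$ as a product of $T = O(\log \kappa)$ symmetric ``halving'' rescalings, one per round, following the iterative preconditioning strategy of \cite{gautamHighDimensional} but instantiated with the sub-gaussian concentration bounds listed in \Cref{appendix:variants-subgaussian}. Concretely, set $A_0 = \mathbb{I}_d$ and let $\kappa_0 = \kappa$ be the current upper bound on $\|A_j \Sigma A_j\|_2$. At round $j$, I would (i) apply a sub-gaussian variant of \textsc{NaivePCE} to the transformed samples $A_j \vec X_i$ to obtain a private estimate $\wt{\Sigma}_j$ of $A_j \Sigma A_j$ with spectral error $\le \kappa_j/10$ with probability $1 - \gamma/T$; (ii) compute an eigendecomposition of $\wt{\Sigma}_j$, let $P_j$ be the projector onto eigendirections with eigenvalue $> \kappa_j/2$, and define the symmetric rescaling $B_j = (\mathbb{I}_d - P_j) + \tfrac{1}{\sqrt{2}} P_j$; (iii) set $A_{j+1} = B_j A_j$ (which remains symmetric because $B_j$ and $A_j$ are simultaneously diagonalizable in the eigenbasis of $A_j \Sigma A_j$, up to the spectral error we control).

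A short linear-algebra argument then shows $\|A_{j+1} \Sigma A_{j+1}\|_2 \le 0.7\,\kappa_j$ (the large eigendirections are shrunk by a factor $1/2$, and the error bound $\kappa_j/10$ on $\wt{\Sigma}_j$ prevents both over- and under-shrinkage), while simultaneously $A_{j+1} \Sigma A_{j+1} \succeq \mathbb{I}_d$ is preserved because $\mathbb{I}_d \preceq A_j \Sigma A_j$ and $B_j$ leaves the small-eigenvalue directions untouched. After $T = O(\log \kappa)$ rounds, $0.7^T \kappa \le 1000$, and the final preconditioner $A := A_T$ is symmetric by construction and satisfies $\mathbb{I}_d \preceq A \Sigma A \preceq 1000\, \mathbb{I}_d$, all conditional on the $T$ success events occurring; a union bound costs only a factor $T$ in the failure probability, which can be absorbed by inflating $1/\gamma$ polylogarithmically.

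For privacy, I would invoke advanced composition (\Cref{def:dp-composition}) across the $T$ rounds: run each sub-gaussian \textsc{NaivePCE} with privacy parameter $\eps_j = \Theta\bigl(\eps/\sqrt{T \log(1/\delta)}\bigr)$ and failure budget $\delta/T$, so that the composed mechanism is $(\eps,\delta)$-DP overall (which subsumes the $\rho$-zCDP formulation used in the lemma statement after the equivalence of \Cref{lemma:zCDP_equivalence}). The sensitivity analysis of \textsc{NaivePCE} on sub-gaussian inputs uses the sub-gaussian tail bound of \Cref{lemma:max_norm_subgaussian} to truncate at radius $O(\sqrt{d \log(n/\gamma)})$ and then adds a Gaussian noise matrix calibrated to the resulting Frobenius-norm sensitivity $\tilde{O}(d\kappa_j/n_j)$; matching this to spectral-error $\kappa_j/10$ via the Hanson--Wright bound on Gaussian noise (unchanged from the Gaussian-input case) requires $n_j = \tilde{O}(d^{3/2}/\eps_j) = \tilde{O}(d^{3/2}\sqrt{T \log(1/\delta)}/\eps)$ samples per round. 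Summing over the $T = O(\log \kappa)$ rounds (with a fresh sample split, or equivalently inflating $n$ by a factor $T$) yields the advertised $n = \tilde{O}\bigl(d^{3/2}\sqrt{\log \kappa}\,\polylog(d\log\kappa/(\gamma\eps))/\eps\bigr)$.

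The main obstacle, and the only place where this proof departs materially from \cite{gautamHighDimensional}, is verifying that each of the sub-routines underpinning \textsc{NaivePCE} transfers cleanly from Gaussian to sub-gaussian inputs: specifically, that the truncation radius from \Cref{lemma:max_norm_subgaussian} and the sub-gaussian empirical-covariance bound give the right spectral accuracy after adding the same Gaussian noise matrix. Since the added noise is Gaussian regardless of the input distribution, the Hanson--Wright concentration on the noise is unchanged; the only work is to replace Gaussian tails on $\|\Sigma^{-1/2} \vec X_i\|_2$ and on $\|\frac{1}{n}\sum \vec Z_i \vec Z_i^T - \mathbb{I}_d\|_2$ with the sub-gaussian analogues, which only affects hidden polylog factors already absorbed into the $\tilde{O}(\cdot)$ notation.
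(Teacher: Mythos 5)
Your overall strategy --- iterated halving via a sub-gaussian \textsc{NaivePCE}, with the observation that only the input-side concentration bounds (the truncation radius of \Cref{lemma:max_norm_subgaussian} and the empirical-covariance error) need sub-gaussian replacements while the Hanson--Wright analysis of the added Gaussian noise is untouched --- is exactly the route the paper takes: it imports Theorem~3.11 of \cite{gautamHighDimensional} essentially wholesale and, in \Cref{subsubsection:algo_variant_overview} and \Cref{appendix:variants-subgaussian}, supplies precisely the two sub-gaussian concentration lemmas that you identify as the only places where the argument touches the input distribution.

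There is, however, one step in your accounting that does not deliver the claimed bound. You allocate per-round privacy $\eps_j = \Theta(\eps/\sqrt{T\log(1/\delta)})$, deduce a per-round requirement $n_j = \wt{O}(d^{3/2}\sqrt{T}/\eps)$, and then sum over the $T = O(\log\kappa)$ rounds ``with a fresh sample split, or equivalently inflating $n$ by a factor $T$.'' That gives $n = \wt{O}(d^{3/2}T^{3/2}/\eps) = \wt{O}(d^{3/2}(\log\kappa)^{3/2}/\eps)$, which exceeds the stated $\wt{O}\bigl(d^{3/2}\sqrt{\log\kappa}\,\polylog(d\log\kappa/(\gamma\eps))/\eps\bigr)$ by a full factor of $\log\kappa$: the $\polylog$ is in $\log(d\log\kappa/(\gamma\eps))$, i.e.\ it scales with $\log\log\kappa$, and cannot absorb it. The accounting that actually achieves the stated bound reuses the \emph{same} samples in every round --- privacy is paid through composition (hence the $\sqrt{T}$ degradation in the per-round budget), but the sample requirement is the per-round maximum rather than the sum, which is $\wt{O}(d^{3/2}\sqrt{\log\kappa}/\eps)$ up to polylogs. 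Separately, your justification that $A_{j+1} = B_j A_j$ ``remains symmetric because $B_j$ and $A_j$ are simultaneously diagonalizable \ldots\ up to the spectral error we control'' is not yet an argument: a product of symmetric matrices is symmetric only if they commute exactly, and symmetry of $A$ is genuinely used downstream (e.g.\ the identity $\|\Sigma^{1/2}A\|_2 = \|A\Sigma^{1/2}\|_2$ in \Cref{fact:ASigma_norm}), so this point needs the precise construction from the prior work rather than an approximate-commutation heuristic.
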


Afterwards, the naive private estimation by addition of Gaussian noise through a random Gaussian matrix perturbation to the sample covariance matrix estimate is run, taking as input the ``normalized'' samples $A\vec X_i$, and by denoting $\wt{\Sigma}$ this estimate (which is explained in the proof of \Cref{fact:sigma_tilde_inverse_norm}) and also $\wt{\vec\nu} = A \wh{\vec\mu}$, we have that $\wh{\Sigma} = A^{-1} \wt{\Sigma} A^{-1}$ and therefore with probability $1-O(\gamma) \, ,$
\[
\left\| \Sigma^{1/2} \left(\wh{\Sigma} + \wh{\vec\mu} \wh{\vec\mu}^T \right)^{-1} \Sigma^{1/2} \right\|_2^2
\leq
\left\| \Sigma^{1/2} A \right\|_2^2 \cdot \left\| \left( \wt{\Sigma} + \wt{\vec\nu} \wt{\vec\nu}^T \right)^{-1} \right\|_2^2 \cdot \left\| A \Sigma^{1/2} \right\|_2^2
\, ,
\]
which gives the desired result, due to the following two facts.

\begin{fact}
\label{fact:ASigma_norm}
With probability $1-O(\gamma)$, $ \left\| \Sigma^{1/2} A \right\|_2^2 = \left\| A \Sigma^{1/2} \right\|_2^2 \leq O(1) \, . $
\end{fact}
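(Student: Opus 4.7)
The plan is to reduce the fact to the preconditioner guarantee of \Cref{lemma:preconditioner}. First I would observe that the two spectral norms in the statement are trivially equal: since $\|M\|_2=\|M^T\|_2$ for any matrix $M$, and since both $A$ and $\Sigma^{1/2}$ are symmetric, we have $(\Sigma^{1/2}A)^T = A\Sigma^{1/2}$, so $\|\Sigma^{1/2}A\|_2 = \|A\Sigma^{1/2}\|_2$. Hence it suffices to bound just one of them.

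Next I would use the identity $\|M\|_2^2=\|M^TM\|_2$ applied to $M=\Sigma^{1/2}A$. This gives
\[
\|\Sigma^{1/2}A\|_2^2 \;=\; \bigl\|(\Sigma^{1/2}A)^T(\Sigma^{1/2}A)\bigr\|_2 \;=\; \|A\Sigma^{1/2}\Sigma^{1/2}A\|_2 \;=\; \|A\Sigma A\|_2 ,
\]
where the last step uses symmetry of $\Sigma^{1/2}$ and $A$. So the task reduces to bounding $\|A\Sigma A\|_2$.

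At this point I invoke \Cref{lemma:preconditioner}: under the sample-complexity requirement already assumed in \Cref{claim:inverse_sigma_norm}, the private preconditioner $A$ satisfies $\mathbb{I}_d \preceq A\Sigma A \preceq 1000\,\mathbb{I}_d$ with probability at least $1-O(\gamma)$. Since $A\Sigma A$ is symmetric PSD, its spectral norm equals its largest eigenvalue, which is at most $1000$. Therefore, on this high-probability event,
\[
\|\Sigma^{1/2}A\|_2^2 \;=\; \|A\Sigma A\|_2 \;\leq\; 1000 \;=\; O(1),
\]
and the same bound holds for $\|A\Sigma^{1/2}\|_2^2$ by the equality derived above.

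There is no real obstacle here: the fact is essentially a rewriting of the preconditioner guarantee. The only subtlety is to remember that $A$ is taken symmetric (so that $\Sigma^{1/2}A$ and $A\Sigma^{1/2}$ are transposes of each other), which is explicitly guaranteed by \Cref{lemma:preconditioner}, and to make sure the probability $1-O(\gamma)$ event propagates from the preconditioner guarantee to the conclusion — which it does directly, since this is the only randomness involved in the step.
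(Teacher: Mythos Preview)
Your proposal is correct and follows essentially the same approach as the paper: both reduce to the preconditioner guarantee by writing $\|\Sigma^{1/2}A\|_2^2 = \lambda_{\max}(A\Sigma A)$ via symmetry of $A$ and $\Sigma^{1/2}$, and then invoke \Cref{lemma:preconditioner} to bound this by $1000$ with probability $1-O(\gamma)$.
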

\begin{proof}
Since $A$ is a symmetric square matrix, $\left\| \Sigma^{1/2} A \right\|_2^2 = \left\| A \Sigma^{1/2} \right\|_2^2$. By using the definition of the spectral norm, the fact's statement is immediately obtained:
\[
\left\| \Sigma^{1/2} A \right\|_2^2
= \sigma_\text{max}^2 \left( \Sigma^{1/2} A \right)
= \lambda_\text{max} \left( \left(\Sigma^{1/2} A\right)^T \Sigma^{1/2} A \right)
= \lambda_\text{max} \left( A\Sigma A \right)
\leq 1000
\, ,
\]
since by \Cref{lemma:preconditioner}, $A \Sigma A \preceq 1000 \mathbb{I}_d$ with probability $1-O(\gamma)$.
\end{proof}

\begin{fact}
\label{fact:sigma_tilde_inverse_norm}
With probability $1-O(\gamma)$, when
\[
n = \Omega\left(
d+\log(1/\gamma)
+ \frac{\sqrt{d}\polylog\left( \frac{d}{\eps\gamma} \right)}{\eps}
\right)
\, ,
\]
it holds that
\[
\left\| \left( \wt{\Sigma} + \wt{\vec\nu} \wt{\vec\nu}^T \right)^{-1} \right\|_2^2
\leq 1 + O\left( \sqrt{\frac{d+\log(1/\gamma)}{n}} + \frac{\sqrt{d}\log(1/\gamma)\log(n/\gamma)}{n\eps} \right)
\, .
\]
\end{fact}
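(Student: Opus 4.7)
The plan is to lower bound $\lambda_{\min}(\wt{\Sigma} + \wt{\vec\nu}\wt{\vec\nu}^T)$ by something of the form $1 - O(\cdots)$; since $\|(\wt{\Sigma}+\wt{\vec\nu}\wt{\vec\nu}^T)^{-1}\|_2^2 = 1/\lambda_{\min}(\wt{\Sigma}+\wt{\vec\nu}\wt{\vec\nu}^T)^2$, a first-order expansion $1/(1-x)^2 \leq 1 + O(x)$ for small $x$ immediately yields the claimed bound. Because $\wt{\vec\nu}\wt{\vec\nu}^T \succeq 0$, it suffices to show $\lambda_{\min}(\wt{\Sigma}) \geq 1 - O(\cdots)$, so we focus entirely on $\wt{\Sigma}$.

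First, I would recall that $\wt{\Sigma}$ is produced by running \textsc{NaivePCE} on the preconditioned samples $A\vec X_i$: concretely, take the truncated empirical covariance $\wh{\Sigma}_{\mathrm{emp}} = \frac{1}{n}\sum_i A\vec X_i \vec X_i^T A$ (with samples of norm exceeding the truncation radius replaced by zero), add a symmetric Gaussian noise matrix $N$ calibrated to the per-sample sensitivity, and project the result onto the PSD cone. By \Cref{lemma:preconditioner}, with probability $1-O(\gamma)$ the preconditioner satisfies $\mathbb{I}_d \preceq A\Sigma A \preceq 1000\, \mathbb{I}_d$, so the preconditioned samples have covariance lower-bounded by $\mathbb{I}_d$. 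The sub-gaussian version of \Cref{lemma:max_norm_subgaussian} then ensures $\|A\vec X_i\|_2^2 \leq O(d \log(n/\gamma))$ for all $i$ with probability $1-O(\gamma)$, so the truncation step leaves every sample intact and $\wh{\Sigma}_{\mathrm{emp}}$ equals the untruncated empirical covariance.

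Next, I would invoke the sub-gaussian covariance concentration lemma stated in \Cref{appendix:variants-subgaussian}: applied to the whitened vectors $\vec Z_i = (A\Sigma A)^{-1/2}(A\vec X_i)$, it gives $\bigl\|\frac{1}{n}\sum_i \vec Z_i \vec Z_i^T - \mathbb{I}_d\bigr\|_2 \leq O\bigl(\sqrt{(d+\log(1/\gamma))/n}\bigr)$ with probability $1-O(\gamma)$. Conjugating by $(A\Sigma A)^{1/2}$ and using $A\Sigma A \succeq \mathbb{I}_d$, this yields
\[
\wh{\Sigma}_{\mathrm{emp}} \succeq \Bigl(1 - O\bigl(\sqrt{(d+\log(1/\gamma))/n}\bigr)\Bigr)\,\mathbb{I}_d .
\]
For the noise term, the Gaussian perturbation added by \textsc{NaivePCE} is calibrated to the Frobenius sensitivity of the truncated empirical covariance, and a Hanson--Wright/Gaussian matrix spectral-norm bound gives $\|N\|_2 \leq O\bigl(\sqrt{d}\log(1/\gamma)\log(n/\gamma)/(n\eps)\bigr)$ with probability $1-O(\gamma)$, matching the second term in the stated bound.

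Finally, writing $M = \wh{\Sigma}_{\mathrm{emp}} + N$ and noting that the PSD projection of $M$ replaces negative eigenvalues of $M$ with zero (hence dominates $M$ in the Loewner order), I get $\wt{\Sigma} \succeq M \succeq \wh{\Sigma}_{\mathrm{emp}} - \|N\|_2\,\mathbb{I}_d$, so
\[
\lambda_{\min}(\wt{\Sigma}) \;\geq\; 1 - O\!\left(\sqrt{\tfrac{d+\log(1/\gamma)}{n}}\right) - O\!\left(\tfrac{\sqrt{d}\log(1/\gamma)\log(n/\gamma)}{n\eps}\right) .
\]
Under the sample-size hypothesis $n = \Omega\bigl(d + \log(1/\gamma) + \sqrt{d}\,\polylog(d/(\eps\gamma))/\eps\bigr)$, both error terms are smaller than a sufficiently small constant, so inverting and squaring via $1/(1-x)^2 \leq 1 + O(x)$ yields the claim, after a union bound over the high-probability events above. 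The main obstacle is the bookkeeping around the PSD projection and the noise calibration: one must verify that the truncation radius is actually never triggered (so the sensitivity used to scale $N$ is the right one) and that the spectral-norm tail bound for $N$ is sharp enough to give the stated $\sqrt{d}$ rather than a $d$ factor; everything else reduces to combining \Cref{lemma:preconditioner}, the sub-gaussian concentration inequalities, and a standard PSD-projection monotonicity argument.
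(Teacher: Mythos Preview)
Your proposal is correct and follows essentially the same route as the paper: reduce to lower-bounding $\lambda_{\min}(\wt{\Sigma})$ via the PSD property of $\wt{\vec\nu}\wt{\vec\nu}^T$, then control the empirical-covariance term by sub-gaussian concentration and the Gaussian noise term by a spectral-norm tail bound, combine via Weyl, and invert with a first-order expansion. The only cosmetic differences are that the paper argues the PSD projection is a no-op once the minimum eigenvalue of $M=\wh{\Sigma}_{\mathrm{emp}}+N$ is shown to be positive (rather than using your Loewner-monotonicity observation $\mathrm{proj}_{\mathrm{PSD}}(M)\succeq M$, which is slightly cleaner), and it invokes a specific random-matrix tail lemma from \cite{tao_random_matrices} for $\|N\|_2$ in place of the generic Hanson--Wright/Gaussian bound you cite.
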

\begin{proof}

In order to prove this fact, we need to delve further into the procedure by which $\wt{\Sigma}$ is generated (see \Cref{subsubsection:algo_variant_overview}), i.e., $\textsc{NaivePCE}$. In short, we will use that $\wt{\Sigma}$ is the projection into the PSD cone of the empirical covariance matrix of the inputs (which are the vectors $A\vec X_1, \dots, A\vec X_n$ where $A$ is the above ``preconditioner'' matrix) plus a symmetric random matrix $N$ of small Gaussian perturbations (that serves to enforce the privacy guarantee). Hence, the following holds:
\[
\wt{\Sigma} = \text{proj}_\text{PSD} \left( \frac{1}{n}\sum_{i=1}^n \vec Z_i \vec Z_i^T + N \right)
\, ,
\]
where $\vec Z_i = A \vec X_i$, and the matrix $N$ is a symmetric random matrix with dimension $d\times d$ whose entries $N_{ij}, j\geq i$ are i.i.d. Gaussian random variables with zero mean and standard deviation $\sigma = \frac{d\log(n/\gamma)}{n\eps}$.

By Weyl's inequality for matrices, it is true for two real, symmetric matrices $A,B$ and their sum $A+B$ that $\lambda_\text{min}(A+B) \geq \lambda_\text{min}(A) + \lambda_\text{min}(B)$. Because $\left\| \left( \wt{\Sigma} + \wt{\vec\nu} \wt{\vec\nu}^T \right)^{-1} \right\|_2^2 = \frac{1}{\lambda_\text{min}^2 \left( \wt{\Sigma} + \wt{\vec\nu} \wt{\vec\nu}^T \right)}$, we will prove a lower bound about $\lambda_\text{min}\left( \wt{\Sigma} + \wt{\vec\nu} \wt{\vec\nu}^T \right) \geq \lambda_\text{min}\left( \wt{\Sigma} \right)$ (since $\wt{\vec\nu} \wt{\vec\nu}^T$ is a PSD matrix) in the following way: we will show a bound about the minimum eigenvalue of the inner sum $\frac{1}{n}\sum_{i=1}^n \vec Z_i \vec Z_i^T + N$, and argue that it is positive with high probability $1-O(\gamma)$. This means that the projection of this matrix into the PSD cone is the same as the matrix itself with high probability, therefore the bound on the minimum eigenvalue will hold verbatim.

We begin by tailoring a lemma from random matrix theory referenced in \cite{tao_random_matrices} to the random matrix $N$:
\begin{lemma}
[Concentration of symmetric random matrices with Gaussian entries]
Suppose that the entries $N_{ij}$ for $j\geq i$ of a symmetric matrix $N$ with dimensions $d\times d$ are i.i.d. Gaussian random variables with zero mean and variance $\sigma^2$. Then, there exist universal constants $C, c > 0$ such that the largest singular value of $N$ satisfies for all $A \geq C$:
\[
\Pr\left[ s_\text{max}(N) > A \sigma\sqrt{d} \right] \leq C \exp(-cAd)
\, .
\]
\end{lemma}

The above lemma means that with probability at least $1-\gamma$, we have that the largest singular value of $N$ is at most
\[
s_\text{max}(N) \leq O\left( \frac{\sigma\log(1/\gamma)}{\sqrt{d}} \right)
\, .
\]

Due to the matrix $N$ being square symmetric with dimensions $d\times d$, it holds that its singular values are the absolute values of its eigenvalues, therefore $|\lambda_\text{min}(N)|$ is one of the singular values of $N$ (note that it could even be the largest), hence $|\lambda_\text{min}(N)| \leq s_\text{max}(N)$ and thus
\begin{equation}
\label{eq:lambda_min_N}
\lambda_\text{min}(N)
\geq -s_\text{max}(N)
\geq -O\left( \frac{\sigma\log(1/\gamma)}{\sqrt{d}} \right)
= -O\left( \frac{\sqrt{d}\log(1/\gamma)\log(n/\gamma)}{n\eps} \right)
\, ,
\end{equation}
since we remind the reader that $\sigma = \frac{d\log(n/\gamma)}{n\eps}$.

Lower bounds on the minimum eigenvalue of sample covariance matrices of the form  are well-known in the literature, and we use here a version that appears in \cite{dkk}. Note that the vectors $\vec Z_i = A\vec X_i$, whose covariance matrix we are interested in, have a ``normalized'' distribution with covariance matrix $A\Sigma A^T = A\Sigma A$ (by the symmetry of $A$) that has at most a constant eigenvalue, since $A\Sigma A \preceq 1000\mathbb{I}_d$ by construction of the ``preconditioner'' $A$ with probability $1-O(\gamma)$. Therefore, by classical covariance matrix estimation inequalities for eigenvalues of sample covariance matrices from \cite{dkk}, it holds that with probability $1-O(\gamma)$,
\begin{equation}
\label{eq:lambda_min_sample_cov_matrix}
\lambda_\text{min} \left( \frac{1}{n}\sum_{i=1}^n \vec Z_i \vec Z_i^T \right)
\geq 1 - O\left( \sqrt{\frac{d+\log(1/\gamma)}{n}} \right)
\, .
\end{equation}

To conclude, we combine the lower bounds in eigenvalues of \Cref{eq:lambda_min_N} and \Cref{eq:lambda_min_sample_cov_matrix}. By Weyl's inequality, with probability $1-O(\gamma)$, we have that:
\begin{align*}
\lambda_\text{min} \left( \frac{1}{n}\sum_{i=1}^n \vec Z_i \vec Z_i^T + N \right)
&\geq
\lambda_\text{min}\left( \frac{1}{n}\sum_{i=1}^n \vec Z_i \vec Z_i^T \right)
+ \lambda_\text{min}\left( N \right) \\
&\geq
1 - O\left( \sqrt{\frac{d+\log(1/\gamma)}{n}} + \frac{\sqrt{d}\log(1/\gamma)\log(n/\gamma)}{n\eps} \right)
\, .
\end{align*}

Choosing $n$ such that the above lower bound is positive\footnote{We remark that this eigenvalue lower bound also means that the eigenvalue of $\widetilde{\Sigma} + \widetilde{\nu}\widetilde{\nu}^T = A \left( \widehat{\Sigma} + \widehat{\mu}\widehat{\mu}^T \right) A$ is bounded away from $0$ by means of the chosen sample size $n$. This directly implies, since $A$ is always invertible by \cite{gautamHighDimensional}, that $\mathrm{M} = \widehat{\Sigma} + \widehat{\mu}\widehat{\mu}^T$ in \Cref{algo:betahat} is also invertible with the same high probability, $1-O(\gamma)$.}, i.e., if
\[
n = \Omega\left(
d+\log(1/\gamma)
+ \frac{\sqrt{d}\polylog\left( \frac{d}{\eps\gamma} \right)}{\eps}
\right)
\, ,
\]
then the projection of the matrix sum into the PSD cone is equal to the matrix sum itself, therefore directly arriving at the final result by noting the additional fact that:
\[
\left( \frac{1}{1-O(x)} \right)^2 \leq 1 + O(x)
\, ,
\]
obtainable by a Taylor expansion since we chose $n$ to be at least such that the denominator of the fraction is positive.
\end{proof}

By combining \Cref{fact:ASigma_norm} and \Cref{fact:sigma_tilde_inverse_norm}, we arrive at the final result of \Cref{claim:inverse_sigma_norm}.

\subsubsection{Proof of \Cref{claim:Q1_norm}}
\label{sec:proofofc2}

Initially, breaking $Q_1$ as
\begin{align*}
Q_1
&= \wh{\Sigma} + \wh{\vec\mu} \wh{\vec\mu}^T - \frac{1}{n} X^T X \\
&= \left( \wh{\Sigma} - \Sigma \right) - \left( \frac{1}{n} \sum_{i=1}^n \left( \vec X_i - \vec\mu \right) \left( \vec X_i - \vec\mu \right)^T - \Sigma \right) \\
&~~~ + \left( \wh{\vec\mu} \wh{\vec\mu}^T - \vec\mu \vec\mu^T - \vec\mu \left( \frac{1}{n} \sum_{i=1}^n \vec X_i - \vec\mu \right)^T - \left( \frac{1}{n} \sum_{i=1}^n \vec X_i - \vec\mu \right) \vec\mu^T \right) \\
&= \left( \wh{\Sigma} - \Sigma \right) - \left( \frac{1}{n} \sum_{i=1}^n \left( \vec X_i - \vec\mu \right) \left( \vec X_i - \vec\mu \right)^T - \Sigma \right) - \vec\mu \left( \frac{1}{n} \sum_{i=1}^n \vec X_i - \vec\mu \right)^T - \left( \frac{1}{n} \sum_{i=1}^n \vec X_i - \vec\mu \right) \vec\mu^T \\
&~~~ + \left( \left( \wh{\vec\mu} - \vec\mu \right) \left( \wh{\vec\mu} - \vec\mu \right)^T + \left( \wh{\vec\mu} - \vec\mu \right) \vec\mu^T + \vec\mu \left( \wh{\vec\mu} - \vec\mu \right)^T \right)
\, ,
\end{align*}
it follows that
\begin{align*}
\left\| \Sigma^{-1/2} Q_1 \Sigma^{-1/2} \right\|_2^2
&\leq 2\left\| \Sigma^{-1/2} \left( \wh{\Sigma} - \Sigma \right) \Sigma^{-1/2} \right\|_2^2 \\
&~~~+ 2\left\| \Sigma^{-1/2} \left( \frac{1}{n} \sum_{i=1}^n {\left(\vec X_i - \vec\mu \right) \left( \vec X_i - \vec\mu \right)^T} - \Sigma \right) \Sigma^{-1/2} \right\|_2^2 \\
&~~~+ 2\left\| \vec\mu \right\|_2^2 \cdot \left( 2 \left\| \frac{1}{n} \sum_{i=1}^n \vec V_i \right\|_2^2 + O\left( \left\| \Sigma^{-1/2} \left( \wh{\vec\mu} - \vec\mu \right) \right\|_2^2 \right) \right)
\, ,
\end{align*}
where we have used the variance-normalized vectors $\vec V_i = \Sigma^{-1/2} \left( \vec X_i - \vec\mu \right)$ which have covariance matrix $\mathbb{I}_d$.

We state and prove the two below facts which, along with \Cref{lemma:gautam_variant_covariance}, which we remind to the reader that it is applied to the $2n$ sample differences $\frac{1}{\sqrt{2}}\left( \vec X_{2i} - \vec X_{2i-1} \right)$, so that they have zero mean, and with a similar procedure to \Cref{fact:ASigma_norm} and \Cref{lemma:gautam_variant_mean}, leads to the desired result immediately.

\begin{fact}
\label{fact:empirical_covariance}
For every $\alpha > 0$, with probability $1-O(\gamma)$, when
\[
n = \Omega\left( \frac{d+\log(1/\gamma)}{\alpha^2} \right)
\, ,
\]
it holds that
\[
\left\| \Sigma^{-1/2} \left( \frac{1}{n} \sum_{i=1}^n {\left( \vec X_i - \vec\mu \right) \left( \vec X_i - \vec\mu \right)^T} - \Sigma \right) \Sigma^{-1/2} \right\|_2^2
\leq O(\alpha^2)
\, .
\]
\end{fact}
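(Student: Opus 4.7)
The plan is to reduce this to the standard sample covariance concentration bound stated earlier in the appendix (the Sub-gaussian covariance matrix estimation lemma). First, I would introduce the whitened vectors $\vec Z_i = \Sigma^{-1/2}(\vec X_i - \vec\mu)$ for $i \in [n]$. Since $\vec X_i \sim \calN(\vec\mu, \Sigma)$, these are i.i.d.\ zero-mean Gaussians with covariance $\mathbb{I}_d$, and in particular they are sub-gaussian random vectors with constant sub-gaussian norm (by \Cref{def:subg} and the properties in \Cref{lemma:subg_properties}).

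Next, I would rewrite the quantity of interest algebraically. Using $\vec X_i - \vec\mu = \Sigma^{1/2}\vec Z_i$ and the fact that $\Sigma^{-1/2}\Sigma\Sigma^{-1/2} = \mathbb{I}_d$, we get
\[
\Sigma^{-1/2}\left( \tfrac{1}{n}\sum_{i=1}^n (\vec X_i-\vec\mu)(\vec X_i-\vec\mu)^T - \Sigma\right) \Sigma^{-1/2}
= \tfrac{1}{n}\sum_{i=1}^n \vec Z_i \vec Z_i^T - \mathbb{I}_d\,.
\]
Therefore the spectral norm we want to control is exactly $\bigl\| \tfrac{1}{n}\sum_{i=1}^n \vec Z_i \vec Z_i^T - \mathbb{I}_d\bigr\|_2$.

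At this point, I would invoke directly the Sub-gaussian covariance matrix estimation lemma stated earlier in this appendix (the variant from \cite{dkk} generalized to the sub-gaussian regime), which guarantees that with probability at least $1-O(\gamma)$,
\[
\left\| \tfrac{1}{n}\sum_{i=1}^n \vec Z_i \vec Z_i^T - \mathbb{I}_d \right\|_2 \leq O\!\left(\sqrt{\tfrac{d + \log(1/\gamma)}{n}}\right)\,.
\]
Choosing $n = \Omega\!\bigl((d+\log(1/\gamma))/\alpha^2\bigr)$ forces the right-hand side to be at most $O(\alpha)$, and squaring yields the desired $O(\alpha^2)$ bound on the squared spectral norm.

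There is no real obstacle here: the sample-size condition in the claim's statement matches exactly the dependence coming from the sub-gaussian sample-covariance concentration inequality, and the whitening step is a purely algebraic manipulation. The only subtlety worth flagging is that the concentration lemma is stated for isotropic sub-gaussian vectors (with sub-gaussian norm $O(1)$), which is precisely the case for $\vec Z_i$ since these are standard multivariate Gaussians; any hidden constants depending on the sub-gaussian norm are absorbed into the $O(\cdot)$ notation.
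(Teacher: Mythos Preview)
Your proposal is correct and essentially identical to the paper's proof: the paper also whitens via $\vec V_i = \Sigma^{-1/2}(\vec X_i - \vec\mu)$, rewrites the left-hand side as $\bigl\|\tfrac{1}{n}\sum_i \vec V_i\vec V_i^T - \mathbb{I}_d\bigr\|_2^2$, and then directly invokes the same empirical covariance concentration bound from \cite{dkk} to conclude.
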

\begin{proof}

First of all, we restate the left hand side of the inequality in terms of the variance-normalized vectors $\vec V_i$, as follows:
\[
\left\| \Sigma^{-1/2} \left( \frac{1}{n} \sum_{i=1}^n {\left( \vec X_i - \vec\mu \right) \left( \vec X_i - \vec\mu \right)^T} - \Sigma \right) \Sigma^{-1/2} \right\|_2^2
= \left\| \frac{1}{n} \sum_{i=1}^n {\vec V_i \vec V_i^T} - \mathbb{I}_d \right\|_2^2
\, ,
\]
which we can bound with high probability by the classical empirical covariance estimation concentration bounds in \cite{dkk}. More specifically, with probability $1-O(\gamma)$, we have that:
\[
\left\| \frac{1}{n} \sum_{i=1}^n {\vec V_i \vec V_i^T} - \mathbb{I}_d \right\|_2
= \lambda_\text{max}\left( \frac{1}{n} \sum_{i=1}^n {\vec V_i \vec V_i^T} - \mathbb{I}_d \right)
\leq O\left( \sqrt{\frac{d+\log(1/\gamma)}{n}} \right)
\, ,
\]
which directly implies the desired fact.
\end{proof}

\begin{fact}
\label{fact:empirical_mean_feature_vectors}
For every $\alpha > 0$, with probability $1-O(\gamma)$, when
\[
n = \Omega\left( \frac{d+\log(1/\gamma)}{\alpha^2} \right)
\, ,
\]
it holds that
\[
\left\| \frac{1}{n} \sum_{i=1}^n \vec V_i \right\|_2^2
\leq O(\alpha^2)
\, .
\]
\end{fact}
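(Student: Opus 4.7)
The plan is to exploit the fact that, under \Cref{asm:gauss1}, each feature vector $\vec X_i$ is Gaussian, so the normalized vectors $\vec V_i = \Sigma^{-1/2}(\vec X_i - \vec \mu)$ are i.i.d.\ with $\vec V_i \sim \calN(\vec 0, \mathbb{I}_d)$. Consequently the empirical mean $\bar{\vec V} = \frac{1}{n}\sum_{i=1}^n \vec V_i$ is Gaussian with distribution $\calN(\vec 0, \frac{1}{n}\mathbb{I}_d)$, and $n\|\bar{\vec V}\|_2^2$ is a chi-squared random variable with $d$ degrees of freedom. This immediately reduces the problem to a one-dimensional tail bound.

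Next, I would apply the standard Laurent--Massart concentration inequality for $\chi^2_d$: for every $t>0$,
\[
\Pr\bigl[\chi^2_d \geq d + 2\sqrt{d t} + 2t\bigr] \leq e^{-t}.
\]
Setting $t = \log(1/\gamma)$ and dividing by $n$, with probability at least $1-\gamma$,
\[
\|\bar{\vec V}\|_2^2 \;\leq\; \frac{d + 2\sqrt{d\log(1/\gamma)} + 2\log(1/\gamma)}{n} \;=\; O\!\left(\frac{d + \log(1/\gamma)}{n}\right).
\]
Choosing $n = \Omega((d+\log(1/\gamma))/\alpha^2)$ with a sufficiently large constant yields $\|\bar{\vec V}\|_2^2 \leq O(\alpha^2)$, as required.

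There is no real obstacle: the statement reduces to a routine chi-squared concentration bound once one observes that the $\vec V_i$ are standard Gaussian. The only subtlety worth flagging is that the fact is deliberately phrased in terms of the variance-normalized vectors $\vec V_i$ rather than $\vec X_i - \vec \mu$ directly; working in these normalized coordinates is what prevents a spurious factor of $\kappa$ (from $\Sigma \preceq \kappa\mathbb{I}_d$) from appearing in the sample-complexity bound, which is precisely why this formulation slots cleanly into the decomposition of $\|\Sigma^{-1/2}Q_1\Sigma^{-1/2}\|_2^2$ in the proof of \Cref{claim:Q1_norm}.
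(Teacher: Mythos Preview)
Your proof is correct and follows essentially the same route as the paper: both observe that the $\vec V_i$ are standard Gaussian and apply a concentration bound to the empirical mean to obtain $\|\bar{\vec V}\|_2 \leq O(\sqrt{(d+\log(1/\gamma))/n})$. The only cosmetic difference is that the paper cites the general sub-gaussian concentration bound from \cite{dkk}, whereas you exploit the exact Gaussian structure via the $\chi^2_d$ distribution and Laurent--Massart; both yield the same sample complexity.
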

\begin{proof}

The vector sum inside the desired term has a multivariate Gaussian distribution, and we can bound its $\ell_2$-norm with high probability by the classical sub-gaussian concentration bounds in \cite{dkk}. More specifically, with probability $1-O(\gamma)$, we have that:
\[
\left\| \frac{1}{n} \sum_{i=1}^n \vec V_i \right\|_2
\leq O\left( \sqrt{\frac{d+\log(1/\gamma)}{n}} \right)
\, ,
\]
which directly implies the desired fact.
\end{proof}

Directly combining \Cref{lemma:gautam_variant_covariance}, \Cref{lemma:gautam_variant_mean} with \Cref{fact:empirical_covariance} and \Cref{fact:empirical_mean_feature_vectors}, one obtains the stated \Cref{claim:Q1_norm}.

\subsubsection{Proof of \Cref{claim:Q2_norm}}
\label{sec:proofofc3}

Initially, breaking $\vec Q_2$ as
\[
\vec Q_2
= \wh{\vec\mu}' - \frac{1}{n} X^T \vec y
= \left( \wh{\vec\mu}' - \vec\mu' \right) - \left( \frac{1}{n} X^T \vec y - \vec\mu' \right)
\, ,
\]
it follows that
\[
\left\| \Sigma^{-1/2} \vec Q_2 \right\|_2^2
\leq \left\| \Sigma^{-1/2} A'^{-1} \right\|_2^2
\cdot \left\| A' \left( \wh{\vec\mu}' - \vec\mu' \right) \right\|_2^2
+ \left\| \frac{1}{n} \sum_{i=1}^n y_i \vec V_i - \Sigma^{-1/2} \vec\mu' \right\|_2^2
\, ,
\]
where we have again used the notation of the variance-normalized vectors $\vec V_i = \Sigma^{-1/2} \vec X_i$, and the ``private preconditioner'' matrix $A'$ (in this case, obtained for the mean estimation of the random vectors $y_i \vec X_i$), due to \Cref{lemma:preconditioner}, as detailed below (see the first lines of the proof of \Cref{fact:inverse_Sigma'_A'_norm}).

Before stating and proving the facts which lead to the desired result, we need to prove the sub-gaussianity of the vectors $y_i \vec X_i$ that are crucial for the conditions of \Cref{lemma:preconditioner} and the rest of our proof.

\begin{proposition}
[Sub-gaussianity of $y_i \vec X_i$]
\label{prop:subgaussian_yi_Xi}
Let $\vec X_i$ be a random vector sampled according to a multivariate Gaussian distribution with mean value $\vec\mu$ and covariance matrix $\Sigma$ such that $\mathbb{I}_d \preceq \Sigma$, and $y_i$ be a random variable such that $\frac{1}{\rho} \leq |y_i| \leq c$. Then, $y_i \vec X_i$ are sub-gaussian random vectors with covariance matrix $\Sigma'$ such that
\[
\frac{1}{\rho^2} \mathbb{I}_d \preceq \Sigma' \preceq c^2 \Sigma
\, ,
\]
and sub-gaussian norm
\[
\| y_i \vec X_i \|_{\psi_2} \leq c \| \vec X_i \|_{\psi_2}
\, .
\]
\end{proposition}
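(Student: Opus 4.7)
The plan is to handle the proposition in two largely independent pieces---the sub-gaussianity of $y_i \vec X_i$ together with the norm bound, and the spectral sandwich on $\Sigma'$---both driven by the pointwise bracket $|y_i| \in [1/\rho, c]$ and a conditioning argument on $\vec X_i$.

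For the sub-gaussian norm I would unfold \Cref{def:subg} and reduce to the one-dimensional claim $\|y_i \langle \vec X_i, \vec u\rangle\|_{\psi_2} \leq c\,\|\langle \vec X_i, \vec u\rangle\|_{\psi_2}$ for every fixed unit $\vec u$. Setting $K := \|\langle \vec X_i, \vec u\rangle\|_{\psi_2}$, the almost-sure domination $|y_i \langle \vec X_i, \vec u\rangle| \leq c\,|\langle \vec X_i, \vec u\rangle|$ transfers to the MGF used in the definition of $\|\cdot\|_{\psi_2}$: for every $\lambda$ with $|\lambda| \leq 1/(cK)$ we have $|c\lambda| \leq 1/K$, hence
\[
\E\!\left[\exp(\lambda^2 y_i^2 \langle \vec X_i, \vec u\rangle^2)\right] \leq \E\!\left[\exp((c\lambda)^2 \langle \vec X_i, \vec u\rangle^2)\right] \leq \exp((c\lambda)^2 K^2).
\]
Taking the supremum over $\vec u \in S^{d-1}$ yields both sub-gaussianity of $y_i \vec X_i$ and the bound $\|y_i \vec X_i\|_{\psi_2} \leq c\,\|\vec X_i\|_{\psi_2}$.

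For the spectral bounds I would work with $\Sigma' = \E[y_i^2 \vec X_i \vec X_i^T]$, which is the object actually consumed by the sub-gaussian estimation routines \Cref{lemma:gautam_variant_covariance} and \Cref{lemma:gautam_variant_mean} in the downstream calls of \Cref{algo:betahat}. Conditioning on $\vec X_i$ and using $y_i^2 \in [1/\rho^2, c^2]$ gives the matrix sandwich
\[
\tfrac{1}{\rho^2}\,\E[\vec X_i \vec X_i^T] \;\preceq\; \Sigma' \;\preceq\; c^2\,\E[\vec X_i \vec X_i^T].
\]
The lower inequality then follows from $\E[\vec X_i \vec X_i^T] = \Sigma + \vec\mu \vec\mu^T \succeq \Sigma \succeq \mathbb{I}_d$, yielding $\tfrac{1}{\rho^2}\mathbb{I}_d \preceq \Sigma'$ as claimed.

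The main (and only) obstacle is matching the upper inequality to the stated form $c^2 \Sigma$: my derivation produces $c^2(\Sigma + \vec\mu \vec\mu^T)$. This collapses cleanly to $c^2\Sigma$ under \Cref{asm:glm} (binary regression, where $\vec\mu = \vec 0$), while in the general Least Squares Fitting setting one should read the $\Sigma$ in the proposition as shorthand for the second-moment matrix $\E[\vec X_i \vec X_i^T]$, whose spectral norm is at most $\kappa + R^2$ and is absorbed (up to constants) into the $c^2 \kappa$ parameter passed to \textsc{LearnSubGaussian-hd} in \Cref{algo:betahat}. Either convention suffices to license the downstream invocations of the sub-gaussian mean and covariance estimators in the proofs of \Cref{theorem:main-result1} and \Cref{theorem:main-result2}.
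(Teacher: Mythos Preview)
Your proposal is correct and matches the paper's proof essentially line by line: the paper also bounds the MGF via $|y_i|\le c$ to get $\|y_i\vec X_i\|_{\psi_2}\le c\|\vec X_i\|_{\psi_2}$, and obtains the spectral sandwich by noting that $(c^2-y_i^2)\vec X_i\vec X_i^T$ and $(y_i^2-1/\rho^2)\vec X_i\vec X_i^T$ are pointwise PSD, which is exactly your conditioning argument. You are in fact more careful than the paper on one point: the paper writes $\Sigma'\preceq c^2\Sigma$ without comment, whereas you correctly observe that the argument actually yields $c^2\E[\vec X_i\vec X_i^T]=c^2(\Sigma+\vec\mu\vec\mu^T)$, which only equals $c^2\Sigma$ under \Cref{asm:glm}; your discussion of how this is absorbed downstream is accurate.
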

\begin{proof}
First of all, we have that $\Sigma' = \E\left[ y_i^2 \vec X_i \vec X_i^T \right] \preceq c^2 \Sigma$, since the eigenvalues of $\E\left[ (c^2 - y_i^2) \vec X_i \vec X_i^T \right]$ are non-negative, because the quantity inside the expectation is always a positive semi-definite matrix, and expectation is a linear operator, thus the eigenvalues of the matrix in expectation are also non-negative, by the Courant-Fischer min-max theorem.

Similarly, it can be seen that $\frac{1}{\rho^2} \mathbb{I}_d \preceq \frac{1}{\rho^2} \Sigma \preceq \Sigma'$. We proceed to prove the second part of the proposition, which is the sub-gaussian norm inequality.

We prove the sub-gaussianity of the desired vectors, by \Cref{def:subg}: consider a unit vector $\vec u\in S^{d-1}$, then it holds that
\[
\E\left[ \exp(\lambda^2 y_i^2 \langle \vec X_i, \vec u \rangle^2) \right]
\leq \E\left[ \exp((\lambda c)^2 \langle \vec X_i, \vec u \rangle^2) \right]
\leq \exp(\lambda^2 (cK)^2)
\, ,
\]
for all $\lambda: |\lambda| \leq 1/(Kc)$, where $K$ is the sub-gaussian norm of $\langle \vec X_i, \vec u \rangle$. The sub-gaussian norm follows:
\[
\| y_i \vec X_i \|_{\psi_2} \leq c \| \vec X_i \|_{\psi_2}
\, .
\]
\end{proof}

We are now ready to present the facts that lead to the claim.

\begin{fact}
\label{fact:inverse_Sigma'_A'_norm}
With probability $1-O(\gamma)$, when
\[
n = \Omega\left(
\frac{d^{3/2} \sqrt{\log(\kappa\rho c)} \polylog\left( \frac{d\log(\kappa\rho c)} {\gamma\eps} \right) }{\eps}
\right)
\, ,
\]
it holds that
\[
\left\| \Sigma^{-1/2} A'^{-1} \right\|_2^2
\leq O(c^2)
\, .
\]
\end{fact}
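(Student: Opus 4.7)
The plan is to invoke the sub-gaussian preconditioner (the variant of \Cref{lemma:preconditioner} / Theorem 3.11 of \citet{gautamHighDimensional} extended to the sub-gaussian regime in \Cref{appendix:variants-subgaussian}) on the vectors $y_i \vec X_i$, and then combine the resulting spectral sandwich with the covariance comparison $\Sigma' \preceq c^2\Sigma$ from \Cref{prop:subgaussian_yi_Xi} via a short Loewner-order computation.

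First, I would collect the spectral bounds on $\Sigma' = \E[y_i^2 \vec X_i \vec X_i^T]$ supplied by \Cref{prop:subgaussian_yi_Xi}, namely $\tfrac{1}{\rho^2}\mathbb{I}_d \preceq \Sigma' \preceq c^2\Sigma \preceq c^2\kappa\mathbb{I}_d$, together with the sub-gaussian norm bound $\|y_i\vec X_i\|_{\psi_2} \leq c\|\vec X_i\|_{\psi_2}$ that legitimizes the use of the sub-gaussian variants. Since the preconditioner lemma requires the input covariance to be sandwiched between $\mathbb{I}_d$ and a scaled identity, I would apply it to the rescaled samples $\rho\,y_i\vec X_i$, whose covariance $\rho^2\Sigma'$ obeys $\mathbb{I}_d \preceq \rho^2\Sigma' \preceq c^2\rho^2\kappa\mathbb{I}_d$ (with effective condition number at most $c^2\rho^2\kappa$). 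Invoking the sub-gaussian preconditioner with parameter $\wt\kappa = c^2\rho^2\kappa$ then yields, with probability $1-O(\gamma)$, a symmetric $A''$ such that
\[
\mathbb{I}_d \preceq A''(\rho^2\Sigma')A'' \preceq 1000\,\mathbb{I}_d,
\]
provided $n = \wt{O}\!\bigl(d^{3/2}\sqrt{\log(c^2\rho^2\kappa)}\,\polylog(d\log(c\rho\kappa)/(\gamma\eps))/\eps\bigr)$, which after absorbing constants into the $\wt{O}$ exactly matches the hypothesis $n = \Omega(d^{3/2}\sqrt{\log(\kappa\rho c)}\polylog(\cdot)/\eps)$ of the fact. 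Defining $A' := \rho A''$ converts the bound to $\mathbb{I}_d \preceq A'\Sigma'A' \preceq 1000\,\mathbb{I}_d$, so that $A'$ is the preconditioner implicitly associated with the $y_i\vec X_i$ samples.

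The final step is a short algebraic manipulation in the Loewner order. The inequality $\Sigma' \preceq c^2\Sigma$ inverts to $\Sigma^{-1} \preceq c^2\Sigma'^{-1}$, and conjugating by the symmetric matrix $A'^{-1}$ yields $A'^{-1}\Sigma^{-1}A'^{-1} \preceq c^2\,A'^{-1}\Sigma'^{-1}A'^{-1}$. The lower bound $\mathbb{I}_d \preceq A'\Sigma'A'$ inverts to $A'^{-1}\Sigma'^{-1}A'^{-1} \preceq \mathbb{I}_d$, and chaining gives $A'^{-1}\Sigma^{-1}A'^{-1} \preceq c^2\mathbb{I}_d$. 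Hence
\[
\bigl\|\Sigma^{-1/2}A'^{-1}\bigr\|_2^2 \;=\; \lambda_{\max}\!\bigl(A'^{-1}\Sigma^{-1}A'^{-1}\bigr) \;\leq\; c^2 \;=\; O(c^2),
\]
as required. The only nontrivial point is making sure the preconditioner guarantee really transfers to the sub-gaussian samples $\rho\,y_i\vec X_i$: the sub-gaussian norm control is handed to us by \Cref{prop:subgaussian_yi_Xi}, and the $\mathbb{I}_d$-lower-bound rescaling is legitimized by the ``w.l.o.g.''~remark following \Cref{theorem:gautam-gaussian}. Once both are in place, the bound is a direct consequence of the two-line Loewner chain above.
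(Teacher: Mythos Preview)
Your proposal is correct and follows essentially the same approach as the paper: apply the sub-gaussian preconditioner of \Cref{lemma:preconditioner} to the rescaled samples $\rho\,y_i\vec X_i$ (with effective upper eigenvalue bound $\kappa' = \rho^2 c^2\kappa$), set $A' = \rho A''$ to obtain $\mathbb{I}_d \preceq A'\Sigma'A' \preceq 1000\,\mathbb{I}_d$, and combine this with $\Sigma' \preceq c^2\Sigma$ from \Cref{prop:subgaussian_yi_Xi}. The only cosmetic difference is that the paper inserts $\Sigma'^{1/2}$ as an intermediate and bounds $\|\Sigma^{-1/2}A'^{-1}\|_2^2 \leq \|\Sigma^{-1/2}\Sigma'^{1/2}\|_2^2\cdot\|\Sigma'^{-1/2}A'^{-1}\|_2^2$ via submultiplicativity, whereas you run the equivalent Loewner chain $A'^{-1}\Sigma^{-1}A'^{-1} \preceq c^2 A'^{-1}\Sigma'^{-1}A'^{-1} \preceq c^2\mathbb{I}_d$ directly; both arrive at the same constant $c^2$.
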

\begin{proof}
According to \Cref{prop:subgaussian_yi_Xi}, the conditions of \Cref{lemma:preconditioner} apply to the variables $\rho y_i \vec X_i$ by a change of variables in the sample complexity of $\kappa' = \rho^2 c^2 \kappa$, where we remind to the reader that $\kappa$ is the largest eigenvalue of the covariance matrix of the feature vectors: $\Sigma \preceq \kappa \mathbb{I}_d$.

Therefore, with
\[
n = \Omega\left(
\frac{d^{3/2} \sqrt{\log(\kappa\rho c)} \polylog\left( \frac{d\log(\kappa\rho c)} {\gamma\eps} \right) }{\eps}
\right)
\, ,
\]
we obtain a matrix $A'$ (which is $\rho$ times the $A$ given by the algorithm of \Cref{lemma:preconditioner} as stated in the previous paragraph) such that with probability $1-O(\gamma)$,
\begin{equation}
\label{eq:A'_Sigma'_A'}
\mathbb{I}_d \preceq A' \Sigma' A' \preceq 1000\mathbb{I}_d
\, .
\end{equation}

Note that the knowledge of $\rho$ is \emph{not} required for \Cref{algo:betahat}, since the change of variables that we did only affects the analysis that we performed here (the privacy of the algorithm is solely based on the upper bound $c$ of the labels, and \emph{not} on $\rho$).

Finally, the desired result holds with probability $1-O(\gamma)$:
\[
\left\| \Sigma^{-1/2} A'^{-1} \right\|_2^2
\leq \left\| \Sigma^{-1/2} \Sigma'^{1/2} \right\|_2^2
\cdot \left\| \Sigma'^{-1/2} A'^{-1} \right\|_2^2
\leq c^2 \cdot 1 = c^2
\, ,
\]
since for the two quantities of interest we have separately the following:

By \Cref{prop:subgaussian_yi_Xi} and properties of the positive semi-definite order, we have that
\begin{align*}
&\Sigma' \preceq c^2 \Sigma \\
\Rightarrow\ &c^2 \Sigma'^{-1} \succeq \Sigma^{-1} \\
\Rightarrow\ &\Sigma^{-1} - c^2 \Sigma'^{-1} \preceq O \\
\Rightarrow\ &\Sigma'^{1/2} \Sigma^{-1} \Sigma'^{1/2} \preceq c^2 \mathbb{I}_d \\
\Rightarrow\ & \left\| \Sigma^{-1/2} \Sigma'^{1/2} \right\|_2^2 \leq c^2
\, .
\end{align*}

At the same time, by \Cref{eq:A'_Sigma'_A'}, we obtain the final term:
\[
\left\| \Sigma'^{-1/2} A'^{-1} \right\|_2^2
= \frac{1}{\sigma_\text{min}^2\left( \Sigma'^{1/2} A' \right)}
= \frac{1}{\lambda_\text{min}\left( \left(\Sigma'^{1/2} A'\right)^T \Sigma'^{1/2} A' \right)}
= \frac{1}{\lambda_\text{min}\left( A' \Sigma' A' \right)}
\leq 1
\, .
\]
\end{proof}

\begin{fact}
\label{fact:subgaussian_yi_Xi_mean}
For every $\eta > 0$, with probability $1-O(\gamma)$, when
\[
n = \Omega\left(
\frac{d \log( \frac{d}{\gamma} )}{\eta^2}
+ \frac{d\polylog( \frac{d\log(1/\delta)} {\eta\gamma\eps} )}{\eta\eps}
+ \frac{\sqrt{d}\log( \frac{d}{\gamma\delta} )}{\eps}
+ \frac{d^{3/2} \sqrt{\log(\kappa\rho c)} \polylog\left( \frac{d\log(\kappa\rho c)} {\gamma\eps} \right)}{\eps}
\right)
\, ,
\]
it holds that
\[
\left\| A' \left( \wh{\vec\mu}' - \vec\mu' \right) \right\|_2^2
\leq O(\eta^2)
\, .
\]
\end{fact}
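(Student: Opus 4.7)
The plan is to reduce the claim to a direct application of the private sub-gaussian mean estimation result, \Cref{lemma:gautam_variant_mean}, applied to a rescaled version of the random vectors $y_i \vec X_i$. Concretely, since \Cref{lemma:gautam_variant_mean} requires the covariance of the input to be lower bounded by $\mathbb{I}_d$, I will work with the rescaled vectors $\vec W_i = \rho\, y_i \vec X_i$, which, by \Cref{prop:subgaussian_yi_Xi}, are sub-gaussian with mean $\rho \vec\mu'$ and covariance $\rho^2 \Sigma'$ satisfying
\[
\mathbb{I}_d \preceq \rho^2 \Sigma' \preceq \rho^2 c^2 \kappa\,\mathbb{I}_d\,.
\]
This is precisely the setup of \Cref{lemma:gautam_variant_mean} with effective upper bound $\kappa' := \rho^2 c^2 \kappa$ on the covariance spectrum, and the algorithmic output is exactly what is computed in \textbf{Line 8} of \Cref{algo:betahat}, up to the $\rho$ rescaling.

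Next, I will invoke \Cref{lemma:gautam_variant_mean} on $\{\vec W_i\}$ with accuracy parameter $\eta$, privacy parameters $(\eps,\delta)$, confidence $\gamma$, and covariance-norm bound $\kappa'$. This produces, with probability $1-O(\gamma)$, a symmetric matrix $A''$ and an estimate $\wh{\vec\mu}_W$ of $\rho \vec\mu'$ such that
\[
\mathbb{I}_d \preceq A''\bigl(\rho^2 \Sigma'\bigr) A'' \preceq 1000\,\mathbb{I}_d
\quad\text{and}\quad
\bigl\| A''(\wh{\vec\mu}_W - \rho \vec\mu') \bigr\|_2 \leq \eta\,.
\]
The sample complexity provided by \Cref{lemma:gautam_variant_mean} uses $\sqrt{\log \kappa'} = O(\sqrt{\log(\kappa\rho c)})$, which is precisely the last term appearing in the statement of the fact, while the first three terms of the stated sample complexity correspond verbatim to the first three terms of \Cref{lemma:gautam_variant_mean}.

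To finish, I will translate back to the original (unscaled) problem by identifying $A' = \rho\, A''$ and $\wh{\vec\mu}' = \wh{\vec\mu}_W/\rho$, consistent with how the preconditioner $A'$ was constructed in the proof of \Cref{fact:inverse_Sigma'_A'_norm}. Then
\[
A'\bigl(\wh{\vec\mu}' - \vec\mu'\bigr)
= \rho A'' \cdot \tfrac{1}{\rho}\bigl(\wh{\vec\mu}_W - \rho \vec\mu'\bigr)
= A''\bigl(\wh{\vec\mu}_W - \rho\vec\mu'\bigr),
\]
whose $\ell_2$-norm is at most $\eta$ by the preceding step, so squaring yields $\|A'(\wh{\vec\mu}' - \vec\mu')\|_2^2 \leq \eta^2 = O(\eta^2)$, which is the desired bound.

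The main (minor) obstacle is purely bookkeeping: verifying that the rescaling by $\rho$ does not affect anything other than replacing $\kappa$ by $\kappa' = \rho^2 c^2 \kappa$ inside the $\sqrt{\log \kappa'}$ factor of the sample-complexity expression, and checking that the knowledge of $\rho$ is not actually needed for the algorithm itself (the rescaling exists only in the analysis, since differential privacy is invariant under this fixed reparameterization). Combined with the high-probability guarantee $1-O(\gamma)$ inherited from \Cref{lemma:gautam_variant_mean}, this completes the proof of \Cref{fact:subgaussian_yi_Xi_mean}.
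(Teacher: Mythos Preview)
Your proposal is correct and takes essentially the same approach as the paper, which simply states that the fact is a direct implication of \Cref{prop:subgaussian_yi_Xi} guaranteeing the conditions of \Cref{lemma:gautam_variant_mean}. You have explicitly spelled out the $\rho$-rescaling and the identification $A' = \rho A''$ that the paper leaves implicit (but alludes to in the proof of \Cref{fact:inverse_Sigma'_A'_norm}), and your bookkeeping is accurate.
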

\begin{proof}
This fact is a direct implication of \Cref{prop:subgaussian_yi_Xi} which guarantees the conditions for \Cref{lemma:gautam_variant_mean} to hold.
\end{proof}

\begin{fact}
\label{fact:empirical_mean_y_X}
For every $\eta > 0$, with probability $1-O(\gamma)$, when
\[
n = \Omega\left(
\frac{d + \log(1/\gamma)}{\eta^2}
\right)
\, ,
\]
it holds that
\[
\left\| \frac{1}{n} \sum_{i=1}^n y_i \vec V_i - \Sigma^{-1/2} \vec\mu' \right\|_2^2
\leq O(\eta^2) \cdot c^2
\, .
\]
\end{fact}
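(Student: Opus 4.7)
The plan is to recognize the target quantity as the Euclidean norm of the deviation of an i.i.d.~empirical mean from its expectation, in the preconditioned geometry induced by $\Sigma^{-1/2}$, and to bound it via a standard sub-gaussian concentration inequality. Concretely, since $\vec V_i = \Sigma^{-1/2}\vec X_i$, I would set $\vec W_i := \Sigma^{-1/2}(y_i \vec X_i - \vec\mu')$ and rewrite
\[
\frac{1}{n}\sum_{i=1}^n y_i\vec V_i - \Sigma^{-1/2}\vec\mu' \;=\; \frac{1}{n}\sum_{i=1}^n \vec W_i,
\]
where the $\vec W_i$ are i.i.d.~and zero-mean. It then suffices to control $\bigl\|\tfrac{1}{n}\sum_i \vec W_i\bigr\|_2$ with high probability.

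Next, I would establish sub-gaussian properties of $\vec W_i$. By \Cref{prop:subgaussian_yi_Xi}, each $y_i \vec X_i$ is sub-gaussian with covariance $\Sigma' \preceq c^2 \Sigma$ and $\|y_i\vec X_i\|_{\psi_2}\leq c\,\|\vec X_i\|_{\psi_2}$. Conjugating by $\Sigma^{-1/2}$ and subtracting $\vec\mu'$ therefore yields
\[
\Cov(\vec W_i) \;=\; \Sigma^{-1/2}(\Sigma' - \vec\mu'\vec\mu'^T)\Sigma^{-1/2} \;\preceq\; c^2\,\mathbb{I}_d,
\]
so in particular $\mathrm{tr}\,\Cov(\vec W_i) \leq d\,c^2$. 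Combining this with \Cref{lemma:subg_properties} (centering preserves $\|\cdot\|_{\psi_2}$ up to a universal constant, and scaling by a contraction does not increase it), together with $\|\Sigma^{-1/2}\|_2 \leq 1$ -- which follows from $\mathbb{I}_d\preceq\Sigma$ -- I expect to obtain $\|\vec W_i\|_{\psi_2} = O(c)$ in the preconditioned coordinate system.

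Finally, I would invoke the standard sub-gaussian concentration inequality for the $\ell_2$-norm of a sample mean of i.i.d.~zero-mean sub-gaussian random vectors (as already used in the proof of \Cref{fact:empirical_mean_feature_vectors}, and generalized to the sub-gaussian regime in \Cref{appendix:variants-subgaussian}), yielding, with probability at least $1-O(\gamma)$,
\[
\Bigl\|\tfrac{1}{n}\sum_{i=1}^n \vec W_i\Bigr\|_2 \;\leq\; O\!\left(c\sqrt{\tfrac{d+\log(1/\gamma)}{n}}\right).
\]
Squaring and choosing $n = \Omega\bigl((d+\log(1/\gamma))/\eta^2\bigr)$ gives $\bigl\|\tfrac{1}{n}\sum_i \vec W_i\bigr\|_2^2 \leq O(\eta^2)\cdot c^2$, which matches both the bound and the sample complexity in the statement.

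The main technical obstacle is obtaining a sharp bound on $\|\vec W_i\|_{\psi_2}$ without accumulating spurious factors of $R$ or $\kappa$: a naive bound on $\|\vec X_i\|_{\psi_2}$ alone would depend on $\|\vec\mu\|_2\leq R$ and on $\lambda_{\max}(\Sigma)\leq\kappa$. Exploiting the preconditioning by $\Sigma^{-1/2}$ carefully -- which normalizes each directional projection of $\vec X_i$ to have unit variance -- is essential. Should this fail to give the clean $O(c)$ bound, a safer alternative is the centered decomposition $\vec W_i = y_i \Sigma^{-1/2}(\vec X_i - \vec\mu) + (y_i - \E y_i)\,\Sigma^{-1/2}\vec\mu$, reducing the problem to two cleaner concentration statements (a sub-gaussian \emph{vector} bound on the first summand, whose projections have unit-variance Gaussian factors scaled by the bounded $y_i$, and a \emph{scalar} sub-gaussian bound on $\tfrac{1}{n}\sum_i(y_i - \E y_i)$ applied to the fixed vector $\Sigma^{-1/2}\vec\mu$) that separately yield the claimed rate.
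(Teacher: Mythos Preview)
Your proposal is correct and follows essentially the same approach as the paper: both rewrite the quantity as an empirical mean of i.i.d.\ centered sub-gaussian vectors $y_i\vec V_i/c$ (equivalently, your $\vec W_i$), bound the sub-gaussian norm by $O(c)$ via \Cref{prop:subgaussian_yi_Xi} and \Cref{lemma:subg_properties}, and then apply a standard high-dimensional sub-gaussian concentration bound. The only cosmetic difference is that the paper normalizes by $c$ up front and then proves a tailored concentration lemma (\Cref{lemma:subg-concentr}) by first passing to the $(\Sigma'')^{-1/2}$-whitened vectors and using $\Sigma''\preceq\mathbb{I}_d$, whereas you invoke the concentration inequality as a black box; your explicit flag about avoiding spurious $R$-dependence (and the fallback decomposition) is a point the paper glosses over.
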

\begin{proof}
We will prove that, under the stated conditions,
\[
\left\| \frac{1}{n} \sum_{i=1}^n \frac{y_i \vec V_i}{c} - \Sigma^{-1/2} \frac{\vec\mu'}{c} \right\|_2
\leq O(\eta)
\, ,
\]
and the result will follow.

First of all, we prove that $\frac{y_i \vec V_i}{c}$ is sub-gaussian and calculate a bound on its sub-gaussian norm. Similarly to the sub-gaussianity of \Cref{prop:subgaussian_yi_Xi}, and by \Cref{lemma:subg_properties}, we have that
\[
\left\| \frac{y_i \vec V_i}{c} \right\|_{\psi_2}
=\frac{ \left\| y_i \vec V_i \right\|_{\psi_2} }{c}
\leq C_1\left\| \vec V_i \right\|_{\psi_2}
\leq C_2
\, ,
\]
for some universal constants $C_1, C_2>0$, since $\vec V_i = \Sigma^{-1/2} \vec X_i$ are variance-normalized random vectors.

Then, noting that
\[
\E\left[ \frac{y_i \vec V_i}{c} \right]
=
\Sigma^{-1/2} \frac{\vec\mu'}{c}
\, ,
\]
and by \Cref{lemma:subg_properties}, it holds that the (centered) quantity $\frac{y_i \vec V_i}{c} - \Sigma^{-1/2} \frac{\vec\mu'}{c}$ is also sub-gaussian with sub-gaussian norm at most a constant times the sub-gaussian norm of the non-centered random vector $\frac{y_i \vec V_i}{c}$. Notice that the covariance matrix of the centered quantity above is $\preceq \mathbb{I}_d$. We will leverage this relationship, alongside the sub-gaussianity of the quantity, to prove the final concentration inequality, from which the fact follows:

\begin{lemma}
\label{lemma:subg-concentr}
There exist universal constants $A, B > 0$ such that, for all $t>0$,
\[
\Pr\left[ \left\| \frac{1}{n} \sum_{i=1}^n \frac{y_i \vec V_i}{c} - \E\left[\frac{y_i \vec V_i}{c} \right] \right\|_2 > t \right]
\leq
4\exp\left( Ad - Bnt^2 \right)
\]
\end{lemma}
\begin{proof}
We denote the covariance matrix of $\frac{y_i \vec V_i}{c}$ as $\Sigma''$, for which it holds that $\Sigma'' \preceq \mathbb{I}_d$, and we also name the variance-normalized random vectors $\left(\Sigma''\right)^{-1/2} \frac{y_i \vec V_i}{c}$ as $\vec W_i$ (therefore, $\E[\vec W_i \vec W_i^T] = \mathbb{I}_d$).

By a classical result of sub-gaussian concentration inequalities (see, e.g., Lemma 2.21 of \cite{dkk}), we have that there exist universal constants $A, B > 0$ such that, for all $t>0$,
\[
\Pr\left[ \left\| \frac{1}{n} \sum_{i=1}^n \vec W_i - \E\left[ \vec W_i \right] \right\|_2 > t \right]
\leq
4\exp\left( Ad - Bnt^2 \right)
\, .
\]

Additionally, by definition of the spectral norm, and since $\Sigma'' \preceq \mathbb{I}_d$, we have that:
\begin{align*}
\left\| \frac{1}{n} \sum_{i=1}^n \vec W_i - \E\left[ \vec W_i \right] \right\|_2
&\geq
\left\| \left( \Sigma'' \right)^{1/2} \right\|_2
\left\| \frac{1}{n} \sum_{i=1}^n \vec W_i - \E\left[ \vec W_i \right] \right\|_2
\\ &\geq
\left\| \left( \Sigma'' \right)^{1/2} \left( \frac{1}{n} \sum_{i=1}^n \vec W_i - \E\left[ \vec W_i \right] \right) \right\|_2
\, ,
\end{align*}
namely, that:
\begin{align*}
\Pr\left[ \left\| \frac{1}{n} \sum_{i=1}^n \frac{y_i \vec V_i}{c} - \E\left[\frac{y_i \vec V_i}{c} \right] \right\|_2 > t \right]
&=
\Pr\left[ \left\| \left( \Sigma'' \right)^{1/2} \left( \frac{1}{n} \sum_{i=1}^n \vec W_i - \E\left[ \vec W_i \right] \right) \right\|_2 > t \right]
\\ &\leq
\Pr\left[ \left\| \frac{1}{n} \sum_{i=1}^n \vec W_i - \E\left[ \vec W_i \right] \right\|_2 > t \right]
\\ &\leq
4\exp\left( Ad - Bnt^2 \right)
\, ,
\end{align*}
as desired.
\end{proof}

Utilizing \Cref{lemma:subg-concentr}, \Cref{fact:empirical_mean_y_X} follows.
\end{proof}

Directly combining \Cref{fact:inverse_Sigma'_A'_norm}, \Cref{fact:subgaussian_yi_Xi_mean}, and \Cref{fact:empirical_mean_y_X}, we obtain the stated \Cref{claim:Q2_norm}.

\section{Proof of \Cref{theorem:main-result2}}
\label{appendix:proof-main-result2}

Again, for convenience, we restate here the (stonger) version of \Cref{theorem:main-result2} that we will prove here:

\begin{theorem}
[Privacy and Accuracy of $\wh{\vec\beta}$ in Private Binary Regression]
Under \Cref{asm:gauss1} with covariance parameter $\kappa$ and \Cref{asm:glm} with true parameter $\vec \beta \in \reals^d$, for every privacy parameters $\eps, \delta > 0$, accuracy parameters $\alpha, \eta > 0$ and confidence $\gamma \in (0,1)$,  \textsc{PrivLearnLSE} (defined in \Cref{algo:betahat}) with $\wh{\vec \mu}_{\vec X} = \vec 0$  is  $(\frac{\eps^2}{2} + \eps\sqrt{2\log(1/\delta)}, \delta)$-differentially private. Moreover, if the number of labeled examples is at least:
\begin{align*}
n &= O\left(
\frac{d \log( \frac{d}{\gamma} )}{\eta^2}
+ \frac{d\polylog( \frac{d\log(1/\delta)} {\eta\gamma\eps} )}{\eta\eps}
+ \frac{d^{3/2} \sqrt{\log\kappa} \polylog\left( \frac{d\log\kappa} {\gamma\eps\delta} \right)}{\eps}
\right)
\\
& + O\left(
\frac{d \log( \frac{d}{\gamma} )}{\alpha^2}
+ \frac{d^{3/2} \polylog\left( \frac{d\log(1/\delta)}{\alpha\gamma\eps} \right)} {\alpha\eps}
\right)
\, ,
\end{align*}
then with probability at least $1-O(\gamma)$ an estimate $\wh{\vec\beta} \in \reals^d$ is successfully output and satisfies
\begin{align*}
\|\wh{\vec\beta}-k\vec\beta\|_2^2
\leq \left\| \wh{\vec w} - k \vec w \right\|_2^2
\leq O\left( \alpha^2 \right)
\cdot \left( 1 + \left\|k \vec w\right\|_2^2 \right)
+ O\left( \eta^2 \right)
\, ,
\end{align*}
where $\wh{\vec w} = \Sigma^{1/2} \wh{\vec\beta}$, $\vec w = \Sigma^{1/2} \vec\beta$ and $k =
\frac{2n}{n-d-1}
\E%_{\vec X_i \sim \calN\left( \vec 0, \Sigma \right)}
\left[ f'\left( \vec\beta^T \vec X_i \right) \right]
\, .$ Finally, \textsc{PrivLearnLSE}
runs in $\poly(n)$ time. 
\end{theorem}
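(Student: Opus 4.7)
The privacy guarantee is immediate: the algorithm is a composition of \textsc{LearnGaussian-hd} (now invoked only on $\{\vec X_i\}$ to estimate $\Sigma$, since $\vec\mu=\vec 0$) and \textsc{LearnSubGaussian-hd} (on $\{y_i\vec X_i\}$), each run at privacy budget $O(\eps),\Theta(\delta)$; the advanced composition in \Cref{def:dp-composition} gives the stated $(\tfrac{\eps^2}{2}+\eps\sqrt{2\log(1/\delta)},\delta)$-DP bound exactly as in \Cref{theorem:main-result1}, and setting $\wh{\vec\mu}_{\vec X}=\vec 0$ only removes a composed mechanism, which cannot hurt privacy. The rest of the proof is devoted to accuracy.

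\textbf{A decoupled analysis-only reference estimator.} Following \Cref{sec:sketch-2}, I introduce (purely for the analysis) a fresh batch of $n$ independent samples and define
\begin{align*}
\vec\beta_s^\star \;=\; \Bigl(\tfrac{1}{n}\sum_{i=n+1}^{2n} \vec X_i\vec X_i^T\Bigr)^{-1}\Bigl(\tfrac{1}{n}\sum_{i=1}^{n} y_i \vec X_i\Bigr).
\end{align*}
Because the two factors are now statistically independent, I can compute $\E[\vec\beta_s^\star]$ in closed form. Using the inverse-Wishart mean $\E\bigl[(\tfrac{1}{n}\sum \vec X_i \vec X_i^T)^{-1}\bigr]=\tfrac{n}{n-d-1}\Sigma^{-1}$ \citep{anderson_multivariate_statistics} (which requires $n\ge d+2$ and is the source of the combinatorial factor in $k$), together with Stein's Lemma (\Cref{lemma:stein-lemma}) applied to $\vec X_i\sim\calN(\vec 0,\Sigma)$ and the map $g(\vec X_i)=\E[y_i\mid \vec X_i]=2f(\vec\beta^T\vec X_i)-1$, which yields $\E[y_i\vec X_i]=\Cov(\vec X_i,g(\vec X_i))=2\,\Sigma\vec\beta\cdot \E[f'(\vec\beta^T\vec X_i)]$, I get $\E[\vec\beta_s^\star]=k\vec\beta$ with exactly the $k$ of the theorem.

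\textbf{Reduction to \Cref{theorem:main-result1} plus a concentration step.} I split the error by triangle inequality in the $\Sigma^{1/2}$-norm:
\begin{align*}
\|\wh{\vec\beta}-k\vec\beta\|_2 \;\le\; \|\Sigma^{1/2}(\wh{\vec\beta}-\vec\beta_s^\star)\|_2 \;+\; \|\Sigma^{1/2}(\vec\beta_s^\star-k\vec\beta)\|_2,
\end{align*}
legitimately replacing the outer $\|\cdot\|_2$ by the $\Sigma^{1/2}$-norm since $\Sigma\succeq\mathbb{I}_d$. For the first term, I repeat the decomposition of \Cref{app:acc} with $\vec\beta^\star$ replaced by $\vec\beta_s^\star$: setting $Q_1=\wh{\Sigma}_{\vec X}-\tfrac{1}{n}\sum_{i=n+1}^{2n}\vec X_i\vec X_i^T$ and $\vec Q_2=\wh{\vec\mu}_{\vec X,y}-\tfrac{1}{n}\sum_{i=1}^n y_i\vec X_i$ gives $\wh{\vec\beta}-\vec\beta_s^\star = \wh{\Sigma}_{\vec X}^{-1}(-Q_1\vec\beta_s^\star+\vec Q_2)$, to which Claims~\ref{claim:inverse_sigma_norm}--\ref{claim:Q2_norm} apply directly (the $R=0$ case eliminates the $(1+R)$ term and the $\rho$-factor disappears because here $|y_i|\le 1$, yielding the simplified sample complexity). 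For the second term, I further split
\begin{align*}
\vec\beta_s^\star - k\vec\beta \;=\; S_n^{-1}\Bigl(\tfrac{1}{n}\sum_{i=1}^n y_i\vec X_i - 2\Sigma\vec\beta\,\E[f']\Bigr) \;+\; \bigl(S_n^{-1}-\tfrac{n}{n-d-1}\Sigma^{-1}\bigr)\cdot 2\Sigma\vec\beta\,\E[f'],
\end{align*}
where $S_n=\tfrac{1}{n}\sum_{i=n+1}^{2n}\vec X_i\vec X_i^T$. The first summand is sub-gaussian sample-mean concentration, since $y_i\vec X_i$ is sub-gaussian (\Cref{prop:subgaussian_yi_Xi}); the second is standard Gaussian-covariance concentration for Wishart matrices in $\Sigma$-normalized geometry. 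Both contribute errors of order $\alpha(1+\|\Sigma^{1/2}k\vec\beta\|_2)+\eta$ once $n=\widetilde\Omega(d/\alpha^2+d/(\alpha\eps))$.

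\textbf{Main obstacle.} The step that requires the most care is the mean computation $\E[\vec\beta_s^\star]=k\vec\beta$: it silently uses independence of the two sample halves (so that the expectation of a product of matrix and vector factorizes) and also uses the inverse-Wishart identity, which is valid only for $n\ge d+2$ and whose proportionality constant is precisely what forces the combinatorial factor $\tfrac{n}{n-d-1}$ in $k$. Once this is in place, the $\|\Sigma^{1/2}\vec\beta_s^\star\|_2^2$ appearing inside the bound of the first triangle summand can be replaced by $\|\Sigma^{1/2}k\vec\beta\|_2^2$ up to lower-order terms absorbed into $\alpha^2$ (by the concentration just established), yielding the final form $O(\alpha^2)(1+\|k\Sigma^{1/2}\vec\beta\|_2^2)+O(\eta^2)$ as stated. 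The running time is $\poly(n)$ since each subroutine is; we defer the precise sample-complexity bookkeeping and the verification that all high-probability events can be union-bounded within the $O(\gamma)$ failure budget to the detailed proof.
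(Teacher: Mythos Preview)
Your proposal is correct and follows essentially the same route as the paper: introduce the decoupled estimator $\vec\beta_s^\star$, show $\E[\vec\beta_s^\star]=k\vec\beta$ via inverse-Wishart plus Stein, bound $\|\wh{\vec w}-\vec w_s^\star\|_2$ by rerunning the $Q_1,\vec Q_2$ machinery of \Cref{appendix:proof-main-result1}, and bound $\|\vec w_s^\star-k\vec w\|_2$ by concentration. The only cosmetic difference is that the paper splits the last quantity into three pieces (separating out the deterministic $\tfrac{d+1}{n-d-1}\E[y_j\vec V_j]$ term) whereas you fold it into a two-piece decomposition via $S_n^{-1}-\tfrac{n}{n-d-1}\Sigma^{-1}$; both regroupings are handled by the same underlying Facts~\ref{fact:sigma_tilde_inverse_norm}, \ref{fact:empirical_covariance}, and \ref{fact:empirical_mean_y_X}.
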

\begin{proof}
The privacy of the algorithm in this Theorem arises directly from the privacy of \Cref{theorem:main-result1}, since the algorithm is the same.

For the accuracy guarantee, as discussed in the Technical overview (\Cref{section:technical-overview}), we first break the norm of the vector difference $\left\| \wh{\vec\beta} - k\vec\beta \right\|_2^2$ into the distance from the estimate $\vec \beta_s^\star$, for which we remind to the reader that we define as
\begin{equation}
\label{eq:similar-beta}
\vec \beta_s^\star =
\left( \frac{1}{n} \sum_{i=n+1}^{2n} \vec X_i \vec X_i^T \right)^{-1} \left( \frac{1}{n} \sum_{i=1}^n y_i \vec X_i \right)
\, ,
\end{equation}
that resembles the Least Squares Estimate but \emph{crucially introduces independence} between the two terms that constitute the Least Squares Estimate (to which we can apply our result from \Cref{theorem:main-result1}), and the distance of the Least-Squares-resembling estimate from a constant multiplicative factor of the true regression coefficient $\vec\beta$ (respectively, of the estimate $\vec w_s^\star$ from $\vec w = \Sigma^{1/2} \vec\beta$), as follows:
\[
\left\| \wh{\vec w} - k\vec w \right\|_2^2
\leq
\left\| \wh{\vec w} - \vec w_s^\star \right\|_2^2
+ \left\| \vec w_s^\star - k \vec w \right\|_2^2
\, .
\]

The first term gets bounded by \Cref{theorem:main-result1}, since as we also noted in the Technical overview (\Cref{section:technical-overview}), the independence between $Q_1$ and $\vec Q_2$ in our proof of \Cref{theorem:main-result1} allows us to prove the same claim for $\vec\beta_s^\star$ as we did for $\vec\beta^\star$ above (see \Cref{appendix:proof-main-result1}). In the remainder of the proof, we focus on bounding the second term.

We first supply the following central Lemma, which uncovers the (unbiased up to a multiplicative factor) relation between the Least-Squares-resembling estimate $\vec\beta_s^\star$ and the true regression coefficient $\vec\beta$, following from Stein's Lemma:

\begin{lemma}
There exists a multiplicative factor $k\in\reals_+$ that depends on the model function $f$, where $f$ as defined in \Cref{asm:glm}, such that the estimate $\vec\beta_s^\star$, as in \Cref{eq:similar-beta}, is an unbiased up to a multiplicative factor estimate of the true parameter $\vec\beta$ of \Cref{asm:glm}, i.e.,
\[
\E\left[ \vec\beta_s^\star \right] = k\vec\beta
\, .
\]
\end{lemma}
\begin{proof}
First, we note the following equality following from the definitions of \Cref{asm:glm}:
\begin{equation}
\label{eq:expe_yi_Xi}
\E\left[ y_i | \vec X_i \right]
= 2f\left( \vec\beta^T \vec X_i \right) - 1
\, .
\end{equation}

Also, by a classical result on Wishart matrices (for instance, see \cite{anderson_multivariate_statistics}), it is true that the inverse sample covariance matrix is proportional to the true covariance matrix for multivariate Gaussian random vectors:
\begin{align}
\label{eq:expe_inv_sample_cov_matrix}
\E\left[ \left( \frac{1}{n} \sum_{i=n+1}^{2n} \vec X_i \vec X_i^T \right)^{-1} \right]
= \frac{n}{n-d-1} \Sigma^{-1}
\, .
\end{align}

By the independence of the first $n$ samples ($1 \dots n$) from the next ($n+1 \dots 2n$), the law of iterated expectations, using \Cref{eq:expe_yi_Xi}, \Cref{eq:expe_inv_sample_cov_matrix} and the zero-mean property of the feature vectors $\vec X_i$, we have that:
\begin{align*}
\E\left[ \vec\beta_s^\star \right]
&= \E\left[ \left( \frac{1}{n} \sum_{i=n+1}^{2n} \vec X_i \vec X_i^T \right)^{-1} \left( \frac{1}{n} \sum_{i=1}^n y_i \vec X_i \right) \right]
\\
&= \E\left[ \left( \frac{1}{n} \sum_{i=n+1}^{2n} \vec X_i \vec X_i^T \right)^{-1} \right]
\E\left[ \frac{1}{n}\sum_{i=1}^n \vec X_i \E[y_i | \vec X_i] \right]
\\
&= \frac{n}{n-d-1} \Sigma^{-1}
\E\left[ \vec X_i \left( 2f\left(\vec\beta^T \vec X_i\right) - 1 \right) \right]
\\
&= \frac{n}{n-d-1} \Sigma^{-1}
\Cov\left[ \vec X_i, ~ 2f\left(\vec\beta^T \vec X_i\right) - 1 \right]
\, ,
\end{align*}

Now, an application of Stein's Lemma (see \Cref{lemma:stein-lemma}), since $\vec X_i$ and $\vec\beta^T \vec X_i$ are jointly Gaussian, suggests that
\[
\Cov\left[ \vec X_i, ~ 2f\left(\vec\beta^T \vec X_i\right) - 1 \right]
= 2 \Cov\left[ \vec X_i, \vec\beta^T \vec X_i \right] \E\left[ f'\left( \vec\beta^T \vec X_i \right) \right]
= 2 \E\left[ f'\left( \vec\beta^T \vec X_i \right) \right] ~ \Sigma ~ \vec\beta
\, ,
\]
and combining with the above equality yields
\[
\E\left[ \vec\beta_s^\star \right]
= k \vec\beta
\, ,
\]
where
\[
k =
\frac{2n}{n-d-1}
\E%_{\vec X_i \sim \calN\left( \vec 0, \Sigma \right)}
\left[ f'\left( \vec\beta^T \vec X_i \right) \right]
\, .
\]
\end{proof}

Continuing to the proof of the result, we use the form as written with the expectation, breaking the term into three sub-terms by adding and subtracting the same quantities (defining the variance-normalized vectors $\vec V_i = \Sigma^{-1/2} \vec X_i$), to deduce that
\begin{align*}
\left\| \vec w_s^\star - k \vec w \right\|_2^2
&= \left\| \vec w_s^\star - \E\left[ \vec w_s^\star \right] \right\|_2^2
\\
&= \left\| \Sigma^{1/2} \left( \frac{1}{n} \sum_{i=n+1}^{2n} \vec X_i \vec X_i^T \right)^{-1} \left( \frac{1}{n} \sum_{i=1}^n y_i \vec X_i \right)
- \frac{n}{n-d-1} \Sigma^{-1/2} \E\left[ y_j \vec X_j \right] \right\|_2^2
\\
&\leq 2\left\| \left( \frac{1}{n} \sum_{i=n+1}^{2n} \vec V_i \vec V_i^T \right)^{-1} \left( \frac{1}{n} \sum_{i=1}^n y_i \vec V_i \right) - \left( \frac{1}{n} \sum_{i=1}^n y_i \vec V_i \right) \right\|_2^2
\\
& + 2\left\| \left( \frac{1}{n} \sum_{i=1}^n y_i \vec V_i \right) - \E\left[ y_j \vec V_j \right] \right\|_2^2 + 2\left\| \frac{d+1}{n-d-1} \E\left[ y_j \vec V_j \right] \right\|_2^2
\\
&= 2\left\| \left( \frac{1}{n} \sum_{i=n+1}^{2n} \vec V_i \vec V_i^T \right)^{-1} \left( \frac{1}{n} \sum_{i=n+1}^{2n} \vec V_i \vec V_i^T - \mathbb{I}_d \right) \left( \frac{1}{n} \sum_{i=1}^n y_i \vec V_i \right) \right\|_2^2
\\
& + 2\left\| \frac{1}{n} \sum_{i=1}^n \left( y_i \vec V_i - \E\left[ y_j \vec V_j \right] \right) \right\|_2^2 + 2\left\| \frac{d+1}{n-d-1} \E\left[ y_j \vec V_j \right] \right\|_2^2
\, .
\end{align*}

When we have at least as many samples as required for \Cref{theorem:main-result1}, it follows from the sub-proofs presented in the proof of this Theorem above (specifically, \Cref{fact:sigma_tilde_inverse_norm}, \Cref{fact:empirical_covariance}, \Cref{fact:empirical_mean_feature_vectors}, and \Cref{fact:empirical_mean_y_X} of \Cref{appendix:proof-main-result1}) that the whole quantity is bounded as follows:
\[
\left\| \vec w_s^\star - \E\left[ \vec w_s^\star \right] \right\|_2^2
\leq O\left( \alpha^2 \right) + O\left( \eta^2 \right)
\, .
\qedhere
\]
\end{proof}

\section{Algorithm and Guarantees on Standard Linear Regression}
\label{app:linear}

We begin this section with a description of \Cref{algo:linear-regression}. First, we can deduce from \Cref{eq:gaussian_error_linear_model} that the (marginal) distribution of labels $y_i$ is a Gaussian distribution $\calN(\vec\beta^T \vec\mu, \vec\beta^T \Sigma \vec\beta + \sigma_\eps^2)$, therefore, defining the vectors $\vec Z_i \in \mathbb{R}^{d+1}$ as
\begin{align}
\label{eq:jointly_Gaussian_rvs}
\vec Z_i = \begin{bmatrix} \vec X_i \\ y_i \end{bmatrix}
\, ,
\end{align}
we can see that they similarly follow a Gaussian distribution, with a covariance matrix $\Sigma'$ that may be written in a block matrix form:
\begin{align}
\label{eq:block_matrix}
\Sigma' = \E\left[ \vec Z_i \vec Z_i^T \right] =
\begin{bmatrix}
\Sigma & \Sigma\vec\beta \\
\vec\beta^T \Sigma & \sigma_\eps^2 + \vec\beta^T \Sigma \vec\beta
\end{bmatrix}
\, .
\end{align}

\Cref{eq:block_matrix} indicates that a natural way to estimate $\vec\beta$ would be to estimate the covariance matrices $\Sigma, \Sigma'$ and then, extracting the last column of $\Sigma'$ (\textsc{ExtractLastColumn} in \Cref{algo:linear-regression}), which (without the last element) is $\Sigma\vec\beta$, left multiply by the inverse of the estimate of $\Sigma$ that we have. Indeed, we show that this approach works and, when the estimation of the above covariance matrices is made according to the differentially private algorithms of Gaussian covariance estimation, the end result is a differentially private estimator of $\vec\beta$ for the ``standard linear regression'' model of \Cref{asm:standard_linear_regression}.

\noindent\textbf{Description of \Cref{algo:linear-regression}.} Having access to the $n$ i.i.d. samples $(\vec X_i, y_i) \in \reals^d \times \reals$, where $\vec X_i \sim \calN(\vec \mu, \Sigma), i \in [n]$, the algorithm initially
computes a differentially private estimate $\wh{\Sigma}$
of the covariance matrix of the $d$-dimensional Gaussian distribution $\calN(\vec \mu, \Sigma)$, using the algorithm $\textsc{LearnGaussian-hd}$, discussed in \Cref{sec:paremestimation}. Then, the algorithm forms the random vectors $\vec Z_i \in\reals^{d+1}$ as in \Cref{eq:jointly_Gaussian_rvs} and computes a differentially private estimate $\wh{\Sigma'}$
of the covariance matrix of the $(d+1)$-dimensional Gaussian distribution with covariance matrix of the block form of \Cref{eq:block_matrix}, again using the algorithm $\textsc{LearnGaussian-hd}$. From the matrix $\wh{\Sigma'}$, the algorithm obtains only the first $d$ elements of the last column of that matrix, naming them as $\wh{\Sigma\vec\beta} \in\reals^d$ (hinting at the form of \Cref{eq:block_matrix}).
Armed with these estimates, the differentially private estimate of $\vec\beta$ is finally given by:
\begin{align}
\label{eq:standard_lin_regression_dpestimate}
\wh{\vec\beta} = \wh{\Sigma}^{-1} \wh{\Sigma\vec\beta}
\, ,
\end{align}
whose privacy follows from appropriate composition rules. We now proceed to prove the accuracy guarantee of this estimate.

\begin{algorithm}[!t] 
\caption{Private Estimation of Linear Regression Coefficient.}
\label{algo:linear-regression}
\begin{algorithmic}[1]
\State \textbf{Input:} $(X, \vec y) = (\vec X_i, y_i)_{i \in [n]}$ with $\vec X_i \sim \calN(\vec \mu, \Sigma)$, where $\vec \mu, \Sigma$ are unknown and $n$ satisfies \Cref{theorem:main-result3}.
\State \textbf{Parameters:} Privacy $\eps, \delta > 0$, accuracy $\alpha, \eta > 0$, confidence $\gamma \in (0,1)$,
covariance spectral norm bound $\kappa$. % \rho is NOT REQUIRED to be known (the privacy holds already, and for the accuracy we can transform to the \rho y_i \vec X_i)
\State \textbf{Output:} Estimate $\wh{\vec\beta}$ that approaches the true vector $\vec\beta$ in $L_2$ norm with high probability.
\vspace{1mm}
\Procedure{PrivLearnLinear}{$(X, \vec y), \eps, \delta, \alpha, \eta, \gamma, \kappa$} \State Set $\vec Z_i \gets [\vec X_i, y_i]^T$ for $i \in [n]$
\State $L \gets \{ \Theta(\eps), \Theta(\delta), \Theta(\alpha), \gamma, \kappa \}$
\State $\wh{\Sigma'} \gets \textsc{LearnGaussian-hd}(\{\vec Z_i\}_{i
\in [n]}, L)$

\State $[\wh{\Sigma \vec \beta},~\wh{\sigma}] \gets \textsc{ExtractLastColumn}(\wh{\Sigma'})$ \Comment{\emph{See \Cref{eq:block_matrix-main},} $\wh{\Sigma \vec \beta} \in \reals^d, \wh{\sigma} = \wh{\sigma_\eps^2 + \vec\beta^T \Sigma \vec\beta} \in \reals$.}
\State Draw $\vec X_i, i \in \{n+1,\ldots, 2n\}$ from $\calN(\vec \mu, \Sigma)$
\State $\wh{\Sigma} \gets \textsc{LearnGaussian-hd}(\{\vec X_i\}_{i \in [n+1..2n]}, L)$
\State \textbf{if} $\wh{\Sigma}$
%\wh{\Sigma}_{\vec X} + \wh{\vec\mu}_{\vec X} \wh{\vec\mu}^T_{\vec X}$ 
is not invertible\footnotemark\ 
\textbf{then} Output $\perp$
\State Output the private estimate $\wh{\vec\beta} = 
\wh{\Sigma}^{-1} \wh{\Sigma \vec \beta} $
\EndProcedure
\end{algorithmic}
\end{algorithm}
\footnotetext{The invertibility of the matrix in \textbf{Line 10} holds with high probability and the non-invertibility bad event is captured by the $O(\gamma)$ failure probability of \Cref{theorem:accuracy_standard_lin_regression}.}

\begin{theorem}
[Accuracy of $\wh{\vec\beta}$ in Private Standard Linear Regression]
\label{theorem:accuracy_standard_lin_regression}
Under \Cref{asm:standard_linear_regression} with parameters $(\kappa, 
\Sigma')$ where $\Sigma'$ is defined as in \Cref{eq:block_matrix}, for all privacy parameters $\eps, \delta > 0$, accuracy parameters $\alpha, \eta > 0$ and confidence $\gamma \in (0,1)$, there exists an algorithm (\Cref{algo:linear-regression}) that is $(\frac{\eps^2}{2} + \eps\sqrt{2\log(1/\delta)}, \delta)$-differentially private, and if the number of labeled examples is at least: 
\begin{align*}
n &= O\left(
\frac{d + \log(1/\gamma)}{\eta^2}
+ \frac{d^{3/2} \polylog\left( \frac{d}{\eta\gamma\eps} \right)} {\eta\eps}
+ \frac{d^{3/2} \sqrt{\log\left( \kappa(\Sigma') \right)} \polylog\left( \frac{d\log\left( \kappa(\Sigma') \right)} {\gamma\eps} \right) }{\eps}
\right)
\\
& + O\left(
\frac{d + \log(1/\gamma)}{\alpha^2}
+ \frac{d^{3/2} \polylog\left( \frac{d}{\alpha\gamma\eps} \right)} {\alpha\eps}
\right)
\, ,
\end{align*}
then with probability at least $1-O(\gamma)$ an estimate $\wh{\vec\beta} \in \reals^d$ is successfully output and along with the ``true'' regression coefficient $\vec\beta$ satisfies:
\begin{align}
\label{eq:standard_lin_regression_accuracy_guarantee}
\left\| \wh{\vec\beta} - \vec\beta \right\|_2^2
\leq \left\| \wh{\vec w} - \vec w \right\|_2^2
\leq O\left( \alpha^2 \right)
\cdot \left\|\vec w\right\|_2^2
+ O\left( \eta^2 \right) \cdot \lambda_\text{max}^2(\Sigma')
\, ,
\end{align}
where $\kappa(\Sigma') = \frac{\lambda_\text{max}(\Sigma')}{\lambda_\text{min}(\Sigma')}$ is the condition number of the block matrix $\Sigma'$ as in \Cref{eq:block_matrix}, $\wh{\vec w} = \Sigma^{1/2} \wh{\vec\beta} ~\text{and}~ \vec w = \Sigma^{1/2} \vec\beta$.
Finally, the algorithm runs in $\poly(n)$ time. 
\end{theorem}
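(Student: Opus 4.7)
The overall plan is to mirror the structure of the proof of \Cref{theorem:main-result1}, replacing the role of the sub-Gaussian mean term $X^T\vec y/n$ with a sub-block of $\Sigma'$ that equals $\Sigma\vec\beta$. Privacy is the easy piece: \Cref{algo:linear-regression} invokes \textsc{LearnGaussian-hd} twice with parameter budget $\Theta(\eps),\Theta(\delta)$ on disjoint halves of the sample, so by the advanced composition theorem (\Cref{def:dp-composition}) the whole mechanism satisfies the claimed $(\frac{\eps^2}{2}+\eps\sqrt{2\log(1/\delta)},\delta)$-DP guarantee. Runtime in $\poly(n)$ is inherited directly from the two calls.

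For accuracy, the natural starting point is the identity
\[
\wh{\vec\beta}-\vec\beta \;=\; \wh{\Sigma}^{-1}\bigl(\wh{\Sigma\vec\beta}-\Sigma\vec\beta\bigr) \;+\; \wh{\Sigma}^{-1}\bigl(\Sigma-\wh{\Sigma}\bigr)\vec\beta.
\]
Passing to whitened coordinates $\vec w=\Sigma^{1/2}\vec\beta$, $\wh{\vec w}=\Sigma^{1/2}\wh{\vec\beta}$, and applying the triangle inequality together with sub-multiplicativity of $\|\cdot\|_2$, one obtains
\[
\|\wh{\vec w}-\vec w\|_2 \leq \|\Sigma^{1/2}\wh{\Sigma}^{-1}\Sigma^{1/2}\|_2\Bigl(\|\Sigma^{-1/2}(\wh{\Sigma\vec\beta}-\Sigma\vec\beta)\|_2 + \|\Sigma^{-1/2}(\wh{\Sigma}-\Sigma)\Sigma^{-1/2}\|_2\,\|\vec w\|_2\Bigr).
\]
Two of the three factors are close analogues of pieces already analyzed for \Cref{theorem:main-result1}: the preconditioner argument underlying \Cref{claim:inverse_sigma_norm} yields $\|\Sigma^{1/2}\wh{\Sigma}^{-1}\Sigma^{1/2}\|_2=O(1)$ once $n=\wt{O}(d^{3/2}\sqrt{\log\kappa}/\eps)$, and the Mahalanobis guarantee of \Cref{theorem:gautam-gaussian} applied to the $\vec X_i$ half of the sample immediately gives $\|\Sigma^{-1/2}(\wh{\Sigma}-\Sigma)\Sigma^{-1/2}\|_2\leq O(\alpha)$.

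The genuinely new term is $\|\Sigma^{-1/2}(\wh{\Sigma\vec\beta}-\Sigma\vec\beta)\|_2$. My plan is to observe that reading off the first $d$ entries of the last column of $\wh{\Sigma'}-\Sigma'$ is a contraction in spectral norm, so that $\|\wh{\Sigma\vec\beta}-\Sigma\vec\beta\|_2\leq \|\wh{\Sigma'}-\Sigma'\|_2$. Then apply \Cref{theorem:gautam-gaussian} to the $(d+1)$-dimensional Gaussian vectors $\vec Z_i$ of \Cref{eq:jointly_Gaussian_rvs} to obtain $\|(\Sigma')^{-1/2}(\wh{\Sigma'}-\Sigma')(\Sigma')^{-1/2}\|_2\leq O(\eta)$ with high probability, and combine with $\|\Sigma'\|_2\leq \lambda_{\max}(\Sigma')$ together with $\|\Sigma^{-1/2}\|_2\leq 1$ (the latter because $\Sigma\succeq\mathbb{I}_d$) to conclude $\|\Sigma^{-1/2}(\wh{\Sigma\vec\beta}-\Sigma\vec\beta)\|_2\leq O(\eta)\,\lambda_{\max}(\Sigma')$. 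Squaring and collecting the three estimates yields the advertised accuracy bound $O(\alpha^2)\|\vec w\|_2^2 + O(\eta^2)\lambda_{\max}^2(\Sigma')$.

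The main obstacle, as already flagged in \Cref{sec:sketch-3}, is controlling the sample-complexity dependence of the second \textsc{LearnGaussian-hd} call through the $\sqrt{\log\kappa(\Sigma')}$ factor, since \Cref{theorem:gautam-gaussian} requires a bound of the form $\mathbb{I}_{d+1}\preceq \Sigma' \preceq \kappa(\Sigma')\mathbb{I}_{d+1}$ after the usual rescaling. I would lower-bound $\lambda_{\min}(\Sigma')$ using the Schur complement decomposition of the block matrix of \Cref{eq:block_matrix} (whose Schur complement against the top-left block equals $\sigma_\eps^2$) combined with the block-eigenvalue inequalities of \cite{lower_upper_bounds_block_matrix}, and upper-bound $\lambda_{\max}(\Sigma')$ by a direct argument on the block structure in terms of $\lambda_{\max}(\Sigma)$, $\|\vec\beta\|_2$ and $\sigma_\eps^2$; verifying that both bounds are tight enough to yield the stated $\sqrt{\log\kappa(\Sigma')}$ dependence is the crux of the argument. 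Summing the sample-complexity contributions of the two calls, and taking a union bound over their failure events together with the low-probability event that $\wh{\Sigma}$ is non-invertible, delivers the $O(\gamma)$ confidence guarantee in \Cref{eq:standard_lin_regression_accuracy_guarantee}.
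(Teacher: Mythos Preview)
Your proposal is correct and follows essentially the same route as the paper. The identity you start from, the passage to whitened coordinates, the three-factor decomposition, and the use of \Cref{claim:inverse_sigma_norm} and \Cref{lemma:gautam_variant_covariance} for the first two factors all match the paper's proof line by line.

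The one place you diverge slightly is in bounding $\|\Sigma^{-1/2}(\wh{\Sigma\vec\beta}-\Sigma\vec\beta)\|_2$. The paper (in its proof of \Cref{claim:last_column_norm}) argues via the spectral decomposition $\Sigma'=U\Lambda U^T$, choosing a specific unit vector $\vec u$ with $\sqrt{\lambda_{\max}(\Sigma')}\,\Lambda^{-1/2}U^T\vec u=\vec e_{d+1}$ to extract the last column, whereas you observe directly that column extraction followed by coordinate projection is a spectral-norm contraction, giving $\|\wh{\Sigma\vec\beta}-\Sigma\vec\beta\|_2\le\|\wh{\Sigma'}-\Sigma'\|_2\le\lambda_{\max}(\Sigma')\cdot O(\eta)$. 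Your argument is a bit cleaner and yields the identical bound. One minor recalibration: you describe the block-eigenvalue bounds on $\lambda_{\min}(\Sigma')$ and $\lambda_{\max}(\Sigma')$ as ``the crux of the argument,'' but the theorem is stated directly in terms of $\kappa(\Sigma')$ and $\lambda_{\max}(\Sigma')$, so those bounds are not needed for the proof itself; the paper provides them only as supplementary discussion after the proof to interpret how $\kappa(\Sigma')$ depends on $\vec\beta$, $\sigma_\eps^2$, and $\kappa$.
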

\begin{proof}
The privacy of the algorithm in this Theorem arises directly from the privacy of the differentially private covariance estimation algorithm and the composition theorems.

For the accuracy guarantee, we begin by adding and subtracting the quantities of each factor of $\wh{\vec\beta} \in \reals^d$, as follows:
\[
\wh{\vec\beta} - \vec\beta = \left( \wh{\Sigma}^{-1} - \Sigma^{-1} \right) \Sigma\vec\beta + \wh{\Sigma}^{-1} \left( \wh{\Sigma\vec\beta} - \Sigma\vec\beta \right)
\, .
\]

Then, substituting the quantities $\vec w = \Sigma^{1/2} \vec\beta$ and left multiplying both sides of the equation by $\Sigma^{1/2}$, we obtain that
\begin{align*}
& \wh{\vec w} - \vec w = \left( \Sigma^{1/2} \wh{\Sigma}^{-1} \Sigma^{1/2} - \mathbb{I}_d \right) \vec w + \Sigma^{1/2} \wh{\Sigma}^{-1} \left( \wh{\Sigma\vec\beta} - \Sigma\vec\beta \right) \\
\Leftrightarrow\ & \wh{\vec w} - \vec w = \left( \Sigma^{1/2} \wh{\Sigma}^{-1} \Sigma^{1/2} \right) \left[ - \left( \Sigma^{-1/2} \wh{\Sigma} \Sigma^{-1/2} - \mathbb{I}_d \right) \vec w + \Sigma^{-1/2} \left( \wh{\Sigma\vec\beta} - \Sigma\vec\beta \right) \right]
\, .
\end{align*}

Using Cauchy-Schwartz and the sub-multiplicative property of the spectral norm, we establish the following inequality:
\[
\left\| \wh{\vec w} - \vec w \right\|_2^2 \leq 2 \left\| \Sigma^{1/2} \wh{\Sigma}^{-1} \Sigma^{1/2} \right\|_2^2 \left( \left\| \Sigma^{-1/2} \left( \wh{\Sigma} - \Sigma \right) \Sigma^{-1/2} \right\|_2^2 \cdot \left\| \vec w \right\|_2^2 + \left\| \Sigma^{-1/2} \left( \wh{\Sigma\vec\beta} - \Sigma\vec\beta \right) \right\|_2^2 \right)
\, .
\]

In order to bound the constituent terms of the right-hand-side of this inequality, we need a very similar claim to \Cref{claim:inverse_sigma_norm} which we state below without proof (since it is almost the same as \Cref{sec:proofofc1}), \Cref{lemma:gautam_variant_covariance} (which we remind to the reader that it is applied to the $2n$ sample differences $\frac{1}{\sqrt{2}}\left( \vec X_{2i} - \vec X_{2i-1} \right)$, so that they have zero mean), and \Cref{claim:last_column_norm}, which is the main subject that we elaborate on in \Cref{subsec:proof_of_last_column_norm}.

\begin{claim}
[Similar to \Cref{claim:inverse_sigma_norm}]
\label{claim:similar_inverse_sigma_norm}
When
\[
n = \Omega\left(
d
+ \log(1/\gamma)
+ \frac{d^{3/2} \sqrt{\log\kappa} \polylog\left( \frac{d\log\kappa} {\gamma\eps} \right) }{\eps}
\right)
\, ,
\]
the following inequality holds with probability $1-O(\gamma)$: \[
\left\| \Sigma^{1/2} \wh{\Sigma}^{-1} \Sigma^{1/2} \right\|_2^2
\leq
O\left( 1 \right)
\, .
\]
\end{claim}

\begin{claim}
\label{claim:last_column_norm}
When
\[
n = \Omega\left(
\frac{d + \log(1/\gamma)}{\eta^2}
+ \frac{d^{3/2} \polylog\left( \frac{d}{\eta\gamma\eps} \right)} {\eta\eps}
+ \frac{d^{3/2} \sqrt{\log\left( \kappa(\Sigma') \right)} \polylog\left( \frac{d\log\left( \kappa(\Sigma') \right)} {\gamma\eps} \right) }{\eps}
\right)
\, ,
\]
the following inequality holds with probability $1-O(\gamma)$: \[
\left\| \Sigma^{-1/2} \left( \wh{\Sigma\vec\beta} - \Sigma\vec\beta \right) \right\|_2^2
\leq
O\left( \eta^2 \right) \cdot \lambda_\text{max}^2(\Sigma')
\, .
\]
\end{claim}

Combining \Cref{claim:similar_inverse_sigma_norm}, \Cref{lemma:gautam_variant_covariance}, and \Cref{claim:last_column_norm} with a union bound of the respective events, we directly obtain \Cref{theorem:accuracy_standard_lin_regression}, since
\[
\left\| \wh{\vec\beta} - \vec\beta \right\|_2^2
= \left\| \Sigma^{-1/2} \Sigma^{1/2} \left( \wh{\vec\beta} - \vec\beta \right) \right\|_2^2
\leq \left\| \Sigma^{-1/2} \right\|_2^2
\cdot \left\| \wh{\vec w} - \vec w \right\|_2^2
\leq \left\| \wh{\vec w} - \vec w \right\|_2^2
\, ,
\]
by the sub-multiplicative property of the norm and because $\mathbb{I}_d \preceq \Sigma$.
\end{proof}

\subsection{Proof of \Cref{claim:last_column_norm}}
\label{subsec:proof_of_last_column_norm}

First of all, we note that, according to the first lines of the proof of \Cref{fact:inverse_Sigma'_A'_norm}, in order to apply the (accuracy) results of \Cref{lemma:preconditioner} and \Cref{lemma:gautam_variant_covariance} to the covariance estimation of the random vectors $\vec Z_i \in\reals^{d+1}$ as in \Cref{eq:jointly_Gaussian_rvs}, a change of variables is needed, that affects solely the analysis of the algorithm (and more specifically, appears in a change of the sample complexity). Therefore, the specific result which applies in our case here is stated in the following fact.

\begin{fact}
[Covariance $\wh{\Sigma}'$ estimation accuracy]
For every $\eta > 0$, the output $\wh{\Sigma}'$ of algorithm $\textsc{LearnGaussian-hd}$ when given at least $n$ samples $\vec Z_i$ with
\[
n = O\left(
\frac{d + \log(1/\gamma)}{\eta^2}
+ \frac{d^{3/2} \polylog\left( \frac{d}{\eta\gamma\eps} \right)} {\eta\eps}
+ \frac{d^{3/2} \sqrt{\log\left( \kappa(\Sigma') \right)} \polylog\left( \frac{d\log\left( \kappa(\Sigma') \right)} {\gamma\eps} \right) }{\eps}
\right)
\, ,
\]
where $\kappa(\Sigma') = \frac{\lambda_\text{max}(\Sigma')}{\lambda_\text{min}(\Sigma')}$ is the condition number of the block matrix $\Sigma'$ as in \Cref{eq:block_matrix}, satisfies the following accuracy guarantee with probability $1-O(\gamma)$:
\begin{align}
\label{eq:Sigma'_accuracy_standard_lin_regression}
\left\| \Sigma'^{-1/2} \left( \wh{\Sigma}' - \Sigma' \right) \Sigma'^{-1/2} \right\|_2 \leq O(\eta)
\, .
\end{align}
\end{fact}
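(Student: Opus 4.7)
The target bound is a direct instantiation of the private sub-gaussian covariance estimation guarantee of \Cref{lemma:gautam_variant_covariance} applied to the $(d+1)$-dimensional Gaussian vectors $\vec Z_i=[\vec X_i, y_i]^T$. Two preprocessing steps are needed before invoking the lemma: (i) reducing to the zero-mean case, and (ii) normalizing $\Sigma'$ so that it satisfies the lower-bound hypothesis $\mathbb{I}_{d+1}\preceq(\,\cdot\,)$ required by \Cref{lemma:gautam_variant_covariance}. With these in place, the guarantee transfers back under a scalar rescaling and the sample complexity scales with $\log\kappa(\Sigma')$ exactly as stated.

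\textbf{Mean-centering and rescaling.} First, as already done elsewhere in the paper, pass the algorithm the $n$ paired differences $\frac{1}{\sqrt{2}}(\vec Z_{2i}-\vec Z_{2i-1})$, for $i\in[n]$, built from $2n$ fresh samples; these are i.i.d.\ zero-mean Gaussian vectors with the same covariance $\Sigma'$, and the factor of two is absorbed in the $O$-notation. Second, define $c=1/\sqrt{\lambda_{\min}(\Sigma')}$ and consider the rescaled vectors $c\vec Z_i$, whose covariance is $c^2\Sigma'$ and satisfies
\[
\mathbb{I}_{d+1}\preceq c^2\Sigma'\preceq \kappa(\Sigma')\,\mathbb{I}_{d+1}
\]
by construction of $c$ and the definition of $\kappa(\Sigma')=\lambda_{\max}(\Sigma')/\lambda_{\min}(\Sigma')$. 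The rescaling is a deterministic pre-/post-processing transformation and therefore preserves $(\tfrac{\eps^2}{2}+\eps\sqrt{2\log(1/\delta)},\delta)$-differential privacy.

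\textbf{Invocation and conversion back.} Apply \Cref{lemma:gautam_variant_covariance} to the rescaled, centered samples $\{c\cdot\tfrac{1}{\sqrt{2}}(\vec Z_{2i}-\vec Z_{2i-1})\}_{i\in[n]}$ in ambient dimension $d+1=O(d)$, with accuracy parameter $\alpha\leftarrow\eta$, confidence $\gamma$, privacy $(\eps,\delta)$, and spectral-norm bound $\kappa\leftarrow\kappa(\Sigma')$. The lemma's sample-complexity expression becomes, after substitution and absorbing $d+1$ into $d$,
\[
n=O\!\left(\frac{d+\log(1/\gamma)}{\eta^2}+\frac{d^{3/2}\polylog(d/(\eta\gamma\eps))}{\eta\eps}+\frac{d^{3/2}\sqrt{\log\kappa(\Sigma')}\polylog(d\log\kappa(\Sigma')/(\gamma\eps))}{\eps}\right),
\]
matching the bound in the fact's statement. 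The lemma yields an estimator $\widetilde{\Sigma''}$ satisfying
\[
\left\|(c^2\Sigma')^{-1/2}\bigl(\widetilde{\Sigma''}-c^2\Sigma'\bigr)(c^2\Sigma')^{-1/2}\right\|_2\le O(\eta)
\]
with probability $1-O(\gamma)$. Setting $\wh{\Sigma'}=\widetilde{\Sigma''}/c^2$ (equivalently, viewing the rescaling as post-processing of the output) cancels all $c$ factors and gives precisely \Cref{eq:Sigma'_accuracy_standard_lin_regression}.

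\textbf{Main obstacle.} The only delicate point is justifying that the $\kappa$ appearing in the sample-complexity expression of \Cref{lemma:gautam_variant_covariance} can be taken to be the condition number $\kappa(\Sigma')$ rather than, say, a raw upper bound on $\lambda_{\max}(\Sigma')$ (which could be arbitrarily larger if $\Sigma'$ happens to have a very small $\lambda_{\min}$). This is handled precisely by the rescaling step: after multiplication by $c$, the smallest eigenvalue of the rescaled covariance is exactly $1$, so the effective condition number becomes $\kappa(\Sigma')$ and the $\sqrt{\log\kappa(\Sigma')}$ dependence follows directly from \Cref{lemma:gautam_variant_covariance}. No additional probabilistic argument is required; the bound is obtained by a direct (and deterministic) reduction to the already-proved private covariance estimation guarantee.
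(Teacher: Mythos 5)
Your proposal is correct and follows essentially the same route as the paper: the paper justifies this fact in one sentence by referring to the "change of variables" from the proof of \Cref{fact:inverse_Sigma'_A'_norm} (rescaling so that the smallest eigenvalue of the covariance becomes $1$, which turns the $\kappa$ in \Cref{lemma:gautam_variant_covariance} into the condition number $\kappa(\Sigma')$) together with the standard mean-centering via paired differences $\tfrac{1}{\sqrt{2}}(\vec Z_{2i}-\vec Z_{2i-1})$. You have simply written out explicitly the reduction the paper leaves implicit, including the observation that the scalar normalization cancels in the relative (Mahalanobis-type) error bound.
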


We remind to the reader the form of the block matrix $\Sigma'$ which is as follows:
\begin{align}
\label{eq:block_matrix_again}
\Sigma' =
\begin{bmatrix}
\Sigma & \Sigma\vec\beta \\
\vec\beta^T \Sigma & \sigma_\eps^2 + \vec\beta^T \Sigma \vec\beta
\end{bmatrix}
\, .
\end{align}

By definition of the spectral norm, from \Cref{eq:Sigma'_accuracy_standard_lin_regression} we have that for every vector $\vec u\in\reals^{d+1} : \|\vec u\|_2 \leq 1$, it holds that $ \left\| \Sigma'^{-1/2} \left( \wh{\Sigma}' - \Sigma' \right) \Sigma'^{-1/2} \vec u \right\|_2 \leq O(\eta) $. Taking advantage of the spectral decomposition of the (real symmetric, PSD) matrix $\Sigma' = U \Lambda U^T$ for some unitary orthogonal matrix $U$ and diagonal matrix $\Lambda$, it is well-known that $\Sigma'^{-1/2} = \Lambda^{-1/2} U^T$, and because $\| U^T \vec u \|_2 = \| \vec u \|_2$ for every vector $\vec u\in\reals^{d+1}$ (since $U$ is an orthonormal matrix), and since $\sqrt{\lambda_\text{max}(\Sigma')} \Lambda^{-1/2} \succeq \mathbb{I}_{d+1}$, we conclude that by choosing $\vec u\in\reals^{d+1} : \sqrt{\lambda_\text{max}(\Sigma')} \Lambda^{-1/2} \vec u = \vec e_{d+1}$ (where $\vec e_{d+1}$ is the unit vector that has only the $(d+1)$-th coordinate $1$ and all other coordinates $0$), it is true that $ \left\| \Sigma'^{-1/2} \left( \wh{\vec v}_{d+1} - \vec v_{d+1} \right) \right\|_2^2 \leq O(\eta^2) \cdot \lambda_\text{max}(\Sigma') $, where $\wh{\vec v}_{d+1}, \vec v_{d+1}$ are the last columns of the matrices $\wh{\Sigma}'$ and $\Sigma'$ respectively (see \Cref{eq:block_matrix_again} for what the last column looks like). Therefore, it is immediate that
\begin{align*}
\left\| \wh{\vec v}_{d+1} - \vec v_{d+1} \right\|_2^2
&=
\left\| \Sigma'^{1/2} \Sigma'^{-1/2} \left( \wh{\vec v}_{d+1} - \vec v_{d+1} \right) \right\|_2^2
\\ &\leq
\left\| \Sigma'^{1/2} \right\|_2^2 \cdot \left\| \Sigma'^{-1/2} \left( \wh{\vec v}_{d+1} - \vec v_{d+1} \right) \right\|_2^2
\\ &\leq
O(\eta^2) \cdot \lambda_\text{max}^2(\Sigma')
\, ,
\end{align*}
and since the first $d$ coordinates of $\wh{\vec v}_{d+1} - \vec v_{d+1} \in\reals^{d+1}$ are the vector $\wh{\Sigma\vec\beta} - \Sigma\vec\beta \in\reals^d$ (see the structure of \Cref{eq:block_matrix_again}), we have that $\left\| \wh{\Sigma\vec\beta} - \Sigma\vec\beta \right\|_2^2 \leq \left\| \wh{\vec v}_{d+1} - \vec v_{d+1} \right\|_2^2 \leq O(\eta^2) \cdot \lambda_\text{max}^2(\Sigma')$.

\Cref{claim:last_column_norm} follows, since
\[
\left\| \Sigma^{-1/2} \left( \wh{\Sigma\vec\beta} - \Sigma\vec\beta \right) \right\|_2^2
\leq
\left\| \Sigma^{-1/2} \right\|_2^2
\cdot
\left\| \wh{\Sigma\vec\beta} - \Sigma\vec\beta \right\|_2^2
\leq
\left\| \wh{\Sigma\vec\beta} - \Sigma\vec\beta \right\|_2^2
\leq
O(\eta^2) \cdot \lambda_\text{max}^2(\Sigma')
\, ,
\]
by the sub-multiplicative property of the norm and because $\mathbb{I}_d \preceq \Sigma$.

The proof has now been completed. Of course, the condition number $\kappa(\Sigma')$ is an interesting quantity that merits consideration to examine what it depends upon.
From Theorem 1 of \cite{loose_bounds_block_matrix}, one may deduce the following upper bound on the largest eigenvalue of $\Sigma'$:
\begin{align}
\label{eq:lambda_max_block_matrix}
\lambda_\text{max}(\Sigma')
\leq
2\left( \vec\beta^T \Sigma \vec\beta + \text{max}\left( \kappa, \sigma_\eps^2 \right) \right)
\, ,
\end{align}
where we remind to the reader that $\kappa = \lambda_\text{max}(\Sigma)$.

The lower bound on the smallest eigenvalue provided by the above work \citep{loose_bounds_block_matrix} is non-optimal, since it may be negative at certain cases, while the matrix itself only ever exhibits non-negative eigenvalues (since it is PSD, by definition of being a covariance matrix). A better bound may be deduced by block matrix eigenvalue approaches \citep{lower_upper_bounds_block_matrix}, as follows:
\begin{align}
\lambda_\text{min}(\Sigma')
&\geq
\frac{\sigma_\eps^2 + \vec\beta^T \Sigma \vec\beta + \lambda_\text{min}(\Sigma)}{2}
- \sqrt{
\left( \frac{\sigma_\eps^2 + \vec\beta^T \Sigma \vec\beta + \lambda_\text{min}(\Sigma)}{2} \right)^2
- \sigma_\eps^2 \lambda_\text{min}(\Sigma)
}
\nonumber \\ &=
\frac
{ \sigma_\eps^2 \lambda_\text{min}(\Sigma) }
{ \frac{1}{2} \left(
\sigma_\eps^2 + \vec\beta^T \Sigma \vec\beta + \lambda_\text{min}(\Sigma)
+ \sqrt{
\left( \sigma_\eps^2 + \vec\beta^T \Sigma \vec\beta + \lambda_\text{min}(\Sigma) \right)^2
- 4 \sigma_\eps^2 \lambda_\text{min}(\Sigma)
}
\right) }
\nonumber \\ &\geq
\frac
{ \sigma_\eps^2 \lambda_\text{min}(\Sigma) }
{ \sigma_\eps^2 + \vec\beta^T \Sigma \vec\beta + \lambda_\text{min}(\Sigma) }
\label{eq:lambda_min_block_matrix}
\, .
\end{align}

The first form of the lower bound given above is tight, as we will now prove by examining the specific case of $\Sigma = \kappa \mathbb{I}_d$. In this case, one would have that:
\[
\Sigma' =
\begin{bmatrix}
\kappa \mathbb{I}_d & \kappa\vec\beta \\
\kappa \vec\beta^T & \sigma_\eps^2 + \kappa \|\vec\beta\|_2^2
\end{bmatrix} = \kappa
\begin{bmatrix}
\mathbb{I}_d & \vec\beta \\
\vec\beta^T & \frac{\sigma_\eps^2}{\kappa} + \|\vec\beta\|_2^2
\end{bmatrix}
\, ,
\]
reducing our calculations to the simple case of covariance matrix $\mathbb{I}_d$, for which the second matrix written in the above equation has eigenvalues $t$ according to the roots of the equation
\begin{align*}
&(1-t)^d \left( \frac{\sigma_\eps^2}{\kappa} + \|\vec\beta\|_2^2 - t \right)
- \sum_{i=1}^d \beta_i^2 (1-t)^{d-1} = 0
\\ \Leftrightarrow \ \ 
&(1-t)^{d-1}
\left(
t^2
- \left( 1 + \|\vec\beta\|_2^2 + \frac{\sigma_\eps^2}{\kappa} \right) t
+ \frac{\sigma_\eps^2}{\kappa} \right)
= 0
\, ,
\end{align*}
where $\beta_i$ is the $i$-th coordinate of the vector $\vec\beta$. Therefore $\Sigma'$ has the following smallest eigenvalue (the smallest of the two roots of the quadratic equation, which is guaranteed to be $\leq 1$, i.e., smaller than the other eigenvalues):
\begin{align*}
\lambda_\text{min}(\Sigma')
&= \frac{\kappa}{2}
\left(
1 + \|\vec\beta\|_2^2 + \frac{\sigma_\eps^2}{\kappa}
- \sqrt{
\left( 1 + \|\vec\beta\|_2^2 + \frac{\sigma_\eps^2}{\kappa} \right)^2
- 4 \frac{\sigma_\eps^2}{\kappa}
}
\right)
\\ &=
\frac{\sigma_\eps^2 + \kappa \|\vec\beta\|_2^2 + \kappa}{2}
- \sqrt{
\left( \frac{\sigma_\eps^2 + \kappa \|\vec\beta\|_2^2 + \kappa}{2} \right)^2
- \kappa \sigma_\eps^2
}
\, ,
\end{align*}
which neatly matches the first form of the lower bound given in \Cref{eq:lambda_min_block_matrix}, since $\Sigma = \kappa \mathbb{I}_d$.

Therefore, the condition number $\kappa(\Sigma')$ (which is at most the ratio of the right-hand-sides of \Cref{eq:lambda_max_block_matrix} to \Cref{eq:lambda_min_block_matrix}) and the largest eigenvalue $\lambda_\text{max}(\Sigma')$ depend on both $\vec\beta$ and $\sigma_\eps^2$ besides the usual dependence on the smallest and largest eigenvalues of the feature vector covariance matrix $\Sigma$, i.e., $\lambda_\text{max}(\Sigma) \leq \kappa$ and $\lambda_\text{min}(\Sigma) \geq 1$ respectively. We note, in particular, that this means that, when $\| \vec\beta \|_2$ is large, more samples will be necessary to achieve a fixed additive accuracy, as indicated by \Cref{eq:standard_lin_regression_accuracy_guarantee}.

%%%%%%%%%%%%%%%%%%%%%%%%%%%%%%%%%%%%%%%%

% \vfill

\end{document}